
\documentclass[journal,onecolumn,12pt]{IEEEtran}%
\usepackage{amssymb}
\usepackage{amsfonts}
\usepackage{amsmath}
\usepackage{epic}
\usepackage{geometry}
\usepackage{doublespace}%
\setcounter{MaxMatrixCols}{30}%
\usepackage{graphicx}
\hyphenation{op-tical net-works semi-conduc-tor}
\newtheorem{theorem}{Theorem}

\newtheorem{corollary}{Corollary}

\newtheorem{definition}{Definition}
\newtheorem{example}{Example}

\newtheorem{lemma}{Lemma}

\newtheorem{proposition}{Proposition}
\newtheorem{remark}{Remark}

\geometry{margin=0.7in,top=0.65in,bottom=0.65in}
\begin{document}

\title{{\LARGE Optimization of Lyapunov Invariants\vspace*{-0.35in} in Verification
of Software Systems}\vspace*{-0.2in}}
\author{{\small Mardavij Roozbehani,~\IEEEmembership{{\small Member,~IEEE}}, Alexandre
Megretski ,~\IEEEmembership{{\small Member,~IEEE}}, and Eric Feron}%
~\IEEEmembership{{\small Member,~IEEE}}\vspace*{-0.75in} \thanks{Mardavij
Roozbehani and Alexandre Megretski are with the Laboratory for Information and
Decision Systems (LIDS), Massachusetts Institute of Technology, Cambridge, MA.
E-mails: \{mardavij,ameg\}@mit.edu. Eric Feron is with the School of Aerospace
Engineering, Georgia Institute of Technology, Atlanta, GA. E-mail:
feron@gatech.edu.\vspace*{-0.07in}}\vspace*{-0.2in}}
\maketitle

\begin{abstract}
\vspace*{-0.1in}The paper proposes a control-theoretic framework for
verification of numerical software systems, and puts forward software
verification as an important application of control and systems theory. The
idea is to transfer Lyapunov functions and the associated computational
techniques from control systems analysis and convex optimization to
verification of various software safety and performance specifications. These
include but are not limited to absence of overflow, absence of
division-by-zero, termination in finite time, presence of dead-code, and
certain user-specified assertions. Central to this framework are Lyapunov
invariants. These are properly constructed functions of the program variables,
and satisfy certain properties\textbf{---}resembling those of Lyapunov
functions\textbf{---}along the execution trace. The search for the invariants
can be formulated as a convex optimization problem. If the associated
optimization problem is feasible, the result is a certificate for the
specification.\vspace*{-0.15in}

\end{abstract}

\markboth{Extended Version, August 2011}{Shell \MakeLowercase{\textit{et al.}}: Bare Demo of
IEEEtran.cls for Journals}

\begin{keywords}
\vspace*{-0.1in}Software Verification, Lyapunov Invariants, Convex
Optimization.\vspace*{-0.2in}
\end{keywords}

\section{Introduction\vspace*{-0.04in}}

\PARstart{S}{oftware} in safety-critical systems implement complex algorithms
and feedback laws that control the interaction of physical devices with their
environments. Examples of such systems are abundant in aerospace, automotive,
and medical applications. The range of theoretical and practical issues that
arise in analysis, design, and implementation of safety-critical software
systems is extensive, see, e.g., \cite{Kopetz2001}, \cite{Murthy2001} , and
\cite{Heck2003}. While safety-critical software must satisfy various resource
allocation, timing, scheduling, and fault tolerance constraints, the foremost
requirement is that it must be free of run-time errors.\vspace*{-0.15in}

\subsection{Overview of Existing Methods\vspace*{-0.1in}}

\subsubsection{Formal Methods}

Formal verification methods are model-based techniques \cite{Pel01},
\cite{Nielson}, \cite{MitraThesis} for proving or disproving that a
mathematical model of a software (or hardware) satisfies a given
\textit{specification,} i.e., a mathematical expression of a desired behavior.
The approach adopted in this paper too, falls under this category. Herein, we
briefly review \textit{model checking} and \textit{abstract}
\textit{interpretation}.

\paragraph{Model Checking}

In \textit{model checking} \cite{ClarkBook} the system is modeled as a finite
state transition system and the specifications are expressed in some form of
logic formulae, e.g., temporal or propositional logic. The verification
problem then reduces to a graph search, and symbolic algorithms are used to
perform an exhaustive exploration of all possible states. Model checking has
proven to be a powerful technique for verification of circuits
\cite{Clarkware}, security and communication protocols \cite{Marrero},
\cite{Naumovich} and stochastic processes \cite{Baier}. Nevertheless, when the
program has non-integer variables, or when the state space is continuous,
model checking is not directly applicable. In such cases, combinations of
various abstraction techniques and model checking have been proposed
\cite{Alur2002, Dams1996, Tiwari2002}; scalability, however, remains a challenge.

\paragraph{Abstract Interpretation}

is a theory for formal approximation of the \textit{operational semantics} of
computer programs in a systematic way \cite{Cousot1977}. Construction of
abstract models involves abstraction of domains\textbf{---}typically in the
form of a combination of sign, interval, polyhedral, and congruence
abstractions of sets of data\textbf{---}and functions. A system of fixed-point
equations is then generated by symbolic forward/backward executions of the
abstract model. An iterative equation solving procedure, e.g., Newton's
method, is used for solving the nonlinear system of equations, the solution of
which results in an inductive invariant assertion, which is then used for
checking the specifications. In practice, to guarantee finite convergence of
the iterates, narrowing (outer approximation) operators are constructed to
estimate the solution, followed by widening (inner approximation) to improve
the estimate \cite{Cousot2001}. This compromise can be a source of
conservatism in analysis \cite{Adge11}. Nevertheless, these methods have been
used in practice for verification of limited properties of embedded software
of commercial aircraft \cite{ASTREE}.

Alternative formal methods can be found in the computer science literature
mostly under \textit{deductive verification }\cite{Manna1995}, \textit{type
inference }\cite{Pierece}\textit{,} and \textit{data flow analysis
}\cite{Heckt1977}. These methods share extensive similarities in that a notion
of program abstraction and symbolic execution or constraint propagation is
present in all of them. Further details and discussions of the methodologies
can be found in \cite{Cousot2001}, and \cite{Nielson}.\vspace*{-0.02in}

\subsubsection{System Theoretic Methods\vspace*{-0.02in}}

While software analysis has been\vspace*{-0.02in} the subject of an extensive
body of research in computer science, treatment of the topic in the control
\vspace*{-0.02in}systems community has been less systematic. The
relevant\vspace*{-0.02in} results in the systems and control literature can be
found in the field of hybrid systems\vspace*{-0.02in} \cite{Branicky}. Much of
the available techniques for safety verification of hybrid\vspace*{-0.02in}
systems are explicitly or implicitly based on computation of the reachable
sets, either exactly\vspace*{-0.02in} or approximately. These include but are
not limited to techniques based on quantifier \vspace*{-0.02in}elimination
\cite{Lafferriere2001}, ellipsoidal calculus \cite{Krzhanski1996}, and
mathematical programming\vspace*{-0.02in} \cite{Bemporad2000}. Alternative
approaches aim at establishing properties of hybrid systems \vspace
*{-0.02in}through barrier certificates \cite{Prajna2005}, numerical
computation\vspace*{-0.02in} of Lyapunov functions \cite{Branicky1998,
Johansson1998}, or by combined use of bisimulation mechanisms and Lyapunov
techniques \cite{GirardPappsVerification, Laferriere1999, Tiwari2002,
Alur2002}.\vspace*{-0.02in}

Inspired by the concept\vspace*{-0.02in} of Lyapunov functions in stability
analysis of nonlinear dynamical systems \cite{Khalil}, in this paper we
propose Lyapunov invariants for analysis of computer programs.\vspace
*{-0.02in} While Lyapunov functions and similar concepts have been used in
verification of stability or \vspace*{-0.01in}temporal properties of system
level descriptions of hybrid systems \cite{Prajna2007}, \cite{Branicky1998},
\cite{Johansson1998}, to the \vspace*{-0.02in}best of our knowledge, this
paper is the first to present a systematic\vspace*{-0.02in} framework based on
Lyapunov invariance and convex optimization for verification of a broad range
of code-level specifications for computer programs\footnote{This paper
constitutes the synthesis and extension of ideas and computational techniques
expressed in the workshop \cite{FeronWorkshop}, and subsequent conference
papers \cite{RMF-ACC05}, \cite{RFM-HSCC05} and \cite{RoozMegFer06}. Some of
the ideas presented in \cite{FeronWorkshop}, \cite{RMF-ACC05}, and
\cite{RFM-HSCC05} were independently reported in \cite{FeronGiveAway}.
\vspace*{-0.18in}}.\vspace*{-0.02in} Accordingly, it is in the systematic
integration of new ideas and some well-known\vspace*{-0.02in} tools within a
unified software analysis framework that we see the\vspace*{-0.02in} main
contribution of our work, and not in carrying through the proofs\vspace
*{-0.01in} of the underlying theorems and propositions. The introduction and
development of such framework\vspace*{-0.02in} provides an opportunity for the
field of \textit{control} to systematically address\vspace*{-0.02in} a problem
of great practical significance and interest to both computer science and
engineering communities. The framework can be summarized as follows:\vspace
*{-0.02in}

\begin{enumerate}
\item {\normalsize Dynamical system interpretation\vspace*{-0.02in} and
modeling (Section \ref{Modeling}). We introduce generic dynamical system
representations of programs, along with specific modeling languages\vspace
*{-0.02in} which include Mixed-Integer Linear Models (MILM), Graph Models, and
MIL-over-Graph Hybrid Models (MIL-GHM).}\vspace*{-0.02in}

\item {\normalsize Lyapunov invariants as behavior\vspace*{-0.01in}
certificates for computer programs (Section \ref{Chapter:LyapunovInvs}).
Analogous to a Lyapunov function, a Lyapunov invariant is a real-valued
function\vspace*{-0.01in} of the program variables, and satisfies a
\textit{difference inequality}\ along the trace of the program. It is shown
that such functions can be formulated for verification of various
specifications.\vspace*{-0.06in}}

\item {\normalsize A computational procedure for\vspace*{-0.01in} finding the
Lyapunov invariants (Section \ref{Chapter:Computation}). The procedure is
standard and constitutes\ these steps: (i) Restricting the search\vspace
*{-0.01in} space to a linear subspace. (ii) Using convex relaxation techniques
to formulate the search problem as a convex optimization problem, e.g., a
Linear\vspace*{-0.01in} Program (LP) \cite{Bertsimes1997}, Semidefinite
Program (SDP) \cite{Boyd1994, VB}, or a Sum-of-Squares (SOS) program
\cite{Parrilo2001}. (iii) Using convex optimization software for numerical
computation of the certificates.}$\vspace*{-0.16in}$
\end{enumerate}

\section{Dynamical System Interpretation and Modeling of Computer Programs
\label{Modeling}$\vspace*{-0.11in}$}

We interpret computer programs as discrete-time dynamical systems and
introduce generic models that formalize this interpretation. We then introduce
MILMs, Graph Models, and MIL-GHMs as structured cases of the generic models.
The specific modeling languages are used for computational purposes.$\vspace
*{-0.26in}$

\subsection{Generic Models\label{Sec:GenRep}$\vspace*{-0.11in}$}

\subsubsection{Concrete Representation of Computer Programs}

We will consider generic models defined by a finite state space set $X$ with
selected subsets $X_{0}\subseteq X$ of initial states, and $X_{\infty}\subset
X$ of terminal states, and by a set-valued state transition function
$f:X\mapsto2^{X}$, such that $f(x)\subseteq X_{\infty},\forall x\in X_{\infty
}.$ We denote such dynamical systems by $\mathcal{S}(X,f,X_{0},X_{\infty}).$

\begin{definition}
\label{ConcreteRepDef}The dynamical system $\mathcal{S}(X,f,X_{0},X_{\infty})$
is a $\mathcal{C}$-representation of a computer program $\mathcal{P},$ if the
set of all sequences that can be generated by $\mathcal{P}$ is equal to the
set of all sequences $\mathcal{X}=(x(0),x(1),\dots,x(t),\dots)$ of elements
from $X,$ satisfying\vspace*{-0.25in}%
\begin{equation}
x\left(  0\right)  \in X_{0}\subseteq X,\qquad x\left(  t+1\right)  \in
f\left(  x\left(  t\right)  \right)  \text{\qquad}\forall t\in\mathbb{Z}%
_{+}\vspace*{-0.22in}\label{Softa1}%
\end{equation}
The uncertainty in $x(0)$ allows for dependence of the program on different
initial conditions, and the uncertainty in $f$ models\ dependence on
parameters, as well as the ability to respond to real-time inputs.
\end{definition}

\begin{example}
\label{IntegerDiv-Ex}\textbf{Integer Division }(adopted from \cite{Pel01}%
):\ The functionality\vspace*{-0.01in} of Program 1 is to compute the result
of the integer division of $\mathrm{dd}$ (dividend) by $\mathrm{dr}$
(divisor).\vspace*{-0.01in} A $\mathcal{C}$-representation of the program is
displayed alongside. Note that if $\mathrm{dd}\geq0,$ and $\mathrm{dr}\leq0,$
then the\vspace*{-0.01in} program never exits the \textquotedblleft
while\textquotedblright\ loop and the value of $\mathrm{q}$ keeps
increasing\vspace*{-0.01in}, eventually leading to either an overflow or an
erroneous answer. The program terminates\ if $\mathrm{dd}$ and $\mathrm{dr}$
are positive.
\end{example}

\vspace*{-0.55in}

{\small
\begin{gather*}%
\begin{array}
[c]{cl}%
\begin{tabular}
[c]{|l|}\hline
$%
\begin{array}
[c]{l}%
\mathrm{int~IntegerDivision~(~int~dd,int~dr~)}\\
\vspace*{-0.46in}\\
\mathrm{\{int~q=\{0\};~int~r=\{dd\};}\\
\vspace*{-0.46in}\\
\mathrm{{while}\text{ }{(r>=dr)}}\\
\vspace*{-0.46in}\\
\mathrm{{{{\{\hspace{0.13in}q=q+1;}}}}\\
\vspace*{-0.46in}\\
\mathrm{{{\hspace{0.24in}r=r-dr;\}}}}\\
\vspace*{-0.46in}\\
\mathrm{return~r;\}}%
\end{array}
$\\\hline
\end{tabular}
& \hspace*{-0.05in}%
\begin{tabular}
[c]{|l|}\hline
$%
\begin{array}
[c]{l}%
\underline{\mathbb{Z}}=\mathbb{Z\cap}\left[  -32768,32767\right] \\
\vspace*{-0.46in}\\
X=\underline{\mathbb{Z}}^{4}\\
\vspace*{-0.46in}\\
X_{0}=\left\{  (\mathrm{dd},\mathrm{dr},\mathrm{q},\mathrm{r})\in
X~|~\mathrm{q}=0,\text{ }\mathrm{r}=\mathrm{dd}\right\} \\
\vspace*{-0.46in}\\
X_{\infty}=\left\{  (\mathrm{dd},\mathrm{dr},\mathrm{q},\mathrm{r})\in
X~|~\mathrm{r}<\mathrm{dr}\right\} \\
\vspace*{-0.35in}\\
f:(\mathrm{dd},\mathrm{dr},\mathrm{q},\mathrm{r})\mapsto\left\{
\begin{array}
[c]{l}%
\vspace*{-0.44in}\\
(\mathrm{dd},\mathrm{dr},\mathrm{q}+\mathrm{1},\mathrm{r}-\mathrm{dr}),\\
\vspace*{-0.44in}\\
(\mathrm{dd},\mathrm{dr},\mathrm{q},\mathrm{r}),
\end{array}
\right.
\begin{array}
[c]{l}%
\vspace*{-0.44in}\\
(\mathrm{dd},\mathrm{dr},\mathrm{q},\mathrm{r})\in X\backslash X_{\infty}\\
\vspace*{-0.44in}\\
(\mathrm{dd},\mathrm{dr},\mathrm{q},\mathrm{r})\in X_{\infty}%
\end{array}
\end{array}
$\\\hline
\end{tabular}
\end{array}
\\
\text{{Program 1: The Integer Division Program (left) and its Dynamical System
Model (right)\vspace*{-0.2in}}}%
\end{gather*}
}

{\vspace*{-0.62in}}

\subsubsection{Abstract Representation of Computer Programs\vspace*{-0.01in}}

In a $\mathcal{C}$-representation, the elements of the state space $X$ belong
to a finite\vspace*{-0.01in} subset of the set of rational numbers that can be
represented by a fixed number of bits in a specific arithmetic\vspace
*{-0.01in} framework, e.g., fixed-point or floating-point arithmetic. When the
elements of $X$ are\vspace*{-0.01in} non-integers, due to the quantization
effects, the set-valued map $f$ often defines very complicated
dependencies\vspace*{-0.01in} between the elements of $X,$ even for simple
programs involving only elementary arithmetic operations. An abstract\vspace
*{-0.01in} model over-approximates the behavior set in the interest of
tractability. The drawbacks are conservatism\vspace*{-0.01in} of the analysis
and (potentially) undecidability. Nevertheless,\vspace*{-0.01in} abstractions
in the form of formal over-approximations make it possible to formulate
computationally tractable,\vspace*{-0.01in} sufficient conditions for a
verification problem that would otherwise be intractable.

\begin{definition}
\label{Def:abst}Given a program $\mathcal{P}$ and its $\mathcal{C}%
$-representation $\mathcal{S}(X,f,X_{0},X_{\infty})$, we say that
$\overline{\mathcal{S}}(\overline{X},\overline{f},\overline{X}_{0}%
,\overline{X}_{\infty})$ is an $\mathcal{A}$-representation, i.e., an
\emph{abstraction} of $\mathcal{P}$, if $X\subseteq\overline{X}$,
$X_{0}\subseteq\overline{X}_{0}$, and $f(x)\subseteq\overline{f}(x)$\ for all
$x\in X,$ and the following condition holds:\vspace*{-0.07in}%
\begin{equation}
\overline{X}_{\infty}\cap X\subseteq X_{\infty}.\vspace*{-0.07in}%
\label{terminalabstract}%
\end{equation}

\end{definition}

Thus, every trajectory of the actual program is also a trajectory of the
abstract model. The definition of $\overline{X}_{\infty}$ is slightly more
subtle. For proving Finite-Time Termination (FTT), we need to be able to infer
that if all the trajectories of $\overline{\mathcal{S}}$ eventually enter
$\overline{X}_{\infty},$ then all trajectories of $\mathcal{S}$ will
eventually enter $X_{\infty}.$ It is tempting to require that $\overline
{X}_{\infty}\subseteq X_{\infty}$, however, this may not be possible as
$X_{\infty}$ is often a discrete set, while $\overline{X}_{\infty}$ is dense
in the domain of real numbers. The definition of $\overline{X}_{\infty}$ as in
(\ref{terminalabstract}) resolves this issue.

Construction of $\overline{\mathcal{S}}(\overline{X},\overline{f},\overline
{X}_{0},\overline{X}_{\infty})$ from $\mathcal{S}(X,f,X_{0},X_{\infty})$
involves abstraction of each of the elements $X,~f,~X_{0},~X_{\infty}$ in a
way that is consistent with Definition \ref{Def:abst}. Abstraction of the
state space $X$ often involves replacing the domain of \textit{floats} or
integers or a combination of these by the domain of real numbers. Abstraction
of $X_{0}$ or $X_{\infty}$ often involves a combination of domain abstractions
and abstraction of functions that define these sets. Semialgebraic set-valued
abstractions of some commonly used nonlinearities is presented in Appendix I. Also, abstractions of fixed-point and floating point arithmetic operations based on ideas from \cite{MineThesis} and \cite{MinePaper} are discussed in Appendix I. A case study in application of these methods to analysis of a program with floating-point operations is presented in Section \ref{sec:casestudy2}. \vspace*{-0.25in}

\subsection{Specific Models of Computer Programs\label{Section:SpecModels}%
\vspace*{-0.05in}}

Specific modeling languages are particularly useful for automating the proof
process in a computational framework. Here, three specific modeling languages
are proposed: \textit{Mixed-Integer Linear Models (MILM),} \textit{Graph
Models}, and \textit{Mixed-Integer Linear over Graph Hybrid Models (MIL-GHM).}

\subsubsection{Mixed-Integer Linear Model (MILM)}

Proposing MILMs for software modeling and analysis is motivated by the
observation that by imposing linear equality constraints on boolean and
continuous variables over a quasi-hypercube, one can obtain a relatively
compact representation of arbitrary piecewise affine functions defined over
compact polytopic subsets of Euclidean spaces (Proposition \ref{MILM-prop}).
The earliest reference to the statement of universality of MILMs appears to be
\cite{nem}, in which a constructive proof is given for the one-dimensional
case. A constructive proof for the general case is given in Appendix II.
\vspace*{-0.05in}

\begin{proposition}
\label{MILM-prop}\textbf{Universality of Mixed-Integer Linear Models.} Let
$f:X\mapsto\mathbb{R}^{n}$ be a piecewise affine map with a closed graph,
defined on a compact state space $X\subseteq\left[  -1,1\right]  ^{n},$
consisting of a finite union of compact polytopes. That is:\vspace*{-0.28in}%
\[
f\left(  x\right)  \in2A_{i}x+2B_{i}\qquad\text{subject to\qquad}x\in
X_{i},\text{\ }i\in\mathbb{Z}\left(  1,N\right)  \vspace*{-0.15in}%
\]
where, each $X_{i}$ is a compact polytopic set. Then, $f$ can be specified
precisely, by imposing linear equality constraints on a finite number of
binary and continuous variables ranging over compact intervals. Specifically,
there exist matrices $F$ and $H,$ such that the following two sets are
equal:\vspace*{-0.25in}%
\begin{align*}
G_{1}  & =\left\{  \left(  x,f\left(  x\right)  \right)  ~|~x\in X\right\}
\vspace*{-0.1in}\\
G_{2}  & =\{\left(  x,y\right)  ~|~F[%
\begin{array}
[c]{cccc}%
\hspace*{-0.04in}\vspace*{0.04in}x\hspace*{-0.02in} & \hspace*{-0.02in}%
w\hspace*{-0.02in} & \hspace*{-0.02in}v\hspace*{-0.02in} & \hspace
*{-0.02in}1\hspace*{-0.04in}%
\end{array}
]^{^{T}}=y,\text{ }H[%
\begin{array}
[c]{cccc}%
\hspace*{-0.04in}\vspace*{0.04in}x\hspace*{-0.02in} & \hspace*{-0.02in}%
w\hspace*{-0.02in} & \hspace*{-0.02in}v\hspace*{-0.02in} & \hspace
*{-0.02in}1\hspace*{-0.04in}%
\end{array}
]^{^{T}}=0,\text{ }\left(  w,v\right)  \in\left[  -1,1\right]  ^{n_{w}}%
\times\left\{  -1,1\right\}  ^{n_{v}}\}
\end{align*}
\vspace*{-0.5in}
\end{proposition}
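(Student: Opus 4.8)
The plan is to give a constructive reduction that builds the pair $(F,H)$ from three elementary ``gadgets,'' each expressible by linear \emph{equalities} on variables confined to $[-1,1]$ or $\{-1,1\}$: (i) converting a scalar affine inequality $c^{T}x\le d$, known to take values in a bounded interval on the domain, into an equality $c^{T}x-d+s=0$ with a slack $s$ that, after rescaling by an a~priori bound, ranges over $[-1,1]$; (ii) encoding ``piece $i$ is selected'' by a binary variable and propagating all of $X_{i}$'s defining inequalities through \emph{one-sided} big-$M$ constraints of type (i); and (iii) linearizing each product of a selection binary with a component of $x$ through the standard McCormick equalities, again with box-valued slacks. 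Assembling these, $y$ is written as the affine image dictated by the active piece. Compactness of $X$ (hence of $[-1,1]^{n}$) is exactly what guarantees that every auxiliary variable can be rescaled into a compact box, and this is the only place the hypotheses on $X$ are essential.

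Concretely, write $X=\bigcup_{i=1}^{N}X_{i}$ with $X_{i}=\{x:\ c_{ij}^{T}x\le d_{ij},\ j=1,\dots,m_{i}\}$. Introduce binaries $v_{i}\in\{-1,1\}$ and set $\delta_{i}:=(1+v_{i})/2\in\{0,1\}$; the linear equality $\sum_{i}v_{i}=2-N$ forces $\sum_{i}\delta_{i}=1$, i.e.\ exactly one $\delta_{i^{\ast}}=1$. For every $i,j$ impose $c_{ij}^{T}x-d_{ij}-\nu_{ij}(1-\delta_{i})+s_{ij}=0$ with $s_{ij}\ge0$, where $\nu_{ij}:=\max_{x\in[-1,1]^{n}}(c_{ij}^{T}x-d_{ij})$; after rescaling $s_{ij}$ by the finite bound $\nu_{ij}+\mu_{ij}$ (with $\mu_{ij}:=\max_{x\in[-1,1]^{n}}(d_{ij}-c_{ij}^{T}x)$) this slack lies in $[-1,1]$. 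When $\delta_{i}=1$ the equality reads $c_{ij}^{T}x\le d_{ij}$, so $\delta_{i^{\ast}}=1$ certifies $x\in X_{i^{\ast}}\subseteq X$; when $\delta_{i}=0$ it is vacuous. For each $i$ introduce $p_{i}\in[-1,1]^{n}$ together with the four McCormick equalities per component (each an equality in a box-valued slack) enforcing $p_{i}=\delta_{i}x$; these simultaneously force $p_{i^{\ast}}=x$ — whence $x\in[-1,1]^{n}$, since $p_{i^{\ast}}$ lives in the box — and $p_{i}=0$ for $i\ne i^{\ast}$. Finally set $y=\sum_{i}\bigl(2A_{i}p_{i}+B_{i}+B_{i}v_{i}\bigr)$, which is linear in the stacked vector $(x,w,1)$ and $v$ once $w$ collects all the $p_{i}$ and all the slacks. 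The $y$-equation is the block $F$; the equalities $\sum_{i}v_{i}=2-N$, the membership equalities, and the McCormick equalities constitute the block $H$.

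With the construction in place, $G_{1}=G_{2}$ follows from two inclusions. For $G_{1}\subseteq G_{2}$: given $x\in X_{i}$ with $y=2A_{i}x+2B_{i}$, put $\delta_{i}=1$ and all other $\delta$'s to $0$, $p_{i}=x$ and all other $p$'s to $0$, and solve each equality for its slack; the a~priori bounds were chosen precisely so that the slacks land in $[-1,1]$, and the $y$-equation returns $2A_{i}x+2B_{i}$. For $G_{2}\subseteq G_{1}$: from $\sum_{i}v_{i}=2-N$ extract the unique active index $i^{\ast}$; the membership equalities give $x\in X_{i^{\ast}}\subseteq X$, the McCormick equalities give $p_{i^{\ast}}=x$ and $p_{i}=0$ otherwise, and the $y$-equation collapses to $y=2A_{i^{\ast}}x+2B_{i^{\ast}}$, so $(x,y)\in G_{1}$. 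Boundary overlaps cause no trouble: if $x$ lies in several of the $X_{i}$ the corresponding affine pieces agree there, so the value produced is independent of which $i^{\ast}$ is chosen.

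The step I expect to be the real obstacle — everything else being bookkeeping — is gadget (i) and the verification underlying the first inclusion: for \emph{every} auxiliary variable (the $s_{ij}$, the McCormick slacks, the $p_{i}$) one must exhibit an explicit finite bound valid on the whole domain and then rescale it into $[-1,1]$. This is where the hypothesis that $X$ is a compact subset of $[-1,1]^{n}$ does all the work, and it is also why the \emph{one-sided} (rather than two-sided) big-$M$ encoding of polytope membership is used: with two-sided encoding, a point in an overlap would force two selection binaries to be $1$ and contradict $\sum_{i}\delta_{i}=1$. The resulting $n_{v}=N$ is not optimized — a logarithmic encoding of the selector would reduce it — but this is immaterial to the universality claim.
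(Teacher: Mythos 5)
Your construction is correct and is essentially the proof the paper gives: the same one-hot selector $\sum_i v_i = 2-N$, the same big-$M$/slack conversion of each polytope inequality into an equality with a slack rescaled into $[-1,1]$ via bounds that exist by compactness (the paper computes them as LPs over each $X_i$), the same McCormick-style linearization of the binary-times-continuous products (the paper's $u_i = xv_i$ with auxiliary $\underline{z}_i,\overline{z}_i$ is your $p_i=\delta_i x$ up to the $\{-1,1\}$ versus $\{0,1\}$ convention), and the same assembly $y=\sum_i(1+v_i)(A_ix+B_i)$. No substantive difference.
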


Mixed Logical Dynamical Systems (MLDS) with similar structure were considered
in \cite{Bemporad Morari} for analysis of a class of hybrid systems. The main
contribution here is in the application of the model to software analysis. A
MIL model of a computer program is defined via the following
elements:{\small \vspace*{-0.00in}}

\begin{enumerate}
\item The state space $X\subset\left[  -1,1\right]  ^{n}$.\vspace*{-0.00in}

\item Letting $n_{e}=n+n_{w}+n_{v}+1,$ the state transition function
$f:X\mapsto2^{X}$\ is defined by two matrices $F,$~and $H$\ of dimensions
$n$-by-$n_{e}$\ and $n_{H}$-by-$n_{e}$\ respectively, according to:\vspace
*{-0.0in}
\begin{equation}
f(x)\in\left\{  F[%
\begin{array}
[c]{cccc}%
\hspace*{-0.04in}\vspace*{0.04in}x\hspace*{-0.02in} & \hspace*{-0.02in}%
w\hspace*{-0.02in} & \hspace*{-0.02in}v\hspace*{-0.02in} & \hspace
*{-0.02in}1\hspace*{-0.04in}%
\end{array}
]^{^{T}}~|~~H[%
\begin{array}
[c]{cccc}%
\hspace*{-0.04in}\vspace*{0.04in}x\hspace*{-0.02in} & \hspace*{-0.02in}%
w\hspace*{-0.02in} & \hspace*{-0.02in}v\hspace*{-0.02in} & \hspace
*{-0.02in}1\hspace*{-0.04in}%
\end{array}
]^{^{T}}=0,\text{ }\left(  w,v\right)  \in\left[  -1,1\right]  ^{n_{w}}%
\times\left\{  -1,1\right\}  ^{n_{v}}\right\}  .\vspace*{-0.05in}\label{MILM1}%
\end{equation}

\item The set of initial conditions is defined via either of the
following:\vspace*{-0.0in}

\begin{enumerate}
\item If $X_{0}$\ is finite with a small cardinality, then it can be
conveniently specified by extension. We will see in Section
\ref{Chapter:Computation} that per each element of $X_{0},$\ one constraint
needs to be included in the set of constraints of the optimization problem
associated with the verification task.

\item If $X_{0}$\ is not finite, or $\left\vert X_{0}\right\vert $\ is too
large, an abstraction of $X_{0}$\ can be specified by a matrix $H_{0}\in
R^{n_{H_{0}}\times n_{e}}$\ which defines a union of compact polytopes in the
following way:\vspace*{-0.05in}%
\begin{equation}
X_{0}=\{x\in X~|~H_{0}[%
\begin{array}
[c]{cccc}%
\hspace*{-0.04in}\vspace*{0.04in}x\hspace*{-0.02in} & \hspace*{-0.02in}%
w\hspace*{-0.02in} & \hspace*{-0.02in}v\hspace*{-0.02in} & \hspace
*{-0.02in}1\hspace*{-0.04in}%
\end{array}
]^{^{T}}=0,~\left(  w,v\right)  \in\left[  -1,1\right]  ^{n_{w}}\times\left\{
-1,1\right\}  ^{n_{v}}\}.\vspace*{-0.1in}\label{MILM2}%
\end{equation}

\end{enumerate}

\item The set of terminal states $X_{\infty}$\ is defined by\vspace*{-0.1in}%
\begin{equation}
X_{\infty}=\{x\in X~|~H[%
\begin{array}
[c]{cccc}%
\hspace*{-0.04in}\vspace*{0.04in}x\hspace*{-0.02in} & \hspace*{-0.02in}%
w\hspace*{-0.02in} & \hspace*{-0.02in}v\hspace*{-0.02in} & \hspace
*{-0.02in}1\hspace*{-0.04in}%
\end{array}
]^{^{T}}\neq0,~\forall w\in\left[  -1,1\right]  ^{n_{w}},~\forall v\in\left\{
-1,1\right\}  ^{n_{v}}\}.\vspace*{-0.1in}\label{MILM3}%
\end{equation}

\end{enumerate}

Therefore, $\mathcal{S}(X,f,X_{0},X_{\infty})$ is well defined. A compact
description of a MILM of a program is either of the form $\mathcal{S}\left(
F,H,H_{0},n,n_{w},n_{v}\right)  ,$ or of the form $\mathcal{S}\left(
F,H,X_{0},n,n_{w},n_{v}\right)  $. The MILMs can represent a broad range of
computer programs of interest in control applications, including but not
limited to control programs of gain scheduled linear systems in embedded
applications. In addition, generalization of the model to programs with
piecewise affine dynamics subject to quadratic constraints is straightforward.

\begin{example}
A MILM of an abstraction of the {\small IntegerDivision} program (Program 1:
Section \ref{Sec:GenRep})$,$ with all the integer variables replaced with real
variables, is given by $\mathcal{S}\left(  F,H,H_{0},4,3,0\right)  ,$
where$\vspace*{-0.1in}${\small
\[
\hspace*{-0.07in}%
\begin{array}
[b]{lll}%
H_{0}= & H= & F=\\
\left[
\begin{array}
[c]{rrrrrrrr}%
\vspace*{-0.42in} &  &  &  &  &  &  & \\
1 & 0 & 0 & -1 & 0 & 0 & 0 & 0\vspace*{-0.12in}\\
0 & 0 & 1 & 0 & 0 & 0 & 0 & 0\vspace*{-0.12in}\\
0 & -2 & 0 & 0 & 0 & 1 & 0 & 1\vspace*{-0.12in}\\
-2 & 0 & 0 & 0 & 0 & 0 & 1 & 1\vspace*{-0.07in}%
\end{array}
\right]  , & \left[  \hspace*{-0.05in}%
\begin{array}
[c]{rrrrrrrr}%
\vspace*{-0.42in} &  &  &  &  &  &  & \\
0 & 2 & 0 & -2 & 1 & 0 & 0 & 1\vspace*{-0.12in}\\
0 & -2 & 0 & 0 & 0 & 1 & 0 & 1\vspace*{-0.12in}\\
-2 & 0 & 0 & 0 & 0 & 0 & 1 & 1\vspace*{-0.07in}%
\end{array}
\hspace*{-0.05in}\right]  , & \left[
\begin{array}
[c]{rrrrrrrr}%
\vspace*{-0.42in} &  &  &  &  &  &  & \\
1 & 0 & 0 & 0 & 0 & 0 & 0 & 0\vspace*{-0.12in}\\
0 & 1 & 0 & 0 & 0 & 0 & 0 & 0\vspace*{-0.12in}\\
0 & 0 & 1 & 0 & 0 & 0 & 0 & 1/M\vspace*{-0.12in}\\
0 & -1 & 0 & 1 & 0 & 0 & 0 & 0\vspace*{-0.07in}%
\end{array}
\right]
\end{array}
\]
} Here, $M$ is a scaling parameter used for bringing all the variables within
the interval $\left[  -1,1\right]  .\vspace*{-0.05in}$
\end{example}

\subsubsection{Graph Model\label{graph models: section}}

Practical considerations such as universality and strong resemblance to the
natural flow of computer code render graph models an attractive and convenient
model for software analysis. Before we proceed, for convenience, we introduce
the following notation: $P_{r}\left(  i,x\right)  $ denotes the projection
operator defined as $P_{r}\left(  i,x\right)  =x,$ for all $i\in\mathbb{Z\cup
}\left\{  \Join\right\}  ,$ and all $x\in\mathbb{R}^{n}.$

A graph model is defined on a directed graph $G\left(  \mathcal{N}%
,\mathcal{E}\right)  $ with the following elements:\vspace*{-0.05in}

\begin{enumerate}
\item A set of nodes $\mathcal{N}=\{\emptyset\}\cup\{1,\dots,m\}\cup\left\{
\Join\right\}  .$ These can be thought of as line numbers or code locations.
Nodes $\emptyset$ and $\Join$ are starting and terminal nodes, respectively.
The only possible transition from node $\Join$ is the identity transition to
node $\Join.$\vspace*{-0.05in}

\item A set of edges $\mathcal{E}=\left\{  \left(  i,j,k\right)  \text{
}|\text{ }i\in\mathcal{N},\text{ }j\in\mathcal{O}\left(  i\right)  \right\}
,$ where the \textit{outgoing set} $\mathcal{O}\left(  i\right)  $ is the set
of all nodes to which transition from node $i$ is possible in one step.
Definition of the \textit{incoming set} $\mathcal{I}\left(  i\right)  $ is
analogous. The third element in the triplet $\left(  i,j,k\right)  $ is the
index for the $k$th edge between $i$ and $j,$ and $\mathcal{A}_{ji}=\left\{
k~|~\left(  i,j,k\right)  \in\mathcal{E}\right\}  .$\vspace*{-0.05in}

\item A set of program variables $x_{l}\in\Omega\subseteq\mathbb{R},$
$l\in\mathbb{Z}\left(  1,n\right)  .$ Given $\mathcal{N}$ and $n$, the state
space of a graph model is $X=\mathcal{N}\times\Omega^{n}$. The state
$\widetilde{x}=\left(  i,x\right)  $ of a graph model has therefore, two
components: The discrete component $i\in\mathcal{N},$ and the continuous
component $x\in\Omega^{n}\subseteq\mathbb{R}^{n}$.\vspace*{-0.05in}

\item A set of \textit{transition} labels $\overline{T}_{ji}^{k}$ assigned to
every edge $\left(  i,j,k\right)  \in\mathcal{E}$, where $\overline{T}%
_{ji}^{k}$ maps $x$ to the set $\overline{T}_{ji}^{k}x=\{T_{ji}^{k}\left(
x,w,v\right)  ~|~\left(  x,w,v\right)  \in S_{ji}^{k}\},$ where $\left(
w,v\right)  \in\left[  -1,1\right]  ^{n_{w}}\times\left\{  -1,1\right\}
^{n_{v}},$ and $T_{ji}^{k}:\mathbb{R}^{n+n_{w}+n_{v}}\mapsto\mathbb{R}^{n}$ is
a polynomial function and $S_{ji}^{k}$ is a semialgebraic set$.$ If
$\overline{T}_{ji}^{k}$ is a deterministic map, we drop $S_{ji}^{k}$ and
define $\overline{T}_{ji}^{k}\equiv T_{ji}^{k}\left(  x\right)  $%
.\vspace*{-0.05in}

\item A set of \textit{passport }labels $\Pi_{ji}^{k}$ assigned to all edges
$\left(  i,j,k\right)  \in\mathcal{E}$, where $\Pi_{ji}^{k}$ is a
semialgebraic set. A state transition along edge $\left(  i,j,k\right)  $ is
possible if and only if $x\in\Pi_{ji}^{k}.$\vspace*{-0.05in}

\item Semialgebraic invariant sets $X_{i}\subseteq\Omega^{n},$ $i\in
\mathcal{N}$ are assigned to every node on the graph, such that $P_{r}\left(
i,x\right)  \in X_{i}.$ Equivalently, a state $\widetilde{x}=\left(
i,x\right)  $ satisfying $x\in X\backslash X_{i}$ is unreachable.\vspace
*{-0.05in}
\end{enumerate}

Therefore, a graph model is a well-defined specific case of the generic model
$\mathcal{S(}X,f,X_{0},X_{\infty}),$ with $X=\mathcal{N}\times\Omega^{n},$
$X_{0}=\left\{  \emptyset\right\}  \times X_{\emptyset},$ $X_{\infty}=\left\{
\Join\right\}  \times X_{\Join}$ and $f:X\mapsto2^{X}$ defined as:\vspace
*{-0.2in}%
\begin{equation}
f\left(  \widetilde{x}\right)  \equiv f\left(  i,x\right)  =\left\{
(j,\overline{T}_{ji}^{k}x)~|~j\in\mathcal{O}\left(  i\right)  ,\text{ }x\in
\Pi_{ji}^{k}\cap X_{i}\right\}  .\vspace*{-0.25in}\label{GraphUpdate}%
\end{equation}

Conceptually similar models, namely control flow graphs, have been reported in
\cite{Pel01} (and the references therein)\ for software verification, and in
\cite{Alur1995, Brocket} for modeling and verification of hybrid
systems.\vspace*{-0.04in}

\textbf{Remarks}$\vspace*{-0.06in}$

\begin{enumerate}
\item[\textbf{--}] The invariant set of node $\emptyset$ contains all the
available information about the initial conditions of the program variables:
$P_{r}\left(  \emptyset,x\right)  \in X_{\emptyset}.\vspace*{-0.04in}$

\item[\textbf{--}] Multiple edges between nodes enable modeling of logical
"or" or "xor" type conditional transitions. This allows for modeling systems
with nondeterministic discrete transitions.$\vspace*{-0.04in}$

\item[\textbf{--}] The transition label $\overline{T}_{ji}^{k}$\ may represent
a simple update rule which depends on the real-time input. For instance, if
$T=Ax+Bw,$\ and $S=R^{n}\times\left[  -1,1\right]  ,$\ then $x\overset
{\overline{T}}{\mapsto}\left\{  Ax+Bw~|~w\in\left[  -1,1\right]  \right\}
.$\ In other cases, $\overline{T}_{ji}^{k}$\ may represent an abstraction of a
nonlinearity. For instance, the assignment $x\mapsto\sin\left(  x\right)
$\ can be abstracted by $x\overset{\overline{T}}{\mapsto}\left\{  T\left(
x,w\right)  |\left(  x,w\right)  \in S\right\}  $ (see Eqn. (\ref{SinAbst}) in
Appendix I).$\vspace*{-0.03in}$

\item[\textbf{--}] Graph models with state-dependent or time-varying
transitions labels arise, for instance, in modeling computer programs that
involve arithmetic operations with array elements. For instance, consider the
case where the transition functions are parametrized, and the parameters are
drawn from a finite set, e.g., a multidimensional array. The dependence on the
array elements can be random, resulting time-varying transition labels, or it
can be in the form of a functional dependence on other program variables.
Systematic treatment of such models is discussed in Appendix I.
\end{enumerate}

Before we proceed, we introduce the following notation:\ Given a semialgebraic
set $\Pi,$ and a polynomial function $\tau:\mathbb{R}^{n}\mapsto\mathbb{R}%
^{n},$ we denote by $\Pi\left(  \tau\right)  ,$ the set: $\Pi(\tau)=\left\{
x~|~\tau\left(  x\right)  \in\Pi\right\}  .\vspace*{-0.05in}$

\paragraph{Construction of Simple Invariant Sets\label{sec:constabst}}

\vspace*{-0.01in}Simple invariant sets can be included in the model if they
are readily available or easily computable.\vspace*{-0.01in} Even trivial
invariants can simplify the analysis and improve the chances of finding
stronger invariants via convex relaxations, e.g., the $\mathcal{S}$-Procedure
(cf. Section \ref{Chapter:Computation}).\vspace*{-0.05in}

\begin{enumerate}
\item[\textbf{--}] Simple invariant sets may be provided by the programmer.
These can be trivial sets representing simple algebraic relations between
variables, or they can be more complicated relationships\vspace*{-0.01in} that
reflect the programmer's knowledge about the functionality and behavior of the
program.\vspace*{-0.01in}

\item[\textbf{--}] Invariant Propagation: Assuming that $T_{ij}^{k}$ are
deterministic and invertible, the set\vspace*{-0.2in}%
\begin{equation}
X_{i}=%
{\textstyle\bigcup\limits_{j\in\mathcal{I}\left(  i\right)  ,\text{ }%
k\in\mathcal{A}_{ij}}}
\Pi_{ij}^{k}\left(  \left(  T_{ij}^{k}\right)  ^{-1}\right)  \vspace
*{-0.15in}\label{constraint prop}%
\end{equation}
is an invariant set for node $i.$ Furthermore, if the invariant sets $X_{j}$
are strict subsets of $\Omega^{n}$\ for all $j\in\mathcal{I}\left(  i\right)
,$ then (\ref{constraint prop})\ can be improved. Specifically, the
set\vspace*{-0.2in}%
\begin{equation}
X_{i}=%
{\textstyle\bigcup\limits_{j\in\mathcal{I}\left(  i\right)  ,\text{ }%
k\in\mathcal{A}_{ij}}}
\Pi_{ij}^{k}\left(  \left(  T_{ij}^{k}\right)  ^{-1}\right)  \cap X_{j}\left(
\left(  T_{ij}^{k}\right)  ^{-1}\right)  \vspace*{-0.1in}%
\label{invariant prop}%
\end{equation}
is an invariant\vspace*{-0.01in} set for node $i.$ Note that it is sufficient
that the restriction of $T_{ij}^{k}$ to the lower dimensional spaces in the
domains of $\Pi_{ij}^{k}$ and $X_{j}$ be invertible.

\item[\textbf{--}] Preserving Equality Constraints:\ Simple assignments of the
form $T_{ij}^{k}:x_{l}\mapsto f\left(  y_{m}\right)  $ result in invariant
sets of the form $X_{i}=\left\{  x~|~x_{l}-f\left(  y_{m}\right)  =0\right\}
$ at node $i,$ provided that $T_{ij}^{k}$ does not simultaneously update
$y_{m}.$ Formally, let $T_{ij}^{k}$ be such that $(T_{ij}^{k}x)_{l}-x_{l}$ is
non-zero for at most one element $\hat{l}\in\mathbb{Z}\left(  1,n\right)  ,$
and that $(T_{ij}^{k}x)_{\hat{l}}$ is independent of $x_{\hat{l}}.$ Then, the
following set is an invariant set at node $i:$\vspace*{-0.15in}
\[
X_{i}=%
{\textstyle\bigcup\limits_{j\in\mathcal{I}\left(  i\right)  ,\text{ }%
k\in\mathcal{A}_{ij}}}
\left\{  x~|~\left[  T_{ij}^{k}-I\right]  x=0\right\}  \vspace*{-0.15in}%
\]

\end{enumerate}

\subsubsection{Mixed-Integer Linear over Graph Hybrid Model
(MIL-GHM)\label{MGHM: section}}

The MIL-GHMs are graph models in which the effects of several lines and/or
\textit{functions} of code are compactly represented via a MILM. As a result,
the graphs in such models have edges (possibly self-edges) that are labeled
with matrices $F$ and $H$ corresponding to a MILM as the transition and
passport labels. Such models combine the flexibility provided by graph models
and the compactness of MILMs. An example is presented in Section
\ref{sec:casestudy}.\vspace*{-0.18in}

\subsection{Specifications\label{Section:Specifications}\vspace*{-0.08in}}

The specification that can be verified in our framework can generically be
described as unreachability and finite-time termination.\vspace*{-0.05in}
\newpage
\begin{definition}
\label{unreachability}A Program $\mathcal{P}\equiv\mathcal{S}(X,f,X_{0}%
,X_{\infty})$ is said to satisfy the unreachability property with respect to a
subset $X_{-}\subset X,$ if for every trajectory $\mathcal{X}\equiv x\left(
\cdot\right)  $ of (\ref{Softa1}), and every $t\in\mathbb{Z}_{+},$ $x(t)$ does
not belong to $X_{-}.$ A program $\mathcal{P}\equiv\mathcal{S}(X,f,X_{0}%
,X_{\infty})$ is said to \textit{terminate in finite time} if every solution
$\mathcal{X}=x\left(  \cdot\right)  $ of (\ref{Softa1}) satisfies $x(t)\in
X_{\infty}$ for some $t\in\mathbb{Z}_{+}.$\vspace*{-0.05in}
\end{definition}

Several critical specifications associated with runtime errors are special
cases of unreachability.

\subsubsection{Overflow}

Absence of overflow can be characterized as a special case of unreachability
by defining:\vspace*{-0.2in}%
\[
X_{-}=\left\{  x\in X~|~\left\Vert \alpha^{-1}x\right\Vert _{\infty}%
>1,~\alpha=\operatorname{diag}\left\{  \alpha_{i}\right\}  \right\}
\vspace*{-0.2in}%
\]
%
where $\alpha_{i}>0$ is the overflow limit for variable $i.$ We say that
$\alpha\succ0$ specifies the overflow limit.$\vspace*{-0.05in}$

\subsubsection{Out-of-Bounds Array Indexing}

An out-of-bounds array indexing error occurs when a variable exceeding the
length of an array, references an element of the array. Assuming that $x_{l}$
is the corresponding integer index and $L$ is the array length, one must
verify that $x_{l}$ does not exceed $L$ at location $i,$ where referencing
occurs. This can be accomplished by defining $X_{-}=\left\{  \left(
i,x\right)  \in X~|~\left\vert x_{l}\right\vert >L\right\}  $ over a graph
model and proving that $X_{-}$ is unreachable. This is also similar to
\textquotedblleft assertion checking\textquotedblright\ defined next.\vspace
*{-0.05in}

\subsubsection{Program Assertions}

An \textit{assertion} is a mathematical expression whose validity at a
specific location in the code must be verified. It usually indicates the
programmer's expectation from the behavior of the program. We consider
\textit{assertions} that are in the form of semialgebraic set memberships.
Using graph models, this is done as follows:\vspace*{-0.1in}%
\[%
\begin{array}
[c]{llccl}%
\text{at location }i: & \text{assert }x\in A_{i} & \Rightarrow & \text{define}
& X_{-}=\left\{  \left(  i,x\right)  \in X~|~x\in X\backslash A_{i}\right\}
,\\[-0.05in]%
\text{at location }i: & \text{assert }x\notin A_{i} & \Rightarrow &
\text{define} & X_{-}=\left\{  \left(  i,x\right)  \in X~|~x\in A_{i}\right\}
.
\end{array}
\vspace*{-0.1in}%
\]
In particular, safety assertions for division-by-zero or taking the square
root (or logarithm) of positive variables are standard and must be
automatically included in numerical programs (cf. Sec. {\small \ref{Sec:BC}},
Table {\small \ref{Table II}}).\vspace*{-0.03in}

\subsubsection{Program Invariants}

A program invariant is a property that holds throughout the execution of the
program. The property indicates that the variables reside in a semialgebraic
subset $X_{I}\subset X$. Essentially, any method that is used for verifying
unreachability of a subset $X_{-}\subset X,$ can be applied for verifying
invariance of $X_{I}$ by defining $X_{-}=X\backslash X_{I},$ and vice
versa.\vspace*{-0.15in}

\subsection{Implications of the Abstractions\label{Section:ImpAbst}}

For mathematical correctness, we must show that if an $\mathcal{A}%
$-representation of a program satisfies the unreachability and FTT
specifications, then so does the $\mathcal{C}$-representation, i.e., the
actual program. This is established\vspace*{-0.01in} in the following
proposition.\vspace*{-0.03in}

\begin{proposition}
\label{Abstraction}Let $\overline{\mathcal{S}}(\overline{X},\overline
{f},\overline{X}_{0},\overline{X}_{\infty})$ be an $\mathcal{A}$%
-representation of program $\mathcal{P}$ with $\mathcal{C}$-representation
\newline$\mathcal{S}(X,f,X_{0},X_{\infty}).$ Let $X_{-}\subset X$ and
$\overline{X}_{-}\subset\overline{X}$ be such that $X_{-}\subseteq\overline
{X}_{-}.$ Assume that the unreachability property w.r.t. $\overline{X}_{-}$
has been verified for $\overline{\mathcal{S}}$. Then, $\mathcal{P}$ satisfies
the unreachability property w.r.t. $X_{-}.$ Moreover, if the FTT property
holds for $\overline{\mathcal{S}}$, then $\mathcal{P}$ terminates in finite time.
\end{proposition}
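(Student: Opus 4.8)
The plan is to reduce everything to one observation: by Definition \ref{Def:abst}, every trajectory of the concrete system $\mathcal{S}(X,f,X_{0},X_{\infty})$ is also a trajectory of the abstract system $\overline{\mathcal{S}}(\overline{X},\overline{f},\overline{X}_{0},\overline{X}_{\infty})$. Once this is in hand, both assertions follow by short arguments, so the bulk of the work is making that reduction precise.

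First I would establish the trajectory-inclusion claim. Let $\mathcal{X}=(x(0),x(1),\dots)$ be a sequence of elements of $X$ obeying \eqref{Softa1}. Every $x(t)$ lies in $X$, so the hypothesis $f(x)\subseteq\overline{f}(x)$---which Definition \ref{Def:abst} asserts for $x\in X$---is available at every step; together with $X_{0}\subseteq\overline{X}_{0}$ this gives $x(0)\in\overline{X}_{0}$ and $x(t+1)\in f(x(t))\subseteq\overline{f}(x(t))$ for all $t$, i.e.\ $\mathcal{X}$ is a trajectory of $\overline{\mathcal{S}}$. The only thing to watch is that $f\subseteq\overline{f}$ is assumed only on $X$, so one must carry along the (trivial) fact that a concrete trajectory never leaves $X$; this is what the induction is really for.

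Next, for unreachability, I would argue by contradiction: if some trajectory $\mathcal{X}$ of $\mathcal{S}$ had $x(t)\in X_{-}$, then since $\mathcal{X}$ is also a trajectory of $\overline{\mathcal{S}}$ and $X_{-}\subseteq\overline{X}_{-}$, we would get $x(t)\in\overline{X}_{-}$, contradicting the verified unreachability of $\overline{X}_{-}$ for $\overline{\mathcal{S}}$. For finite-time termination, take any trajectory $\mathcal{X}$ of $\mathcal{S}$; it is a trajectory of $\overline{\mathcal{S}}$, so the FTT property of $\overline{\mathcal{S}}$ yields some $t$ with $x(t)\in\overline{X}_{\infty}$, and since $x(t)\in X$ we conclude $x(t)\in\overline{X}_{\infty}\cap X\subseteq X_{\infty}$ by \eqref{terminalabstract}. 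Hence $\mathcal{P}$ terminates in finite time.

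I do not expect a real obstacle here---the proof is essentially bookkeeping. The one genuinely load-bearing point, and the reason condition \eqref{terminalabstract} is stated in the form $\overline{X}_{\infty}\cap X\subseteq X_{\infty}$ rather than as the naive $\overline{X}_{\infty}\subseteq X_{\infty}$, is the last step: passing from ``the abstract trajectory has entered $\overline{X}_{\infty}$'' to ``the concrete state belongs to $X_{\infty}$'' only works after intersecting with $X$, since $\overline{X}_{\infty}$ is typically dense in a real domain while $X_{\infty}$ is discrete. Making that step airtight, and explicitly invoking that concrete trajectories stay in $X$, is the part I would write most carefully.
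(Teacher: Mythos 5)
Your proof is correct and follows essentially the same route as the paper's: show every concrete trajectory is a trajectory of the abstraction, derive unreachability by contradiction via $X_{-}\subseteq\overline{X}_{-}$, and obtain FTT from $\overline{X}_{\infty}\cap X\subseteq X_{\infty}$. Your explicit remark that $f(x)\subseteq\overline{f}(x)$ is only assumed on $X$ (so one must note concrete trajectories stay in $X$) is a small point of care the paper leaves implicit, but it does not change the argument.
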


\begin{proof}
See Appendix II.
\end{proof}

Since we are not concerned with undecidability issues, and in light of
Proposition \ref{Abstraction}, we will not differentiate between abstract or
concrete representations in the remainder of this paper.

\section{Lyapunov Invariants as Behavior
Certificates\label{Chapter:LyapunovInvs}\vspace*{-0.05in}}

Analogous to a Lyapunov function, a Lyapunov invariant is a real-valued
function of the program variables satisfying a \textit{difference inequality}
along the execution trace.\vspace*{-0.05in}

\begin{definition}
\label{Def:LyapInv}A $(\theta,\mu)$\textit{-Lyapunov invariant} for
$\mathcal{S}(X,f,X_{0},X_{\infty})$ is a function $V:X\mapsto\mathbb{R}$ such
that\vspace*{-0.18in}%
\begin{equation}
V\left(  x_{+}\right)  -\theta V\left(  x\right)  \leq-\mu\text{\qquad}\forall
x\in X,\text{ }x_{+}\in f\left(  x\right)  :x\notin X_{\infty}.\vspace
*{-0.15in}\label{Softa2}%
\end{equation}
where $\left(  \theta,\mu\right)  \in\lbrack0,\infty)\times\lbrack0,\infty)$.
Thus, a Lyapunov invariant satisfies the \textit{difference inequality}
(\ref{Softa2}) along the trajectories of $\mathcal{S}$ until they reach a
terminal state $X_{\infty}$.
\end{definition}

It follows from Definition \ref{Def:LyapInv} that a Lyapunov invariant is not
necessarily nonnegative, or bounded from below, and in general it need not be
monotonically decreasing. While the zero level set of $V$ defines an invariant
set in the sense that $V\left(  x_{k}\right)  \leq0$ implies $V\left(
x_{k+l}\right)  \leq0$, for all $l\geq0,$ monotonicity depends on $\theta$ and
the initial condition.\ For instance, if $V\left(  x_{0}\right)  \leq0,$
$\forall x_{0}\in X_{0},$ then (\ref{Softa2}) implies that $V\left(  x\right)
\leq0$ along the trajectories of $\mathcal{S},$ however, $V\left(  x\right)  $
may not be monotonic if $\theta<1,$ though it will be monotonic for
$\theta\geq1.$ Furthermore, the level sets of a Lyapunov invariant need not be
bounded closed curves.

Proposition \ref{prop:MILMLyap}\ (to follow) formalizes the interpretation of
Definition \ref{Def:LyapInv} for the specific modeling languages. Natural
Lyapunov invariants for graph models are functions of the form$\vspace
*{-0.2in}$%
\begin{equation}
V\left(  \widetilde{x}\right)  \equiv V\left(  i,x\right)  =\sigma_{i}\left(
x\right)  ,\text{\quad}i\in N,\vspace*{-0.2in}\label{nodewiselyap}%
\end{equation}
which assign a polynomial Lyapunov function to every node $i\in\mathcal{N}$ on
the graph $G\left(  \mathcal{N},\mathcal{E}\right)  .\vspace*{-0.05in}$

\begin{proposition}
\label{prop:MILMLyap}Let $\mathcal{S}\left(  F,H,X_{0},n,n_{w},n_{v}\right)  $
and properly labeled graph $G\left(  \mathcal{N},\mathcal{E}\right)  $ be the
MIL and graph models for a computer program $\mathcal{P}.$ The function
$V:\left[  -1,1\right]  ^{n}\mapsto\mathbb{R}$ is a $\left(  \theta
,\mu\right)  $-Lyapunov invariant for $\mathcal{P}$ if it satisfies:\vspace
*{-0.2in}%
\[
V(Fx_{e})-\theta V\left(  x\right)  \leq-\mu,\text{\qquad}\forall\left(
x,x_{e}\right)  \in\left[  -1,1\right]  ^{n}\times\Xi,\vspace*{-0.2in}%
\]
where\vspace*{-0.2in}%
\[
\Xi=\{\left(  x,w,v,1\right)  ~|~H[%
\begin{array}
[c]{cccc}%
\hspace*{-0.04in}\vspace*{0.04in}x\hspace*{-0.02in} & \hspace*{-0.02in}%
w\hspace*{-0.02in} & \hspace*{-0.02in}v\hspace*{-0.02in} & \hspace
*{-0.02in}1\hspace*{-0.04in}%
\end{array}
]^{^{T}}=0,\text{ }\left(  w,v\right)  \in\left[  -1,1\right]  ^{n_{w}}%
\times\left\{  -1,1\right\}  ^{n_{v}}\}.\vspace*{-0.17in}%
\]
The function $V:\mathcal{N\times}\mathbb{R}^{n}\mapsto\mathbb{R},$ satisfying
(\ref{nodewiselyap}) is a $\left(  \theta,\mu\right)  $-Lyapunov invariant for
$\mathcal{P}$ if$\vspace*{-0.2in}$
\begin{equation}
\sigma_{j}(x_{+})-\theta\sigma_{i}\left(  x\right)  \leq-\mu,\text{ }%
\forall\left(  i,j,k\right)  \in\mathcal{E},\text{ }(x,x_{+})\in(X_{i}\cap
\Pi_{ji}^{k})\times\overline{T}_{ji}^{k}x.\vspace*{-0.2in}\label{arcwiselyap0}%
\end{equation}
Note that a generalization of (\ref{Softa2}) allows for $\theta$ and $\mu$ to
depend on the state $x,$ although simultaneous search for $\theta\left(
x\right)  $ and $V\left(  x\right)  $ leads to non-convex conditions, unless
the dependence of $\theta$ on $x$ is fixed a-priori. We allow for dependence
of $\theta$ on the discrete component of the state in the following
way:$\vspace*{-0.18in}$%
\begin{equation}
\sigma_{j}(x_{+})-\theta_{ji}^{k}\sigma_{i}\left(  x\right)  \leq-\mu
_{ji},\text{ }\forall\left(  i,j,k\right)  \in\mathcal{E},\text{ }(x,x_{+}%
)\in(X_{i}\cap\Pi_{ji}^{k})\times\overline{T}_{ji}^{k}x\vspace*{-0.18in}%
\label{arcwiselyap}%
\end{equation}

\end{proposition}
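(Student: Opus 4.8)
The plan is to prove Proposition~\ref{prop:MILMLyap} in three parts, matching its three assertions, by showing in each case that the displayed sufficient condition forces the defining inequality~(\ref{Softa2}) of Definition~\ref{Def:LyapInv} to hold.

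\textit{MILM case.} First I would recall from~(\ref{MILM1}) that for the model $\mathcal{S}(F,H,X_0,n,n_w,n_v)$ the transition relation is precisely $x_+\in f(x)$ iff $x_+ = F[x\;w\;v\;1]^T$ for some $(w,v)\in[-1,1]^{n_w}\times\{-1,1\}^{n_v}$ with $H[x\;w\;v\;1]^T=0$; in other words, writing $x_e=[x\;w\;v\;1]^T$, the pairs $(x,x_e)$ with $x\in[-1,1]^n$ and $x_e\in\Xi$ are exactly the pairs for which $Fx_e\in f(x)$. Next, observe that for a MILM there is no explicit terminal set appearing in the transition: by~(\ref{MILM3}), a state $x$ is terminal exactly when $H[x\;w\;v\;1]^T\neq 0$ for \emph{all} admissible $(w,v)$, i.e.\ exactly when there is no $x_e\in\Xi$ over $x$ — so the quantifier "$x\notin X_\infty$, $x_+\in f(x)$" in~(\ref{Softa2}) ranges over the same set of transitions as "$x_e\in\Xi$". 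Therefore the hypothesis $V(Fx_e)-\theta V(x)\le-\mu$ for all $(x,x_e)\in[-1,1]^n\times\Xi$ is literally~(\ref{Softa2}) rewritten through the parametrization, and the conclusion follows. I would spell out the easy inclusion showing $x\notin X_\infty \Rightarrow \exists\, x_e\in\Xi$ over $x$, which is just the contrapositive of~(\ref{MILM3}).

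\textit{Graph-model case with constant $\theta,\mu$.} Here I would use the description of the graph-model dynamics from~(\ref{GraphUpdate}): a transition from $\widetilde x=(i,x)$ goes to $(j,x_+)$ with $x_+\in\overline T_{ji}^k x$ precisely when $j\in\mathcal{O}(i)$ and $x\in\Pi_{ji}^k\cap X_i$. With the ansatz $V(i,x)=\sigma_i(x)$ from~(\ref{nodewiselyap}), the generic inequality~(\ref{Softa2}) applied to the pair $(\widetilde x,\widetilde x_+)$ becomes exactly $\sigma_j(x_+)-\theta\sigma_i(x)\le-\mu$ ranging over all $(i,j,k)\in\mathcal{E}$ and $(x,x_+)\in(X_i\cap\Pi_{ji}^k)\times\overline T_{ji}^k x$ — which is hypothesis~(\ref{arcwiselyap0}). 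One bookkeeping point to address: the terminal node $\Join$ has only the identity self-loop, and $X_\infty=\{\Join\}\times X_\Join$, so the condition "$\widetilde x\notin X_\infty$" simply removes the edges out of $\Join$ from consideration, consistent with the quantifier over $\mathcal{E}$ in~(\ref{arcwiselyap0}) (or, if one keeps that edge, the identity transition with $\theta\ge 1$, $\mu=0$ still satisfies it trivially — I would note which convention is in force).

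\textit{Edge-dependent $\theta,\mu$.} Finally, for~(\ref{arcwiselyap}) I would point out that replacing the constants by $\theta_{ji}^k,\mu_{ji}$ is sound because verifying~(\ref{Softa2}) only ever requires, for each individual transition, \emph{some} valid pair $(\theta,\mu)$; since the relevant downstream uses of Lyapunov invariants (the termination / unreachability arguments of Section~\ref{Chapter:LyapunovInvs}) aggregate per-edge contributions, allowing the decay rate to depend on the discrete location $i$ (and target $j,k$) is a legitimate generalization as long as the dependence is fixed a priori, keeping the search convex. I would remark that this is precisely the caveat already stated in the proposition about not searching simultaneously for $\theta(x)$ and $V(x)$. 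I do not expect a genuine obstacle here — the whole proposition is essentially an unwinding of definitions — so the "hard part" is purely expository: being careful that the quantifier sets on the two sides genuinely coincide, in particular the subtle handling of $X_\infty$ in the MILM case via~(\ref{MILM3}) and of the terminal node $\Join$ in the graph case.
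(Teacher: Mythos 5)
Your proposal is correct: the paper offers no explicit proof of Proposition \ref{prop:MILMLyap}, presenting it as a direct formalization of Definition \ref{Def:LyapInv} for the two modeling languages, and your unwinding of the quantifier sets---in particular the observation that $x\notin X_{\infty}$ for a MILM is, by (\ref{MILM3}), exactly the existence of some $x_{e}\in\Xi$ over $x$, and that the edge set $\mathcal{E}$ together with (\ref{GraphUpdate}) reproduces the transitions quantified in (\ref{Softa2})---is precisely the argument the paper leaves implicit. No gap.
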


\subsection{Behavior Certificates\label{Sec:BC}\vspace*{-0.1in}}

\subsubsection{Finite-Time Termination (FTT)\ Certificates\vspace*{-0.1in}}

The following proposition is applicable to FTT analysis of both finite and
infinite\ state models.

\begin{proposition}
\label{FTT2}\textbf{Finite-Time Termination.} Consider a program
$\mathcal{P},$ and its dynamical system model $\mathcal{S}(X,f,X_{0}%
,X_{\infty})$. If there exists a $\left(  \theta,\mu\right)  $-Lyapunov
invariant $V:X\mapsto\mathbb{R},$ uniformly bounded on $X\backslash X_{\infty
},$ satisfying (\ref{Softa2}) and the following conditions\vspace*{-0.25in}%
\begin{align}
V\left(  x\right)   & \leq-\eta\leq0,\text{\qquad}\forall x\in X_{0}%
\label{Softa2a1}\\
\mu+\left(  \theta-1\right)  \left\Vert V\right\Vert _{\infty}  &
>0\label{Softa2a2}\\
\max\left(  \mu,\eta\right)   & >0\label{Softa2a3}%
\end{align}%
\[
\vspace*{-0.8in}%
\]
where $\left\Vert V\right\Vert _{\infty}=\sup\limits_{x\in X\backslash
X_{\infty}}V\left(  x\right)  <\infty,$ then $\mathcal{P}$ terminates in
finite time, and an upper-bound on the number of iterations is given
by\vspace*{-0.1in}%
\begin{equation}
T_{u}=\left\{
\begin{array}
[c]{lcl}%
\displaystyle\frac{\log\left(  \mu+\left(  \theta-1\right)  \left\Vert
V\right\Vert _{\infty}\right)  -\log\left(  \mu\right)  }{\log\theta} & , &
\theta\neq1,~\mu>0\vspace*{0.1in}\\
\displaystyle\frac{\log\left(  \left\Vert V\right\Vert _{\infty}\right)
-\log\left(  \eta\right)  }{\log\theta} & , & \theta\neq1,~\mu=0\\
\left\Vert V\right\Vert _{\infty}/\mu & , & \theta=1
\end{array}
\right. \label{Bnd on No. Itrn.}%
\end{equation}%
\[
\vspace*{-0.3in}%
\]

\end{proposition}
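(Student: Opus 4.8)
The plan is to track how the hypothesized Lyapunov invariant $V$ evolves along an arbitrary trajectory $x(\cdot)$ of $\mathcal{S}$ and show that the difference inequality (\ref{Softa2}), together with the boundedness of $V$ on $X\backslash X_\infty$, forces the trajectory to leave $X\backslash X_\infty$ after at most $T_u$ steps; since $f$ maps $X_\infty$ into itself, ``leaving $X\backslash X_\infty$'' is exactly finite-time termination. So fix a trajectory and suppose, for contradiction, that $x(t)\notin X_\infty$ for all $t\in\{0,1,\dots,T\}$ for some large $T$. Then (\ref{Softa2}) applies at every step $0\le t<T$, giving $V(x(t+1))\le\theta V(x(t))-\mu$. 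I would iterate this recursion from the initial condition, using (\ref{Softa2a1}), $V(x(0))\le-\eta\le 0$, as the starting estimate.

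The core is the elementary estimate obtained by unrolling the affine recursion $a_{t+1}\le\theta a_t-\mu$ with $a_t:=V(x(t))$. I would split into the three cases appearing in (\ref{Bnd on No. Itrn.}). If $\theta=1$, unrolling gives $a_t\le a_0-\mu t\le -\mu t$, which is strictly less than the lower bound $-\|V\|_\infty$ (note $\|V\|_\infty\ge a_0 \ge$ whatever; more precisely $V$ being bounded means $a_t\ge -\|V\|_\infty$ is not quite the bound — rather $a_t \le \|V\|_\infty$ and we need the contradiction from $a_t$ going below any bound, so I use that $|V|$ is bounded, hence $a_t\ge -\sup|V|$); for $\mu>0$ this fails once $\mu t$ exceeds that bound, yielding $t\le \|V\|_\infty/\mu$ as claimed. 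If $\theta\ne1$ and $\mu>0$, unrolling the geometric recursion gives $a_t\le \theta^t a_0+\mu\frac{1-\theta^t}{1-\theta}$; using $a_0\le 0$ one gets $a_t \le \mu\frac{1-\theta^t}{1-\theta}$, and I then impose that this stay $\ge -\|V\|_\infty$. Condition (\ref{Softa2a2}), $\mu+(\theta-1)\|V\|_\infty>0$, is precisely what makes the critical $t$ finite and positive; solving the inequality $\mu\frac{1-\theta^t}{1-\theta}\ge -\|V\|_\infty$ for $t$ (carefully handling the sign of $1-\theta$ and of $\log\theta$ for $\theta<1$ versus $\theta>1$) produces the stated $T_u=\frac{\log(\mu+(\theta-1)\|V\|_\infty)-\log\mu}{\log\theta}$. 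If $\theta\ne1$ and $\mu=0$, then $a_t\le\theta^t a_0\le -\theta^t\eta$; here (\ref{Softa2a3}) guarantees $\eta>0$, and for this to be consistent with $a_t\ge-\|V\|_\infty$ we need $\theta^t\eta\le\|V\|_\infty$, i.e. $t\le\frac{\log\|V\|_\infty-\log\eta}{\log\theta}$ (this forces $\theta>1$, which is the only interesting subcase since for $\theta<1,\mu=0$ the bound is vacuous — I would remark that in that degenerate case (\ref{Softa2a2}) fails unless $\|V\|_\infty$... actually I should double check, but the cleanest route is to treat $\theta<1$ separately and note (\ref{Softa2a2}) then reads $\mu>(1-\theta)\|V\|_\infty\ge 0$ so $\mu>0$, contradicting $\mu=0$; hence the $\mu=0$ row only occurs with $\theta>1$).

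In every case the assumption that the trajectory avoids $X_\infty$ for more than $T_u$ steps contradicts the two-sided boundedness of $V$ on $X\backslash X_\infty$. Therefore every trajectory enters $X_\infty$ within $\lceil T_u\rceil$ steps, which is the finite-time termination claim with the asserted iteration bound. I expect the main obstacle to be purely bookkeeping rather than conceptual: getting the direction of every inequality right when $\theta<1$ versus $\theta>1$ (so that dividing by $\log\theta$, which is negative for $\theta<1$, flips the inequality in the right way), and confirming that conditions (\ref{Softa2a2})–(\ref{Softa2a3}) are exactly the hypotheses needed to rule out the degenerate cases where the recursion neither decreases nor is forced below the lower bound (e.g. $\theta=1,\mu=0$, or $\theta<1$ with $\mu$ too small, or $\eta=0$ with $\mu=0$). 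A careful case table, checking in each cell that the stated $T_u$ is well-defined (positive argument of each $\log$, nonzero denominator) and that it is genuinely an upper bound, will complete the argument; I would also note that if $\mu>0$ and $\theta=1$ one may prefer the cleaner bound and that the $\max(\mu,\eta)>0$ hypothesis is what prevents $V\equiv 0$ from being a vacuous ``certificate.''
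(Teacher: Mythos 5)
Your proposal is correct and follows essentially the same route as the paper's own proof: assume the trajectory never enters $X_{\infty}$, unroll the difference inequality $V(x(t+1))\leq\theta V(x(t))-\mu$ from $V(x(0))\leq-\eta\leq0$, and derive a contradiction with the uniform boundedness of $V$ on $X\backslash X_{\infty}$ once $t>T_{u}$; the paper simply compresses your three-case computation into the phrase ``it can be verified that.'' Your observations that the two-sided bound on $V$ (rather than the one-sided supremum in the displayed definition of $\left\Vert V\right\Vert_{\infty}$) is what is actually needed, and that condition (\ref{Softa2a2}) rules out the degenerate subcase $\theta<1,\ \mu=0$, are accurate and fill in details the paper leaves implicit.
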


\begin{proof}
The proof is presented in Appendix II.
\end{proof}

When the state-space $X$ is finite, or when the Lyapunov invariant $V$ is only
a function of a subset of the variables that assume values in a finite set,
e.g., integer counters, it follows from Proposition \ref{FTT2} that $V$ being
a $\left(  \theta,\mu\right)  $-Lyapunov invariant for any $\theta\geq1$ and
$\mu>0$ is sufficient for certifying FTT, and uniform boundedness of $V$ need
not be established a-priori.

\begin{example}
Consider the {\small IntegerDivision} program presented in Example
\ref{IntegerDiv-Ex}. The function $V:X\mapsto\mathbb{R},$ defined according to
$V:(\mathrm{dd},\mathrm{dr},\mathrm{q},\mathrm{r})\mapsto\mathrm{r}$ is a
$\left(  1,\mathrm{dr}\right)  $-Lyapunov invariant for
{\small IntegerDivision}:{\small \ }at every step, $V$ decreases by
$\mathrm{dr}>0.$ Since $X$ is finite, the program {\small IntegerDivision}
terminates in finite time. This, however, only proves absence of infinite
loops. The program could terminate with an overflow.
\end{example}

\vspace*{0.1in}

\subsubsection{Separating Manifolds and Certificates of Boundedness}

Let $V$ be a Lyapunov invariant satisfying (\ref{Softa2}) with $\theta=1.$ The
level sets of $V,$ defined by $\mathcal{L}_{r}(V)\overset{\text{def}}{=}\{x\in
X:V(x)<r\},$ are invariant with respect to (\ref{Softa1}) in the sense that
$x(t+1)\in\mathcal{L}_{r}(V)$ whenever $x(t)\in\mathcal{L}_{r}(V)$. However,
for $r=0,$ the level sets $\mathcal{L}_{r}(V)$ remain invariant with respect
to (\ref{Softa1}) for any nonnegative $\theta.$ This is an important property
with the implication that $\theta=1$ (i.e., monotonicity) is not necessary for
establishing a separating manifold between the reachable set and the unsafe
regions of the state space (cf. Theorem \ref{BddNess}).

\begin{theorem}
\label{BddNess}\textbf{Lyapunov Invariants as Separating Manifolds.} Let
$\mathcal{V}$ denote the set of all $\left(  \theta,\mu\right)  $-Lyapunov
invariants satisfying (\ref{Softa2}) for\ program $\mathcal{P}\equiv
\mathcal{S}(X,f,X_{0},X_{\infty}).$ Let $I$ be the identity map, and for
$h\in\left\{  f,I\right\}  $ define\vspace*{-0.15in}%
\[
h^{-1}\left(  X_{-}\right)  =\left\{  x\in X~|~h\left(  x\right)  \cap
X_{-}\neq\varnothing\right\}  .\vspace*{-0.2in}%
\]
A subset $X_{-}\subset X,$ where $X_{-}\cap X_{0}=\varnothing$ can never be
reached along the trajectories of $\mathcal{P},$ if there exists
$V\in\mathcal{V}$ satisfying\vspace*{-0.1in}%
\begin{equation}
\underset{x\in X_{0}}{\sup}V(x)~<~\underset{x\in h^{-1}\left(  X_{-}\right)
}{\inf}V\left(  x\right)  \vspace*{-0.12in}\label{Inf G than Sup}%
\end{equation}
and either $\theta=1,$ or one of the following two conditions hold:\vspace
*{-0.13in}%
\begin{align}
\left(  \text{I}\right)  \text{ }\theta & <1\text{\qquad and\qquad}%
\underset{x\in h^{-1}\left(  X_{-}\right)  }{\inf}V(x)>0.\vspace
*{-0.12in}\label{Inf G than Zero}\\
\left(  \text{II}\right)  \text{ }\theta & >1\text{\hspace*{0.33in}%
and\qquad\ }\underset{x\in X_{0}}{\sup}V(x)\leq0.\vspace*{-0.12in}%
\label{Sup L than Zero}%
\end{align}

\end{theorem}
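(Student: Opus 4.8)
The plan is to argue by contradiction: suppose $X_-$ is reached, so there is a trajectory $x(\cdot)$ of $\mathcal{S}$ and a time $t$ with $x(t)\in X_-$. I want to produce a violation of one of the hypotheses. The key observation is that $x(t)\in X_-$ forces either $x(t)\in h^{-1}(X_-)$ (taking $h=I$, since $I(x(t))=x(t)\in X_-$) or, looking one step earlier, $x(t-1)\in f^{-1}(X_-)$ because $x(t)\in f(x(t-1))\cap X_-\neq\varnothing$. Either way there is some time $s$ at which $x(s)\in h^{-1}(X_-)$ for the relevant choice of $h$, hence $V(x(s))\geq \inf_{x\in h^{-1}(X_-)}V(x)$. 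So it suffices to show that along any trajectory $V(x(\cdot))$ never exceeds $\sup_{x\in X_0}V(x)$; combined with \eqref{Inf G than Sup} this is the contradiction. Note also $s\geq 1$: since $X_-\cap X_0=\varnothing$, we cannot have $x(0)\in X_-$, and for the $f^{-1}$ case $s=t-1\geq 0$; one should check the trajectory has not already terminated, i.e. $x(r)\notin X_\infty$ for $r<s$, which holds because once in $X_\infty$ the dynamics stay in $X_\infty$ and $X_-$ would then be unreachable past that point — more carefully, take $t$ to be the \emph{first} time $x(t)\in X_-$, so $x(r)\notin X_\infty$ for all $r<t$ (else the trajectory is absorbed in $X_\infty\subseteq f(X_\infty)$ and never leaves, contradicting $x(t)\in X_-$ unless $X_-\cap X_\infty\neq\varnothing$, which one handles by noting $X_\infty$ states map to $X_\infty$ so $X_-$ reached from within $X_\infty$ would already intersect it; this edge case needs a line of care).

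The heart of the argument is the monotonicity-type bound on $V$ along trajectories, split by the value of $\theta$. From \eqref{Softa2}, as long as $x(r)\notin X_\infty$ we have $V(x(r+1))\leq \theta V(x(r))-\mu \leq \theta V(x(r))$ since $\mu\geq 0$.

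\textbf{Case $\theta=1$.} Then $V(x(r+1))\leq V(x(r))-\mu\leq V(x(r))$, so $V$ is nonincreasing along the trajectory until it terminates; hence $V(x(s))\leq V(x(0))\leq \sup_{x\in X_0}V(x)$, contradicting \eqref{Inf G than Sup}.

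\textbf{Case (I), $\theta<1$.} Here iterating $V(x(r+1))\leq\theta V(x(r))$ is only useful when $V(x(r))\geq 0$, because multiplying a negative number by $\theta<1$ increases it. The trick: by \eqref{Inf G than Sup} and the hypothesis $\inf_{h^{-1}(X_-)}V>0$ we in fact have $\sup_{X_0}V(x) < \inf_{h^{-1}(X_-)}V(x)$, and I claim $V(x(r))<\inf_{h^{-1}(X_-)}V(x)$ for all $r\leq s$, contradicting $V(x(s))\geq\inf_{h^{-1}(X_-)}V(x)$. To prove the claim, let $c=\inf_{h^{-1}(X_-)}V(x)>0$. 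At $r=0$, $V(x(0))\leq\sup_{X_0}V<c$. Inductively, if $V(x(r))<c$: if $V(x(r))\leq 0<c$ then $V(x(r+1))\leq\theta V(x(r))$; when $V(x(r))\leq0$, $\theta V(x(r))\leq0<c$ — wait, if $V(x(r))<0$ then $\theta V(x(r))>V(x(r))$ but still $<0<c$; if $0\leq V(x(r))<c$ then $V(x(r+1))\leq\theta V(x(r))<V(x(r))<c$. In all subcases $V(x(r+1))<c$, closing the induction.

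\textbf{Case (II), $\theta>1$.} Now $\sup_{X_0}V(x)\leq 0$, so $V(x(0))\leq 0$. I claim $V(x(r))\leq 0$ for all $r\leq s$. Indeed if $V(x(r))\leq0$ then $V(x(r+1))\leq\theta V(x(r))-\mu\leq\theta\cdot 0-\mu=-\mu\leq 0$ (using $\theta V(x(r))\leq 0$ since $\theta>0$ and $V(x(r))\leq0$). So $V(x(s))\leq 0\leq\sup_{X_0}V(x)<\inf_{h^{-1}(X_-)}V(x)$, again contradicting $V(x(s))\geq\inf_{h^{-1}(X_-)}V(x)$.

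The main obstacle is bookkeeping rather than depth: (i) handling the choice $h\in\{f,I\}$ uniformly — the cleanest route is to observe that reaching $X_-$ at time $t\geq 1$ gives $x(t-1)\in f^{-1}(X_-)$, while reaching it at $t=0$ is excluded by $X_-\cap X_0=\varnothing$, so actually $h=f$ always suffices and the $h=I$ option only strengthens the inf (makes the hypothesis weaker) — but I should remark why either choice is legitimate; (ii) ensuring the trajectory has not entered $X_\infty$ before time $s$, so that \eqref{Softa2} applies at every step used, which follows from choosing $t$ minimal and the absorbing property $f(X_\infty)\subseteq X_\infty$; and (iii) making sure the sign-chasing in Case (I) is airtight, since that is where $\theta<1$ genuinely requires the extra hypothesis $\inf_{h^{-1}(X_-)}V>0$. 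None of these is hard, but they are the places where a careless proof would slip.
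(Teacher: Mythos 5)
Your overall strategy (argue by contradiction, locate a time $s$ with $x(s)\in h^{-1}(X_-)$ so that $V(x(s))\geq\inf_{x\in h^{-1}(X_-)}V(x)$, then bound $V$ along the trajectory case by case in $\theta$) is the same as the paper's, and your $\theta=1$ case and Case (I) are sound; your single induction preserving $V(x(r))<c$ with $c=\inf_{h^{-1}(X_-)}V>0$ is in fact a cleaner packaging of the paper's device of splitting $X_0$ according to the sign of $V(x(0))$. Your bookkeeping about minimality of $t$, absorption into $X_\infty$, and the two choices of $h$ is also more explicit than the paper's.

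Case (II), however, contains a genuine error. You establish only $V(x(r))\leq 0$ for all $r\leq s$ and then conclude via the chain $V(x(s))\leq 0\leq\sup_{x\in X_0}V(x)<\inf_{x\in h^{-1}(X_-)}V(x)$. The middle inequality $0\leq\sup_{X_0}V$ is backwards: hypothesis (\ref{Sup L than Zero}) asserts $\sup_{X_0}V\leq 0$. Moreover the fact $V(x(s))\leq 0$ does not by itself contradict $V(x(s))\geq\inf_{h^{-1}(X_-)}V$, because in Case (II) that infimum is not assumed positive: for instance $\sup_{X_0}V=-5$ and $\inf_{h^{-1}(X_-)}V=-1$ satisfy both (\ref{Inf G than Sup}) and (\ref{Sup L than Zero}), yet leave $V(x(s))\in[-1,0]$ consistent with everything you proved. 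The missing step is monotonicity: once $V(x(r))\leq 0$ and $\theta>1$, one has $\theta V(x(r))\leq V(x(r))$, hence $V(x(r+1))\leq\theta V(x(r))-\mu\leq V(x(r))$, so $V$ is non-increasing along the trajectory and $V(x(s))\leq V(x(0))\leq\sup_{X_0}V<\inf_{h^{-1}(X_-)}V$, the desired contradiction. This is exactly how the paper closes Case (II) (it notes that, due to (\ref{Sup L than Zero}), $V$ is monotonically decreasing along solutions and reduces to the $\theta=1$ argument); with that one line substituted for your final chain, your proof is complete.
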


\medskip

\begin{proof}
The proof\ is presented in Appendix II.
\end{proof}

The following corollary\ is based on Theorem \ref{BddNess} and Proposition
\ref{FTT2}\ and presents computationally implementable criteria (cf. Section
\ref{Chapter:Computation}) for simultaneously establishing FTT and absence of overflow.

\begin{corollary}
\label{Bddness and FTT}\textbf{Overflow and FTT Analysis }Consider a program
$\mathcal{P},$ and its dynamical system model $\mathcal{S}(X,f,X_{0}%
,X_{\infty})$. Let $\alpha>0$ be a diagonal matrix specifying the overflow
limit$,$ and let $X_{-}=\{x\in X~|~\left\Vert \alpha^{-1}x\right\Vert
_{\infty}>1\}.$ Let $q\in\mathbb{N}\cup\left\{  \infty\right\}  ,$
$h\in\left\{  f,I\right\}  ,$ and let the function $V:X\mapsto\mathbb{R}$ be a
$\left(  \theta,\mu\right)  $-Lyapunov invariant for $\mathcal{S}$
satisfying\vspace*{-0.2in}%
\begin{align}
V\left(  x\right)   &  \leq0\text{\qquad\qquad\hspace{0.55in}\hspace*{0.03in}%
}\forall x\in X_{0}.\label{One1}\\[-0.1in]
V\left(  x\right)   &  \geq\sup\left\{  \left\Vert \alpha^{-1}h\left(
x\right)  \right\Vert _{q}-1\right\}  \text{\hspace{0.2in}}\forall x\in
X.\vspace*{-0.3in}\label{Three3}%
\end{align}%
\[
\vspace*{-0.7in}%
\]
\vspace*{-0.5in} \newline\noindent Then, an \textit{overflow runtime error}
will not occur during any execution of $\mathcal{P}.$ In addition, if $\mu>0$
and $\mu+\theta>1,$ then, $\mathcal{P}$ terminates in at most $T_{u}$
iterations where $T_{u}=\mu^{-1}$ if $\theta=1,$ and for $\theta\neq1$ we
have:\vspace*{-0.1in}%
\begin{equation}
T_{u}=\frac{\log~\left(  \mu+\left(  \theta-1\right)  \left\Vert V\right\Vert
_{\infty}\right)  -\log\mu}{\log\theta}\leq\frac{\log\left(  \mu
+\theta-1\right)  -\log\mu}{\log\theta}\vspace*{-0.1in}\label{upperboundonTU}%
\end{equation}
where $\left\Vert V\right\Vert _{\infty}=\sup\limits_{x\in X\backslash\left\{
X_{-}\cup X_{\infty}\right\}  }~\left\vert V\left(  x\right)  \right\vert .$%
\[
\vspace*{-0.2in}%
\]

\end{corollary}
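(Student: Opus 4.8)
The plan is to derive Corollary \ref{Bddness and FTT} as a direct consequence of Theorem \ref{BddNess} (for the overflow claim) and Proposition \ref{FTT2} (for the termination claim), by checking that the hypotheses stated in those two results are implied by \eqref{One1}, \eqref{Three3}, and the extra assumptions $\mu>0$, $\mu+\theta>1$. So the proof is mostly a matter of bookkeeping: matching the quantifier structure here to the quantifier structure there, and being careful about which of the three side-conditions in Theorem \ref{BddNess} one invokes.

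\medskip\noindent\textbf{Overflow.} First I would show $X_-\cap X_0=\varnothing$: for $x\in X_0$, \eqref{One1} gives $V(x)\le 0$, while \eqref{Three3} with $h=I$ forces $V(x)\ge \|\alpha^{-1}x\|_q-1$; combined with the fact that any $x\in X_-$ has $\|\alpha^{-1}x\|_\infty>1$ (hence $\|\alpha^{-1}x\|_q>1$, using $q\ge\infty$ is the weakest, or more carefully handling the $q$ vs.\ $\infty$ norm ordering — this is the one genuinely fiddly point, see below), we get a contradiction, so $x\notin X_-$. Next I would verify \eqref{Inf G than Sup}: by \eqref{One1}, $\sup_{x\in X_0}V(x)\le 0$; by \eqref{Three3}, for every $x\in h^{-1}(X_-)$ there is some $y\in h(x)\cap X_-$ with $V(x)\ge \|\alpha^{-1}h(x)\|_q-1 \ge \|\alpha^{-1}y\|_q-1>0$ (again using $\|\alpha^{-1}y\|_\infty>1$), so $\inf_{x\in h^{-1}(X_-)}V(x)\ge 0$, in fact $>0$ under the relevant case. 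Thus \eqref{Inf G than Sup} holds strictly. Finally I would dispatch the three cases of $\theta$: if $\theta=1$, Theorem \ref{BddNess} applies directly. If $\theta<1$, I use that $\inf_{x\in h^{-1}(X_-)}V(x)>0$ (just shown), which is exactly \eqref{Inf G than Zero}. If $\theta>1$, I use $\sup_{x\in X_0}V(x)\le 0$ from \eqref{One1}, which is exactly \eqref{Sup L than Zero}. In every case Theorem \ref{BddNess} yields that $X_-$ is unreachable, i.e.\ no overflow occurs.

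\medskip\noindent\textbf{Finite-time termination.} Under the additional hypotheses $\mu>0$ and $\mu+\theta>1$, I would apply Proposition \ref{FTT2}. Its hypotheses are: $V$ uniformly bounded on $X\setminus X_\infty$, the conditions \eqref{Softa2a1}–\eqref{Softa2a3}, and of course \eqref{Softa2}. Since we have already established that trajectories never enter $X_-$, the relevant boundedness is on $X\setminus\{X_-\cup X_\infty\}$, and on that set $|V(x)|\le \max_i\alpha_i - 1$ (bounded above by \eqref{Three3} with $h=I$ since $\|\alpha^{-1}x\|_q\le$ the $q$-norm of the vector of ones scaled appropriately, and bounded below because $V(x)\ge \|\alpha^{-1}x\|_q-1\ge -1$); so $\|V\|_\infty<\infty$. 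Condition \eqref{Softa2a1} is \eqref{One1} with $\eta=0$; condition \eqref{Softa2a3} holds because $\mu>0$; condition \eqref{Softa2a2}, namely $\mu+(\theta-1)\|V\|_\infty>0$, requires a short argument. When $\theta\ge 1$ it is immediate from $\mu>0$. When $\theta<1$, I would note $\|V\|_\infty\le \max_i\alpha_i-1$ and — here I would need the scaling convention $X\subseteq[-1,1]^n$ together with $\alpha$ being the overflow limit, so that $\|V\|_\infty$ is controlled; combined with $\mu+\theta>1$ this should give $\mu+(\theta-1)\|V\|_\infty \ge \mu+(\theta-1)\cdot 1 = \mu+\theta-1>0$ provided $\|V\|_\infty\le 1$, which holds when the overflow limits are normalized to $\alpha\preceq 2I$ — I would either invoke this normalization explicitly or state $T_u$ with $\|V\|_\infty$ in place of the cruder bound. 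Then Proposition \ref{FTT2} gives termination in $T_u$ iterations with $T_u=\mu^{-1}$ when $\theta=1$ and $T_u=\big(\log(\mu+(\theta-1)\|V\|_\infty)-\log\mu\big)/\log\theta$ otherwise; the second inequality in \eqref{upperboundonTU} follows by monotonicity of $\log$ in its argument, using $\|V\|_\infty\le 1$ when $\theta>1$ and the reversed inequality when $\theta<1$, so in both cases $\mu+(\theta-1)\|V\|_\infty\le \mu+\theta-1$.

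\medskip\noindent\textbf{Main obstacle.} The only delicate point is the interplay between the $\ell_q$ norm appearing in \eqref{Three3} and the $\ell_\infty$ norm defining $X_-$, together with keeping track of the normalization $X\subseteq[-1,1]^n$ so that $\|V\|_\infty$ is genuinely bounded by a constant close to $1$; everything else is a routine case-split feeding the already-proved Theorem \ref{BddNess} and Proposition \ref{FTT2}. I would handle the norm issue by recording the elementary inequality $\|z\|_q\ge\|z\|_\infty$ for $q\le\infty$ (and $=$ at $q=\infty$), which is all that is needed to push the contradiction and the strict inequality \eqref{Inf G than Sup} through.
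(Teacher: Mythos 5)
Your overall route is the paper's: the unreachability half is fed to Theorem \ref{BddNess} and the termination half to Proposition \ref{FTT2}, and your handling of the norm comparison ($\left\Vert z\right\Vert _{q}\geq\left\Vert z\right\Vert _{\infty}$ for $q\in\mathbb{N}\cup\{\infty\}$), of condition (\ref{Inf G than Sup}), and of the case split on $\theta$ matches the paper's (very terse) argument. The unreachability half of your proposal is correct as written.

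The gap is in how you bound $\left\Vert V\right\Vert _{\infty}$. Inequality (\ref{Three3}) is a \emph{lower} bound on $V$ (it yields $V(x)\geq-1$ because the norm is nonnegative); it cannot supply an upper bound, and your fallback of normalizing the overflow limits to $\alpha\preceq2I$ does not constrain $V$ from above either. So as written you never actually establish $\left\Vert V\right\Vert _{\infty}\leq1$, which you need both for hypothesis (\ref{Softa2a2}) of Proposition \ref{FTT2} when $\theta<1$ and for the second inequality in (\ref{upperboundonTU}). The missing ingredient is the observation made after Definition \ref{Def:LyapInv}: since $\theta,\mu\geq0$, the sublevel set $\left\{  x~|~V(x)\leq0\right\}  $ is forward invariant under (\ref{Softa2}), so (\ref{One1}) forces $V(x(t))\leq0$ along every trajectory; combined with $V\geq-1$ from (\ref{Three3}) this gives $\left\vert V\right\vert \leq1$ on the reachable, non-terminal, non-overflowing states, which is all that the proof of Proposition \ref{FTT2} uses and is exactly the fact the paper invokes ($\sup_{x\in X\backslash\{X_{-}\cup X_{\infty}\}}\left\vert V(x)\right\vert \leq1$, read along trajectories). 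With that substitution the remainder of your argument, including the log-monotonicity step for (\ref{upperboundonTU}) in which the sign of $\log\theta$ absorbs the reversal of the inner inequality when $\theta<1$, goes through.
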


\begin{proof}
The proof\ is presented in Appendix II.
\end{proof}

\textbf{Remarks}

\begin{enumerate}
\item[\textbf{--}] Application of Corollary \ref{Bddness and FTT} with $h=f$
typically leads to much less conservative results compared with $h=I$, though
the computational costs are also higher.

\item[\textbf{--}] An alternative, less conservative formulation is obtained
when (\ref{Three3}) is replaced by $n$ constraints in the following way:%
\begin{equation}
V\left(  x\right)  \geq\sup\left\{  \left\vert \alpha_{k}^{-1}h\left(
x\right)  _{k}\right\vert -1\right\}  \text{\qquad}\forall x\in X,\text{ }%
k\in\mathbb{Z}\left(  1,n\right) \label{remarktos}%
\end{equation}
where $\alpha_{k}$ is the overflow limit for scalar variable $x_{k}.$ Since
$\alpha_{k}^{-1}x_{k}$ is a scalar quantity, $\left\Vert \alpha_{k}%
^{-1}h\left(  x\right)  _{k}\right\Vert _{q}=\left\vert \alpha_{k}%
^{-1}h\left(  x\right)  _{k}\right\vert $ and (\ref{remarktos}) can be used in
conjunction with quadratic Lyapunov invariants and the $\mathcal{S}$-Procedure
for convex relaxation to formulate the problem of searching for $V $ as
semidefinite optimization problem (cf. Section \ref{Chapter:Computation}).
However, (\ref{remarktos}) is computationally more expensive than
(\ref{Three3}), as it increases the number of constraints.

\item[\textbf{--}] An even less conservative but computationally more
demanding alternative would be to construct a different Lyapunov invariant for
each variable and rewrite (\ref{remarktos}), (\ref{Three3}), (\ref{One1})
along with (\ref{Softa2}), for $n$ different functions $V_{k}\left(  .\right)
$. For instance:\vspace*{-0.1in}%
\begin{align*}
V_{k}\left(  x\right)   & \leq0,\text{\qquad}\forall x\in X_{0}.\\
V_{k}\left(  x\right)   & \geq\left\vert \alpha_{k}^{-1}h\left(  x\right)
_{k}\right\vert -1\text{\qquad}\forall x_{k}\in X.
\end{align*}

\end{enumerate}

\paragraph{General Unreachability and FTT Analysis over Graph Models}

The results presented so far in this section (Theorem \ref{BddNess}, Corollary
\ref{Bddness and FTT}, and Proposition \ref{FTT2}) are readily applicable to
MILMs. These results will be applied in Section \ref{Chapter:Computation} to
formulate the verification problem as a convex optimization problem. Herein,
we present an adaptation of these results to analysis of graph models.

\begin{definition}
A cycle $\mathcal{C}_{m}$ on a graph $G\left(  \mathcal{N},\mathcal{E}\right)
$ is an ordered list of $m$ triplets $\left(  n_{1},n_{2},k_{1}\right)  ,$
$\left(  n_{2},n_{3},k_{2}\right)  ,...,$\ $\left(  n_{m},n_{m+1}%
,k_{m}\right)  ,$ where $n_{m+1}=n_{1},$ and $\left(  n_{j},n_{j+1}%
,k_{j}\right)  \in\mathcal{E},$ $\forall j\in\mathbb{Z}\left(  1,m\right)  .$
A simple cycle is a cycle with no strict sub-cycles.
\end{definition}

\begin{corollary}
\label{SafetyGraphCor1}\textbf{Unreachability and FTT\ Analysis of Graph
Models}. Consider a program $\mathcal{P}$ and its graph model $G\left(
\mathcal{N},\mathcal{E}\right)  .$ Let $V\left(  i,x\right)  =\sigma
_{i}\left(  x\right)  $ be a Lyapunov invariant for $G\left(  \mathcal{N}%
,\mathcal{E}\right)  ,$ satisfying (\ref{arcwiselyap}) and$\vspace*{-0.2in}$%
\begin{equation}
\sigma_{\emptyset}\left(  x\right)  \leq0,\text{\qquad}\forall x\in
X_{\emptyset}\vspace*{-0.15in}\label{SGC1}%
\end{equation}
and either one of the following two conditions:$\vspace*{-0.1in}$%
\begin{align}
(\text{\textrm{I}})  & :\sigma_{i}\left(  x\right)  >0,\text{\qquad}\forall
x\in X_{i}\cap X_{i-},\text{ }i\in\mathcal{N}\backslash\left\{  \emptyset
\right\} \label{SGC2}\\
(\text{\textrm{II}})  & :\sigma_{i}\left(  x\right)  >0,\text{\qquad}\forall
x\in X_{j}\cap T^{-1}\left(  X_{i-}\right)  ,\text{ }i\in\mathcal{N}%
\backslash\left\{  \emptyset\right\}  ,\text{ }j\in\mathcal{I}\left(
i\right)  ,\text{ }T\in\left\{  \overline{T}_{ij}^{k}\right\} \label{SGC3}%
\end{align}
where\vspace*{-0.15in}%
\[
T^{-1}\left(  X_{i-}\right)  =\left\{  x\in X_{i}|T\left(  x\right)  \cap
X_{i-}\neq\varnothing\right\}
\]
\vspace*{-0.45in} \newline\noindent Then, $\mathcal{P}$ satisfies the
unreachability property w.r.t. the collection of sets $X_{i-},$ $i\in
\mathcal{N}\backslash\left\{  \emptyset\right\}  .$ In addition, if for every
simple cycle $\mathcal{C}\in G,$ we have:$\vspace*{-0.15in}$%
\begin{equation}
\left(  \theta\left(  \mathcal{C}\right)  -1\right)  \left\Vert \sigma\left(
\mathcal{C}\right)  \right\Vert _{\infty}+\mu\left(  \mathcal{C}\right)
>0,\text{ and }\mu\left(  \mathcal{C}\right)  >0,\text{ and }\left\Vert
\sigma\left(  \mathcal{C}\right)  \right\Vert _{\infty}<\infty,\vspace
*{-0.3in}\label{MultiplicativeTheta}%
\end{equation}
where$\vspace*{-0.05in}$
\begin{equation}
\theta\left(  \mathcal{C}\right)  =\underset{\left(  i,j,k\right)
\in\mathcal{C}}{%
{\textstyle\prod}
}\theta_{ij}^{k},\text{\qquad}\mu\left(  \mathcal{C}\right)  =\max_{\left(
i,j,k\right)  \in\mathcal{C}}\mu_{ij}^{k},\text{\qquad}\left\Vert
\sigma\left(  \mathcal{C}\right)  \right\Vert _{\infty}=\max\limits_{\left(
i,.,.\right)  \in\mathcal{C}}~\sup\limits_{x\in X_{i}\backslash X_{i-}%
}\left\vert \sigma_{i}\left(  x\right)  \right\vert \vspace*{-0.0in}%
\label{MultiplicativeThetanDefs}%
\end{equation}
then $\mathcal{P}$ terminates in at most $T_{u}$ iterations where\vspace
*{-0.05in}%
\[
T_{u}=\sum_{\mathcal{C}\in G:\theta\left(  \mathcal{C}\right)  \neq1}%
\frac{\log~\left(  \left(  \theta\left(  \mathcal{C}\right)  -1\right)
\left\Vert \sigma\left(  \mathcal{C}\right)  \right\Vert _{\infty}+\mu\left(
\mathcal{C}\right)  \right)  -\log~\mu\left(  \mathcal{C}\right)  }{\log
\theta\left(  \mathcal{C}\right)  }+\sum_{\mathcal{C}\in G:\theta\left(
\mathcal{C}\right)  =1}\frac{\left\Vert \sigma\left(  \mathcal{C}\right)
\right\Vert _{\infty}}{\mu\left(  \mathcal{C}\right)  }.
\]

\end{corollary}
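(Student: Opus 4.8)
The plan is to transcribe the two halves of Theorem~\ref{BddNess} and Proposition~\ref{FTT2} into the node-wise (arc-wise) setting, using that each edge inequality (\ref{arcwiselyap}) with $\theta_{ij}^{k},\mu_{ij}^{k}\ge 0$ is a one-step contraction-type estimate.

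\textbf{Unreachability.} First I would show, by induction on $t$ along an arbitrary trajectory $\widetilde{x}(t)=(n(t),x(t))$ of (\ref{GraphUpdate}), that $\sigma_{n(t)}(x(t))\le 0$ for all $t$. The base case $t=0$ is (\ref{SGC1}), since $n(0)=\emptyset$ and $x(0)\in X_{\emptyset}$. For the step, the transition at time $t$ occurs along some edge $(n(t),n(t+1),k)\in\mathcal{E}$ with $x(t)\in X_{n(t)}\cap\Pi_{n(t+1)n(t)}^{k}$ and $x(t+1)\in\overline{T}_{n(t+1)n(t)}^{k}x(t)$, so (\ref{arcwiselyap}) yields $\sigma_{n(t+1)}(x(t+1))\le\theta_{\cdot}\,\sigma_{n(t)}(x(t))-\mu_{\cdot}\le 0$, using $\theta_{\cdot}\ge 0$, $\mu_{\cdot}\ge 0$ and the induction hypothesis; this is exactly the observation preceding Theorem~\ref{BddNess} that the $0$-sublevel set is invariant for every nonnegative $\theta$, so $\theta=1$ (monotonicity) is not needed here. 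Then, if some trajectory reached $X_{i-}$ at a node $i\ne\emptyset$ at time $t$ (so $t\ge 1$): under hypothesis (I), a reachable state lies in $X_{i}$, hence $x(t)\in X_{i}\cap X_{i-}$ and (\ref{SGC2}) gives $\sigma_{i}(x(t))>0$, contradicting the claim; under hypothesis (II), the predecessor $x(t-1)$ sits at node $j=n(t-1)\in\mathcal{I}(i)$ with $\overline{T}_{ij}^{k}x(t-1)\cap X_{i-}\ne\varnothing$, i.e.\ $x(t-1)\in X_{j}\cap T^{-1}(X_{i-})$ for $T=\overline{T}_{ij}^{k}$, so (\ref{SGC3}) forces a strictly positive value of $\sigma$ at node $n(t-1)$, again contradicting the claim. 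This establishes unreachability with respect to the collection $\{X_{i-}\}_{i\in\mathcal{N}\setminus\{\emptyset\}}$.

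\textbf{Finite-time termination and the bound $T_{u}$.} For each simple cycle $\mathcal{C}$, rotate its edge list so that the edge carrying the largest $\mu_{ij}^{k}$ comes last, write the rotated edges as $e_{1},\dots,e_{m}$ with parameters $(\theta_{l},\mu_{l})$, and let $n^{\ast}$ be the head/tail of the rotated cycle. Telescoping (\ref{arcwiselyap}) along $e_{1},\dots,e_{m}$ gives, for one traversal starting and ending at $n^{\ast}$, $\sigma_{n^{\ast}}(x_{\mathrm{out}})\le\big(\prod_{l}\theta_{l}\big)\sigma_{n^{\ast}}(x_{\mathrm{in}})-\sum_{l}\big(\prod_{r>l}\theta_{r}\big)\mu_{l}\le\theta(\mathcal{C})\,\sigma_{n^{\ast}}(x_{\mathrm{in}})-\mu(\mathcal{C})$, where the final inequality uses that the $l=m$ term is precisely $\mu_{m}=\mu(\mathcal{C})$ (empty $\theta$-product) while every other term is nonnegative. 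Now $|\sigma_{i}|\le\|\sigma(\mathcal{C})\|_{\infty}<\infty$ on $X_{i}\setminus X_{i-}$ for the nodes of $\mathcal{C}$, reachable states avoid $X_{i-}$ by the first part, and $\sigma_{n^{\ast}}\le 0$; hypothesis (\ref{MultiplicativeTheta}) supplies exactly the positivity needed so that the scalar recursion $s_{k+1}\le\theta(\mathcal{C})s_{k}-\mu(\mathcal{C})$, $s_{0}\le 0$, can be run as in the proof of Proposition~\ref{FTT2}: once $k$ exceeds the quantity under the corresponding summand of $T_{u}$ (the $\theta(\mathcal{C})\ne 1$ logarithmic expression, resp.\ $\|\sigma(\mathcal{C})\|_{\infty}/\mu(\mathcal{C})$ when $\theta(\mathcal{C})=1$) one would force $s_{k}<-\|\sigma(\mathcal{C})\|_{\infty}$, which is impossible. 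Hence each simple cycle is traversed at most that many times, summing over all simple cycles of $G$ gives $T_{u}$, and since $T_{u}<\infty$ no trajectory can be infinite, so every trajectory reaches $X_{\infty}=\{\Join\}\times X_{\Join}$.

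\textbf{Main obstacle.} The per-edge induction and the scalar-recursion count are lifted almost verbatim from Theorem~\ref{BddNess} and Proposition~\ref{FTT2}; the genuinely delicate point is the combinatorial passage from ``each \emph{isolated} simple cycle is completed a bounded number of times'' to the \emph{global} bound $T_{u}$ on the total number of loop iterations of an arbitrary trajectory, which may interleave traversals of many overlapping simple cycles. One must decompose the trajectory's node sequence into contiguous simple-cycle traversals (a greedy stack decomposition) together with an acyclic remainder of length at most $|\mathcal{N}|$, and then argue that the Lyapunov decrease accumulated over repeated traversals of a fixed simple cycle is not ``recharged'' by the intervening excursions. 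When $\theta_{ij}^{k}\ge 1$ on every edge this is immediate, because $\sigma$ is then nonincreasing at every node along the whole trajectory; in the general case one has to bound the upward drift produced by edges with $\theta_{ij}^{k}<1$ scaling a negative value of $\sigma$ toward $0$, which is what the multiplicative hypothesis (\ref{MultiplicativeTheta}) and the finiteness of $\|\sigma(\mathcal{C})\|_{\infty}$ are there to absorb. I expect this accounting, rather than any analytic estimate, to be where the real work lies.
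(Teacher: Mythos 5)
Your proposal is correct and follows essentially the same route as the paper: unreachability is the graph-model instantiation of Theorem~\ref{BddNess} (your inductive zero-sublevel argument is exactly the invariance observation preceding that theorem, which is why no single fixed $\theta$ case is needed), and termination is obtained by running the counting argument of Proposition~\ref{FTT2} on the telescoped one-traversal recursion of each simple cycle and summing the per-cycle bounds. The combinatorial interleaving issue you flag as the ``main obstacle'' is precisely the step the paper's own proof passes over in a single sentence (``the upper bound on the number of iterations is the sum of the maximum number of iterations over every simple cycle''), so on that point you are, if anything, more candid than the source.
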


\bigskip

\begin{proof}
The proof\ is presented in Appendix II.
\end{proof}

For verification against an overflow violation specified by a diagonal matrix
$\alpha>0,$ Corollary \ref{SafetyGraphCor1} is applied with $X_{-}%
=\{x\in\mathbb{R}^{n}~|~\left\Vert \alpha^{-1}x\right\Vert _{\infty}>1\}.$
Hence, (\ref{SGC2}) becomes $\sigma_{i}\left(  x\right)  \geq p\left(
x\right)  (\left\Vert \alpha^{-1}x\right\Vert _{q}-1),$\quad$\forall x\in
X_{i},$ $i\in\mathcal{N}\backslash\left\{  \emptyset\right\}  ,$ where
$p\left(  x\right)  >0$. User-specified assertions, as well as many other
standard safety specifications, such as absence of division-by-zero can be
verified using Corollary \ref{SafetyGraphCor1} (See Table I).

\paragraph*{-- Identification of Dead Code}

Suppose that we wish to verify that a discrete location $i\in\mathcal{N}%
\backslash\left\{  \emptyset\right\}  $ in a graph model $G\left(
\mathcal{N},\mathcal{E}\right)  $ is unreachable. If a function satisfying the
criteria of Corollary \ref{SafetyGraphCor1} with $X_{i-}=\mathbb{R}^{n}$ can
be found, then location $i$ can never be reached. Condition (\ref{SGC2}) then
becomes $\sigma_{i}\left(  x\right)  \geq0$,$~\forall x\in\mathbb{R}^{n}.$
{\small \begin{table}[tbh]
\caption{Application of Corollary \ref{SafetyGraphCor1} to the verification of
various safety specifications.}%
\label{Table II}%
{\small \vspace*{-0.35in}  }
\par
\begin{center}
{\small $%
\begin{tabular}
[c]{clcl}
&  &  & \\
&  &  & \\\cline{4-4}
&  &  & \multicolumn{1}{|l|}{apply Corollary \ref{SafetyGraphCor1}
with:}\\\hline
\multicolumn{1}{|c}{At location $i$:} & \multicolumn{1}{|l}{assert $x\in
X_{a}$} & \multicolumn{1}{|c}{$%
\begin{array}
[c]{c}%
\vspace*{-0.12in}\\
\Rightarrow\\
\vspace*{-0.12in}%
\end{array}
$} & \multicolumn{1}{|l|}{$X_{i-}:=\left\{  x\in\mathbb{R}^{n}~|~x\in
\mathbb{R}^{n}\backslash X_{a}\right\}  $}\\\hline
\multicolumn{1}{|c}{At location $i$:} & \multicolumn{1}{|l}{assert $x\notin
X_{a}$} & \multicolumn{1}{|c}{$%
\begin{array}
[c]{c}%
\vspace*{-0.12in}\\
\Rightarrow\\
\vspace*{-0.12in}%
\end{array}
$} & \multicolumn{1}{|l|}{$X_{i-}:=\left\{  x\in\mathbb{R}^{n}~|~x\in
X_{a}\right\}  $}\\\hline
\multicolumn{1}{|c}{At location $i$:} & \multicolumn{1}{|l}{(expr.)/$x_{o}$} &
\multicolumn{1}{|c}{$%
\begin{array}
[c]{c}%
\vspace*{-0.12in}\\
\Rightarrow\\
\vspace*{-0.12in}%
\end{array}
$} & \multicolumn{1}{|l|}{$X_{i-}:=\left\{  x\in\mathbb{R}^{n}~|~x_{o}%
=0\right\}  $}\\\hline
\multicolumn{1}{|c}{At location $i$:} & \multicolumn{1}{|l}{$\sqrt[2k]{x_{o}}
$} & \multicolumn{1}{|c}{$%
\begin{array}
[c]{c}%
\vspace*{-0.12in}\\
\Rightarrow\\
\vspace*{-0.12in}%
\end{array}
$} & \multicolumn{1}{|l|}{$X_{i-}:=\left\{  x\in\mathbb{R}^{n}~|~x_{o}%
<0\right\}  $}\\\hline
\multicolumn{1}{|c}{At location $i$:} & \multicolumn{1}{|l}{$\log\left(
x_{o}\right)  $} & \multicolumn{1}{|c}{$%
\begin{array}
[c]{c}%
\vspace*{-0.12in}\\
\Rightarrow\\
\vspace*{-0.12in}%
\end{array}
$} & \multicolumn{1}{|l|}{$X_{i-}:=\left\{  x\in\mathbb{R}^{n}~|~x_{o}%
\leq0\right\}  $}\\\hline
\multicolumn{1}{|c}{At location $i$:} & \multicolumn{1}{|l}{dead code} &
\multicolumn{1}{|c}{$%
\begin{array}
[c]{c}%
\vspace*{-0.12in}\\
\Rightarrow\\
\vspace*{-0.12in}%
\end{array}
$} & \multicolumn{1}{|l|}{$X_{i-}:=R^{n}$}\\\hline
\end{tabular}
$  }
\end{center}
\end{table}}

\vspace*{-0.2in}

\begin{example}
Consider the following program\vspace*{-0.05in} {\small
\begin{gather*}%
\begin{array}
[b]{ccc}%
\begin{tabular}
[c]{|l|}\hline
$%
\begin{array}
[c]{ll}
& \mathrm{void~ComputeTurnRate~(void)}\\
& \vspace*{-0.45in}\\
\mathrm{L}0: & \mathrm{\{double~x=\{0\};~double~y=\{\ast PtrToY\};}\\
& \vspace*{-0.45in}\\
\mathrm{L}1: & \mathrm{while~(1)}\\
& \vspace*{-0.45in}\\
\mathrm{L}2: & \mathrm{\{}\hspace{0.18in}~\mathrm{y=\ast PtrToY;}\\
& \vspace*{-0.45in}\\
\mathrm{L}3: & \qquad\mathrm{x=(5\ast sin(y)+1)/3;}\\
& \vspace*{-0.45in}\\
\mathrm{L}4: & \qquad\mathrm{if~x>-1~\{}\\
& \vspace*{-0.45in}\\
\mathrm{L}5: & \qquad\qquad~\mathrm{x=x+1.0472;}\\
& \vspace*{-0.45in}\\
\mathrm{L}6: & \qquad\qquad~\mathrm{TurnRate=y/x;\}}\\
& \vspace*{-0.45in}\\
\mathrm{L}7: & \qquad\mathrm{else~\{}\\
& \vspace*{-0.45in}\\
\mathrm{L}8: & \qquad\qquad~\mathrm{TurnRate=100\ast y/3.1416~\}\}}%
\end{array}
$\\\hline
\end{tabular}
&  &
\raisebox{-1.1341in}{\includegraphics[
height=2.21in,
width=1.92in
]%
{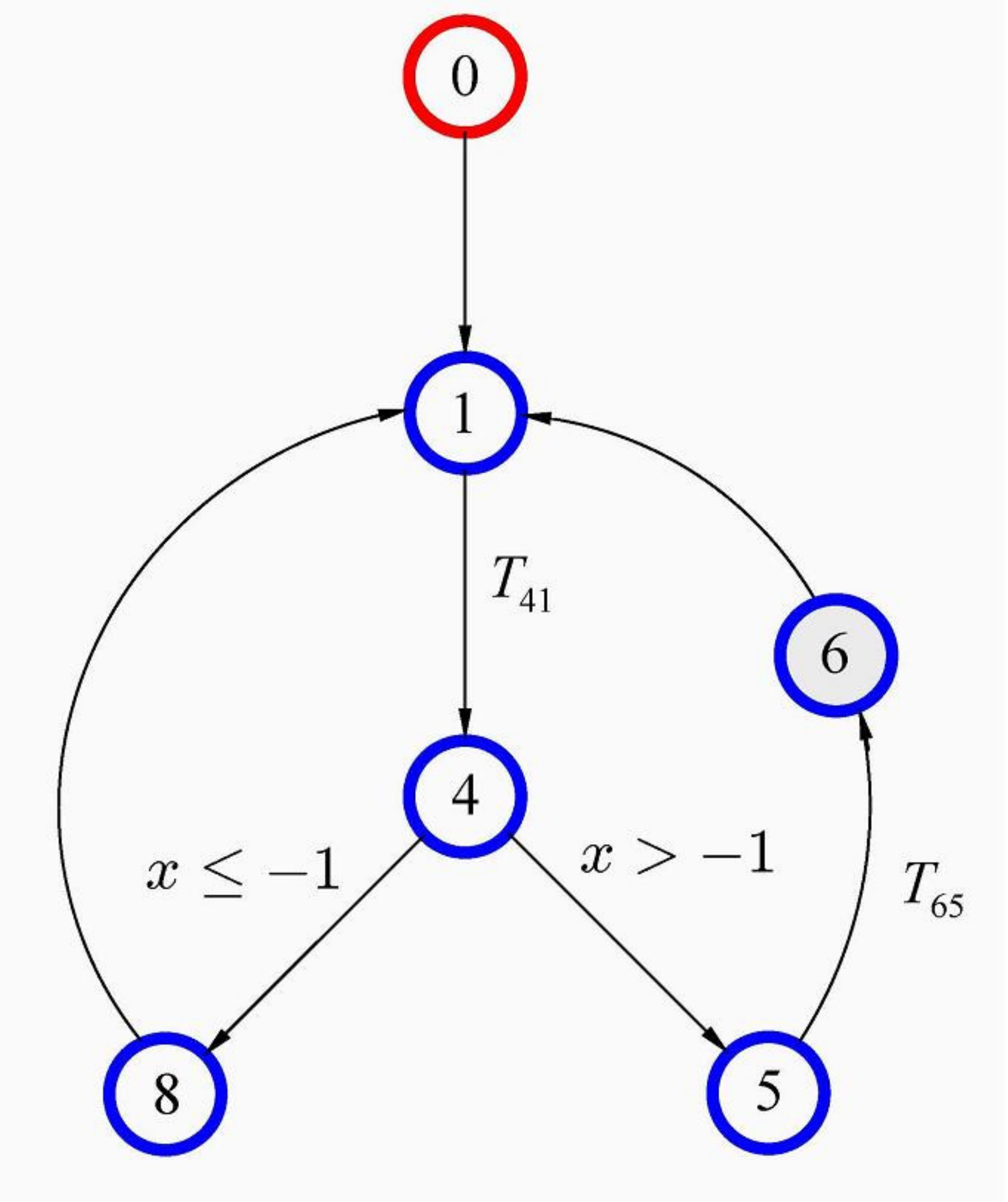}%
}%
\end{array}
\\
\text{Program 3\qquad\hspace{0.5in}\qquad\qquad Graph of an abstraction of
Program 3\hspace*{-1.52in}}%
\end{gather*}
}
Note that $\mathrm{x}$ can be zero right after the assignment $\mathrm{x}%
=(5\sin(\mathrm{y})+1)/3.$ However, at location \textrm{L}$6$, $\mathrm{x}$
cannot be zero and division-by-zero will not occur. The graph model of an
abstraction of Program 3 is shown next to the program and is defined by the
following elements: $T_{65}:\mathrm{x}\mapsto\mathrm{x}+1.0472,$ and
$T_{41}:\mathrm{x}\mapsto\left[  -4/3,2\right]  .$ The other transition labels
are identity. The only non-universal passport labels are $\Pi_{54}$ and
$\Pi_{84}$ as shown in the figure. Define\vspace*{-0.2in}%
\begin{align*}
\sigma_{6}\left(  \mathrm{x}\right)   & =-\mathrm{x}^{2}-100\mathrm{x}%
+1,\text{ }\sigma_{5}\left(  \mathrm{x}\right)  =-(\mathrm{x}+1309/1250)^{2}%
-100\mathrm{x}-2543/25\\[-0.1in]
\sigma_{0}\left(  \mathrm{x}\right)   & =\sigma_{1}\left(  \mathrm{x}\right)
=\sigma_{4}\left(  \mathrm{x}\right)  =\sigma_{8}\left(  \mathrm{x}\right)
=-\mathrm{x}^{2}+2\mathrm{x}-3.\vspace*{-0.1in}%
\end{align*}
\vspace*{-0.55in} \newline\noindent It can be verified that $V\left(
\mathrm{x}\right)  =\sigma_{i}\left(  \mathrm{x}\right)  $ is a $\left(
\theta,1\right)  $-Lyapunov invariant for Program 3 with variable rates:
$\theta_{65}=1,$ and $\theta_{ij}=0$ for all $\left(  i,j\right)  \neq\left(
6,5\right)  $. Since\vspace*{-0.2in}%
\[
-2=\sup_{\mathrm{x}\in X_{0}}\sigma_{0}\left(  \mathrm{x}\right)
<\inf_{\mathrm{x}\in X_{-}}\sigma_{6}\left(  \mathrm{x}\right)  =1\vspace
*{-0.1in}%
\]
the state $\left(  6,\mathrm{x}=0\right)  $ cannot be reached. Hence, a
division by zero will never occur. We will show in the next section how to
find such functions in general.$\vspace*{-0.1in}$
\end{example}

\section{Computation of Lyapunov Invariants \label{Chapter:Computation}%
$\vspace*{-0.05in}$}

It is well known that the main difficulty in using Lyapunov functions in
system analysis is finding them. Naturally, using Lyapunov invariants in
software analysis inherits the same difficulties. However, the recent advances
in hardware and software technology, e.g., semi-definite programming
\cite{GahinetLMILAB}, \cite{Strum1999}, and linear programming software
\cite{ILOG} present an opportunity for new approaches to software verification
based on numerical optimization.\vspace*{-0.25in}

\subsection{Preliminaries\label{Section:Preliminaries}\vspace*{-0.1in}}

\subsubsection{Convex Parameterization of Lyapunov Invariants}

The chances of finding a Lyapunov invariant are increased when (\ref{Softa2})
is only required on a subset of $X\backslash X_{\infty}$. For instance, for
$\theta\leq1,$ it is tempting to replace (\ref{Softa2}) with\vspace*{-0.2in}%
\begin{equation}
V\left(  x_{+}\right)  -\theta V\left(  x\right)  \leq-\mu,\text{ }\forall
x\in X\backslash X_{\infty}:V\left(  x\right)  <1,~x_{+}\in f\left(  x\right)
\vspace*{-0.15in}\label{Softa4}%
\end{equation}
In this formulation $V$ is not required to satisfy (\ref{Softa2}) for those
states which cannot be reached from $X_{0}.$ However, the set of all functions
$V:X\mapsto\mathbb{R}$ satisfying (\ref{Softa4}) is not convex and finding a
solution for (\ref{Softa4}) is typically much harder than (\ref{Softa2}). Such
non-convex formulations are not considered in this paper.

The first step in the search for a function $V:X\mapsto\mathbb{R}$ satisfying
(\ref{Softa2}) is selecting a finite-dimensional linear parameterization of a
candidate function $V$:\vspace*{-0.25in}%
\begin{equation}
V\left(  x\right)  =V_{\tau}\left(  x\right)  =\sum_{k=1}^{n}\tau_{k}%
V_{k}\left(  x\right)  ,\qquad\tau=\left(  \tau_{k}\right)  _{k=1}^{n},\text{
}\tau_{k}\in\mathbb{R},\vspace*{-0.15in}\label{Softa7}%
\end{equation}
where $V_{k}:X\mapsto\mathbb{R}$ are fixed basis functions. Next, for every
$\tau=(\tau_{k})_{k=1}^{N}$ let\vspace*{-0.15in}
\[
\phi(\tau)=\max_{x\in X\backslash X_{\infty},~x_{+}\in f(x)}V_{\tau}%
(x_{+})-\theta V_{\tau}(x),\vspace*{-0.1in}%
\]
(assuming for simplicity that the maximum does exist). Since $\phi\left(
\cdot\right)  $ is a maximum over $x$ of a family of linear functions in
$\tau$, $\phi\left(  \cdot\right)  $ is a convex function. If minimizing
$\phi\left(  \cdot\right)  $ over the unit disk yields a negative minimum, the
optimal $\tau^{\ast}$ defines a valid Lyapunov invariant $V_{\tau^{\ast}}(x)$.
Otherwise, no linear combination (\ref{Softa7}) yields a valid solution for
(\ref{Softa2}).

The success\vspace*{-0.01in} and efficiency of the proposed approach depend on
computability of $\phi\left(  \cdot\right)  $ and its subgradients.\vspace
*{-0.01in} While $\phi\left(  \cdot\right)  $ is convex, the same does not
necessarily hold for $V_{\tau}(x_{+})-\theta V_{\tau}(x)$ as a function of
$x$. In\vspace*{-0.01in} fact, if $X\backslash X_{\infty}$ is non-convex,
which is often the case even for very simple\vspace*{-0.01in} programs,
computation of $\phi\left(  \cdot\right)  $ becomes a non-convex optimization
problem\vspace*{-0.01in} even if $V_{\tau}(x_{+})-V_{\tau}(x)$ is a nice (e.g.
linear or concave and smooth) function\vspace*{-0.01in} of $x.$ To get around
this hurdle, we propose using convex relaxation techniques which essentially
lead to computation of a convex upper bound for $\phi\left(  \tau\right)
$.\vspace*{-0.01in}

\subsubsection{Convex Relaxation Techniques\vspace*{-0.01in}}

Such techniques constitute\vspace*{-0.01in} a broad class of techniques for
constructing finite-dimensional, convex approximations\vspace*{-0.01in} for
difficult non-convex optimization problems. Some of the results most relevant
to the\vspace*{-0.01in} software verification framework presented in this
paper can be found in \cite{Lovasz1991}\ for SDP\vspace*{-0.01in} relaxation
of binary integer programs, \cite{Meg01} and \cite{Nesterov 2000} for SDP
relaxation of quadratic programs, \cite{Yakubovic}\ for $\mathcal{S}%
$-Procedure in robustness analysis,\vspace*{-0.01in} and \cite{ParriloThesis}%
,\cite{Parrilo2001} for sum-of-squares relaxation in polynomial non-negativity
verification. \vspace*{-0.01in}We provide a brief overview of the latter two techniques.

\paragraph{The $\mathcal{S}$-Procedure \label{Section:S-Procedure}}

The $\mathcal{S}$-Procedure is commonly used for construction of Lyapunov
functions for nonlinear dynamical systems. Let functions $\phi_{i}%
:X\mapsto\mathbb{R},$ $i\in\mathbb{Z}\left(  0,m\right)  ,$ and $\psi
_{j}:X\mapsto\mathbb{R},$ $j\in\mathbb{Z}\left(  1,n\right)  $ be given, and
suppose that we are concerned with evaluating the following assertions:\vspace
*{-0.2in}%
\begin{gather}
\text{(I)}\text{: }\phi_{0}\left(  x\right)  >0,\text{ }\forall x\in\left\{
x\in X~|~\phi_{i}\left(  x\right)  \geq0,\text{ }\psi_{j}\left(  x\right)
=0,\text{ }i\in\mathbb{Z}\left(  1,m\right)  ,\text{ }j\in\mathbb{Z}\left(
1,n\right)  \right\} \label{Needs-S-Procedure}\\
\text{(II)}\text{: }\exists\tau_{i}\in\mathbb{R}^{+},\text{ }\exists\mu_{j}%
\in\mathbb{R},\text{ such that }\phi_{0}\left(  x\right)  >\sum_{i=1}^{m}%
\tau_{i}\phi_{i}\left(  x\right)  +\sum_{j=1}^{n}\mu_{j}\psi_{j}\left(
x\right)  .\vspace*{-0.25in}\label{S-Procedure-sufficient}%
\end{gather}

\noindent The implication (II)\ $\rightarrow$ (I) is trivial. The process of
replacing assertion (I) by its \textit{relaxed} version (II) is called the
$\mathcal{S}$-Procedure. Note that condition (II) is convex in decision
variables $\tau_{i}$ and $\mu_{j}.$ The implication (I) $\rightarrow$ (II) is
generally not true and the $\mathcal{S}$-Procedure is called lossless for
special cases where (I) and (II) are equivalent. A well-known such case is
when $m=1,$ $n=0,$ and $\phi_{0},$ $\phi_{1}$ are quadratic functionals$.$ A
comprehensive discussion of the $\mathcal{S}$-Procedure as well as available
results on its losslessness can be found in \cite{Gusev06}. Other variations
of $\mathcal{S}$-Procedure with non-strict inequalities exist as well.

\paragraph{Sum-of-Squares (SOS) Relaxation \label{Section:SOS-Relaxation}}

The SOS\ relaxation technique can be interpreted as the generalized version of
the $\mathcal{S}$-Procedure and is concerned with verification of the
following assertion:\vspace*{-0.15in}%
\begin{equation}
f_{j}\left(  x\right)  \geq0,\text{ }\forall j\in J,\text{\qquad}g_{k}\left(
x\right)  \neq0,\text{ }\forall k\in K,\text{\qquad}h_{l}\left(  x\right)
=0,\text{ }\forall l\in L\Rightarrow-f_{0}\left(  x\right)  \geq
0,\vspace*{-0.15in}\label{SemiAlgFormulae}%
\end{equation}
where $f_{j},g_{k},h_{l}$ are polynomial functions. It is easy to see that the
problem is equivalent to verification of emptiness of a semialgebraic set, a
necessary and sufficient condition for which is given by the
Positivstellensatz Theorem \cite{Boshnack}. In practice, sufficient conditions
in the form of nonnegativity of polynomials are formulated. The non-negativity
conditions are in turn relaxed to SOS conditions. Let $\Sigma\left[
y_{1},\ldots,y_{m}\right]  $ denote the set of SOS polynomials in $m$
variables $y_{1},...,y_{m}$, i.e. the set of polynomials that can be
represented as $p=%
{\textstyle\sum\limits_{i=1}^{t}}
p_{i}^{2},$ $p_{i}\in\mathbb{P}_{m}, $ where $\mathbb{P}_{m}$ is the
polynomial ring of $m$ variables with real coefficients. Then, a sufficient
condition for (\ref{SemiAlgFormulae}) is that there exist SOS\ polynomials
$\tau_{0},\tau_{i},\tau_{ij}\in\Sigma\left[  x\right]  $ and polynomials
$\rho_{l},$ such that\vspace*{-0.1in}%
\[
\tau_{0}+\sum\nolimits_{i}\tau_{i}f_{i}+\sum\nolimits_{i,j}\tau_{ij}f_{i}%
f_{j}+\sum\nolimits_{l}\rho_{l}h_{l}+(\prod g_{k})^{2}=0\vspace*{-0.1in}%
\]
Matlab toolboxes SOSTOOLS \cite{Prajna}, or YALMIP \cite{Lofberg2004} automate
the process of converting an SOS problem to an SDP, which is subsequently
solved by available software packages such as LMILAB \cite{GahinetLMILAB}, or
SeDumi \cite{Strum1999}. Interested readers are referred to \cite{Parrilo2001,
Megretski2003, ParriloThesis, Prajna} for more details.\vspace*{-0.15in}

\subsection{Optimization of Lyapunov Invariants for Mixed-Integer Linear
Models\vspace*{-0.1in}}

Natural Lyapunov invariant candidates for MILMs are quadratic and affine functionals.

\subsubsection{Quadratic Invariants\label{Section:MILM-QI}}

The linear parameterization of the space of quadratic functionals mapping
$\mathbb{R}^{n}$ to $\mathbb{R}$ is given by:\vspace*{-0.1in}{\small
\begin{equation}
\mathcal{V}_{x}^{2}=\left\{  V:\mathbb{R}^{n}\mapsto\mathbb{R~}|~V(x)=\left[
\begin{array}
[c]{c}%
\vspace*{-0.48in}\\
x\vspace*{-0.12in}\\
1\\
\vspace*{-0.43in}%
\end{array}
\right]  ^{T}P\left[
\begin{array}
[c]{c}%
\vspace*{-0.48in}\\
x\vspace*{-0.12in}\\
1\\
\vspace*{-0.43in}%
\end{array}
\right]  \text{, }P\in\mathbb{S}^{n+1}\right\}  ,\vspace*{-0.15in}%
\label{MILP_Quad_fctn}%
\end{equation}
}
where $\mathbb{S}^{n}$ is the set of $n$-by-$n$ symmetric matrices. We have
the following lemma.\vspace{-0.05in}

\begin{lemma}
\label{MIPL_Invariance_Lemma}Consider a program $\mathcal{P}$ and its MILM
$\mathcal{S}\left(  F,H,X_{0},n,n_{w},n_{v}\right)  .$ The program admits a
quadratic $\left(  \theta,\mu\right)  $-Lyapunov invariant $V\in
\mathcal{V}_{x}^{2},$ if there exists a matrix $Y\in\mathbb{R}^{n_{e}\times
n_{H}},$ $n_{e}=n+n_{w}+n_{v}+1,$ a diagonal matrix $D_{v}\in\mathbb{D}%
^{n_{v}},$ a positive semidefinite diagonal matrix $D_{xw}\in\mathbb{D}%
_{+}^{n+n_{w}},$ and a symmetric matrix $P\in\mathbb{S}^{n+1},$ satisfying the
following LMIs:\vspace*{-0.2in} \label{MILP_Invariance_LMI}
\begin{align*}
L_{1}^{T}PL_{1}-\theta L_{2}^{T}PL_{2}  & \preceq\operatorname{He}\left(
YH\right)  +L_{3}^{T}D_{xw}L_{3}+L_{4}^{T}D_{v}L_{4}-\left(  \lambda
+\mu\right)  L_{5}^{T}L_{5}\\[-0.07in]
\lambda & =\operatorname{Trace}D_{xw}+\operatorname{Trace}D_{v}\vspace*{-1in}%
\end{align*}
%
\vspace{-0.6in}\newline where\vspace*{-0.1in} {\small
\[
L_{1}\hspace*{-0.01in}=\hspace*{-0.01in}\left[  \hspace*{-0.04in}%
\begin{array}
[c]{c}%
F\vspace*{-0.1in}\\
L_{5}%
\end{array}
\hspace*{-0.04in}\right]  ,\text{ }L_{2}\hspace*{-0.01in}=\hspace
*{-0.01in}\left[  \hspace*{-0.04in}%
\begin{array}
[c]{lr}%
I_{n} & 0_{n\times(n_{e}-n)}\vspace*{-0.1in}\\
0_{1\times(n_{e}-1)} & 1
\end{array}
\hspace*{-0.04in}\right]  ,\text{ }L_{3}\hspace*{-0.01in}=\hspace
*{-0.01in}\left[  \hspace*{-0.04in}%
\begin{array}
[c]{l}%
I_{n+n_{w}}\vspace*{-0.1in}\\
0_{\left(  n_{v}+1\right)  \times(n+n_{w})}%
\end{array}
\hspace*{-0.04in}\right]  ^{T},\text{ }L_{4}\hspace*{-0.01in}=\hspace
*{-0.01in}\left[  \hspace*{-0.04in}%
\begin{array}
[c]{l}%
\vspace*{-0.48in}\\
0_{(n+n_{w})\times n_{v}}\vspace*{-0.1in}\\
I_{n_{v}}\vspace*{-0.1in}\\
0_{1\times n_{v}}\\
\vspace*{-0.38in}%
\end{array}
\hspace*{-0.04in}\right]  ^{T},\text{ }L_{5}\hspace*{-0.01in}=\hspace
*{-0.01in}\left[  \hspace*{-0.04in}%
\begin{array}
[c]{c}%
0_{(n_{e}-1)\times1}\vspace*{-0.1in}\\
1
\end{array}
\hspace*{-0.04in}\right]  ^{T}\vspace{0.2in}%
\]
}
\end{lemma}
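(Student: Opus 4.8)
The plan is to reduce the claim to the sufficient condition of Proposition \ref{prop:MILMLyap}: it suffices to produce $V \in \mathcal{V}_x^2$ — i.e., a symmetric $P \in \mathbb{S}^{n+1}$ — with $V(Fx_e) - \theta V(x) \le -\mu$ for all $(x,x_e) \in [-1,1]^n \times \Xi$, where $\Xi = \{(x,w,v,1) : H[x^T\,w^T\,v^T\,1]^T = 0,\ (w,v) \in [-1,1]^{n_w} \times \{-1,1\}^{n_v}\}$. I would take $P$ to be the matrix supplied by the LMI, set $x_e = [\,x^T\ w^T\ v^T\ 1\,]^T \in \mathbb{R}^{n_e}$, and first record the selector identities that follow directly from the definitions of $L_1,\dots,L_5$: $L_5 x_e = 1$, $L_2 x_e = [\,x^T\ 1\,]^T$, $L_1 x_e = [\,(Fx_e)^T\ 1\,]^T$, $L_3 x_e = [\,x^T\ w^T\,]^T$, and $L_4 x_e = v$. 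These give $V(x) = x_e^T L_2^T P L_2 x_e$, $V(Fx_e) = x_e^T L_1^T P L_1 x_e$, and $\mu = \mu\, x_e^T L_5^T L_5 x_e$. Hence, evaluating both sides of the displayed LMI as quadratic forms at $x_e$, the left side is exactly $V(Fx_e) - \theta V(x)$, and it remains only to show that the right side is $\le -\mu$ whenever $x_e \in \Xi$.

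The next step is an $\mathcal{S}$-Procedure bookkeeping argument, certifying each multiplier block against the constraint it absorbs. The term $\operatorname{He}(YH)$ evaluates to $2 x_e^T Y(Hx_e) = 0$ on $\Xi$ because $Hx_e = 0$; this is a lossless equality multiplier, which is why $Y$ is a free matrix. The term $L_3^T D_{xw} L_3$ evaluates to $[\,x^T\ w^T\,]\,D_{xw}\,[\,x^T\ w^T\,]^T = \sum_i (D_{xw})_{ii}\, z_i^2$ with $z = (x,w) \in [-1,1]^{n+n_w}$; since $(D_{xw})_{ii} \ge 0$ and $z_i^2 \le 1$, this is at most $\operatorname{Trace} D_{xw}$, and it is precisely the positive semidefiniteness of $D_{xw}$ that makes the inequality point the correct way. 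The term $L_4^T D_v L_4$ evaluates to $v^T D_v v = \sum_j (D_v)_{jj} v_j^2 = \operatorname{Trace} D_v$ because $v_j^2 = 1$ for $v \in \{-1,1\}^{n_v}$; this being an exact identity rather than an inequality, $D_v$ need not be sign-definite. Finally $x_e^T L_5^T L_5 x_e = 1$, so the rank-one term contributes $-(\lambda + \mu)$.

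Assembling these and using $\lambda = \operatorname{Trace} D_{xw} + \operatorname{Trace} D_v$, the evaluated LMI gives
\[
V(Fx_e) - \theta V(x) \;\le\; \operatorname{Trace} D_{xw} + \operatorname{Trace} D_v - (\lambda + \mu) \;=\; -\mu
\]
for all $(x,x_e) \in [-1,1]^n \times \Xi$, so by Proposition \ref{prop:MILMLyap} the function $V$ is a $(\theta,\mu)$-Lyapunov invariant for $\mathcal{P}$, which proves the lemma. I do not anticipate a genuine obstacle: the argument is a direct $\mathcal{S}$-Procedure relaxation of the set $\Xi$. The only points that need care are the bookkeeping with the selector matrices $L_i$ (their dimensions, and that each $L_i x_e$ extracts the intended sub-vector), matching each multiplier class to its constraint type ($Y$ free for the affine equality $Hx_e = 0$; $D_{xw} \succeq 0$ for the box inequalities $z_i^2 \le 1$; $D_v$ free for the boolean equalities $v_j^2 = 1$), and checking that the $\mu$ offset is absorbed through the $L_5^T L_5$ rank-one term rather than being dropped.
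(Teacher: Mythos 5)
Your proposal is correct and follows essentially the same route as the paper's proof: reduce to the sufficient condition of Proposition \ref{prop:MILMLyap}, then apply the $\mathcal{S}$-Procedure with the multipliers $Y$ (equality constraint $Hx_e=0$), $D_{xw}\succeq 0$ (box constraints), and sign-free diagonal $D_v$ (boolean constraints), with the constant offsets collected through $L_5^TL_5$. The only cosmetic difference is that you evaluate the LMI pointwise on $\Xi$ and bound each term, whereas the paper states the relaxation abstractly; these are the same argument.
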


\smallskip

\begin{proof}
The proof is presented in Appendix II
\end{proof}

The following theorem summarizes our results for verification of absence of
overflow and/or FTT for MILMs. The result follows from Lemma
\ref{MIPL_Invariance_Lemma} and Corollary \ref{Bddness and FTT} with $q=2$,
$h=f, $ though the theorem is presented without a detailed proof.\vspace
{-0.05in}

\begin{theorem}
\label{MILP_Correctness_Theorem}\textbf{Optimization-Based
MILM\ Verification.} Let $\alpha:0\prec\alpha\preceq I_{n}$ be a diagonal
positive definite matrix specifying the overflow limit. An overflow runtime
error does not occur during any execution of $\mathcal{P}$ if there exist
matrices $Y_{i}\in\mathbb{R}^{n_{e}\times n_{H}},$ diagonal matrices
$D_{iv}\in\mathbb{D}^{n_{v}},$ positive semidefinite diagonal matrices
$D_{ixw}\in\mathbb{D}_{+}^{n+n_{w}},$ and a symmetric matrix $P\in
\mathbb{S}^{n+1}$ satisfying the following LMIs:\vspace*{-0.2in}
\label{MILP_Thm}
\begin{align}
\lbrack%
\begin{array}
[c]{cc}%
x_{0} & 1
\end{array}
]P[%
\begin{array}
[c]{cc}%
x_{0} & 1
\end{array}
]^{T}  & \leq0,\text{\qquad}\forall x_{0}\in X_{0}\label{MILP_Thm_a}%
\\[-0.06in]
L_{1}^{T}PL_{1}-\theta L_{2}^{T}PL_{2}  & \preceq\operatorname{He}\left(
Y_{1}H\right)  +L_{3}^{T}D_{1xw}L_{3}+L_{4}^{T}D_{1v}L_{4}-\left(  \lambda
_{1}+\mu\right)  L_{5}^{T}L_{5}\label{MILP_Thm_b}\\[-0.06in]
L_{1}^{T}\Lambda L_{1}-L_{2}^{T}PL_{2}  & \preceq\operatorname{He}\left(
Y_{2}H\right)  +L_{3}^{T}D_{2xw}L_{3}+L_{4}^{T}D_{2v}L_{4}-\lambda_{2}%
L_{5}^{T}L_{5}\label{MILP_Thm_c}%
\end{align}%
\[
\vspace*{-0.7in}%
\]
\vspace*{-0.45in} \newline\noindent where $\Lambda=\operatorname{diag}\left\{
\alpha^{-2},-1\right\}  ,$ $\lambda_{i}=\operatorname{Trace}D_{ixw}%
+\operatorname{Trace}D_{iv},$ and $0\preceq D_{ixw},$ $i=1,2.$ In addition, if
$\mu+\theta>1,$ then $\mathcal{P}$ terminates in a most $T_{u}$ steps where
$T_{u}$ is given in (\ref{upperboundonTU}).
\end{theorem}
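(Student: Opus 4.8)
The plan is to obtain the theorem as a direct specialization of Corollary \ref{Bddness and FTT} (with $q=2$, $h=f$) by using Lemma \ref{MIPL_Invariance_Lemma} to convert each of the three analytic conditions---the Lyapunov difference inequality (\ref{Softa2}), the initial-condition bound (\ref{One1}), and the overflow-domination bound (\ref{Three3})---into a linear matrix inequality via the $\mathcal{S}$-Procedure. Write $V(x)=[\,x\ \ 1\,]P[\,x\ \ 1\,]^{T}$ for the candidate quadratic invariant, and recall from Proposition \ref{prop:MILMLyap} that for a MILM the inequality (\ref{Softa2}) is equivalent to $V(Fx_{e})-\theta V(x)\le-\mu$ for all $(x,x_{e})$ with $x\in[-1,1]^{n}$ and $x_{e}=[\,x\ \ w\ \ v\ \ 1\,]^{T}\in\Xi$. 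First I would encode the membership in $\Xi$ as quadratic constraints: the equality $Hx_{e}=0$ is lossless-multiplied by an arbitrary matrix $Y$ (giving the term $\operatorname{He}(YH)$), the box constraint $(x,w)\in[-1,1]^{n+n_{w}}$ is relaxed by a positive semidefinite diagonal multiplier $D_{xw}$ acting on $1-x_{i}^{2}$ (via $L_{3}$ and the trace-correction $\lambda$), and the sign constraints $v_{j}^{2}=1$ are handled losslessly by a free diagonal multiplier $D_{v}$ (via $L_{4}$). Expressing $V(Fx_{e})$ as $x_{e}^{T}L_{1}^{T}PL_{1}x_{e}$ and $\theta V(x)$ as $\theta\, x_{e}^{T}L_{2}^{T}PL_{2}x_{e}$, collecting all terms, and demanding the resulting quadratic form in $x_{e}$ be $\le 0$ yields exactly inequality (\ref{MILP_Thm_b}) of the theorem; this is precisely the content of Lemma \ref{MIPL_Invariance_Lemma}, so that step can be cited rather than reproven.

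Next I would handle (\ref{One1}), $V(x)\le 0$ for all $x\in X_{0}$. Under the convention that $X_{0}$ is given by extension (finite, small cardinality), this is literally the finite family of scalar inequalities $[\,x_{0}\ \ 1\,]P[\,x_{0}\ \ 1\,]^{T}\le 0$ over $x_{0}\in X_{0}$, which is (\ref{MILP_Thm_a}); no relaxation is needed. (If instead $X_{0}$ were given by a matrix $H_{0}$ as in (\ref{MILM2}), one would relax it exactly as in the previous paragraph with its own multipliers, but the theorem as stated uses the extensional form.) For the overflow-domination condition (\ref{Three3}) with $q=2$ and $h=f$, namely $V(x)\ge \|\alpha^{-1}f(x)\|_{2}-1$ for all $x\in X$: since $\alpha\preceq I_{n}$ and the right-hand side, when nonnegative, satisfies $\|\alpha^{-1}f(x)\|_{2}-1\le \|\alpha^{-1}f(x)\|_{2}^{2}-1$, it suffices to enforce the stronger quadratic inequality $V(x)\ge \|\alpha^{-1}f(x)\|_{2}^{2}-1 = x_{e}^{T}L_{1}^{T}\Lambda L_{1}x_{e}$ with $\Lambda=\operatorname{diag}\{\alpha^{-2},-1\}$, i.e. $x_{e}^{T}(L_{1}^{T}\Lambda L_{1}-L_{2}^{T}PL_{2})x_{e}\le 0$ on $\Xi\times[-1,1]^{n}$. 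Applying the same $\mathcal{S}$-Procedure relaxation (multipliers $Y_{2}$, $D_{2xw}\succeq 0$ diagonal, $D_{2v}$ diagonal, trace correction $\lambda_{2}$) produces (\ref{MILP_Thm_c}). Finally, feasibility of (\ref{MILP_Thm_a})--(\ref{MILP_Thm_c}) means all hypotheses of Corollary \ref{Bddness and FTT} are met, so absence of overflow follows; and when $\mu+\theta>1$ (with $\mu>0$ implied by the strict construction), the same corollary gives termination in at most $T_{u}$ steps with $T_{u}$ as in (\ref{upperboundonTU}).

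The only genuinely delicate point---and the reason the theorem is stated "without a detailed proof"---is the passage from the norm $\|\cdot\|_{2}$ to its square in handling (\ref{Three3}): this is valid precisely because one only needs domination when the bracketed quantity in (\ref{Three3}) is positive, and there $t\le t^{2}$ for $t=\|\alpha^{-1}f(x)\|_{2}$, while when it is negative both $0$ and $V(x)\ge -1\ge t^{2}-1$ trivially dominate; so the quadratic surrogate is sound. The boundedness of $V$ on $X\setminus\{X_{-}\cup X_{\infty}\}$ required by Corollary \ref{Bddness and FTT} is automatic since $V$ is a fixed quadratic form and the relevant set sits inside the compact box $[-1,1]^{n}$. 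Everything else is the routine, mechanical bookkeeping of assembling the $L_{i}$ blocks and the $\mathcal{S}$-Procedure multipliers, which is exactly what Lemma \ref{MIPL_Invariance_Lemma} already packages; hence the theorem follows by combining that lemma with Corollary \ref{Bddness and FTT}.
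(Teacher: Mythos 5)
Your proposal follows exactly the route the paper indicates: the paper gives no detailed proof, stating only that the theorem "follows from Lemma \ref{MIPL_Invariance_Lemma} and Corollary \ref{Bddness and FTT} with $q=2$, $h=f$," and your argument fills in precisely those steps (the $\mathcal{S}$-Procedure relaxation for (\ref{MILP_Thm_b}) and (\ref{MILP_Thm_c}), the extensional handling of $X_{0}$ for (\ref{MILP_Thm_a}), and the observation that the quadratic surrogate $\|\alpha^{-1}f(x)\|_{2}^{2}-1$ suffices because positivity of $V$ is only needed on $h^{-1}(X_{-})$, where $\|\alpha^{-1}f(x)\|_{2}>1$). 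The reasoning is sound and consistent with the paper's intended derivation.
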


\subsubsection{Affine Invariants\label{Section:MILM-LI}}

Affine Lyapunov invariants can often establish strong properties, e.g.,
boundedness, for\ variables with simple uncoupled dynamics (e.g. counters) at
a low computational cost. For variables with more complicated dynamics, affine
invariants may simply establish sign-invariance (e.g., $x_{i}\geq0$) or more
generally, upper or lower bounds on some linear combination of certain
variables. As we will observe in Section \ref{sec:casestudy}, establishing
these simple behavioral properties is important as they can be recursively
added to the model (e.g., the matrix $H$ in a MILM, or the invariant sets
$X_{i}$ in a graph model) to improve the chances of success in proving
stronger properties via higher order invariants. The linear parameterization
of the subspace of linear functionals mapping $\mathbb{R}^{n}$ to
$\mathbb{R},\,$is given by:\vspace*{-0.12in}%
\begin{equation}
\mathcal{V}_{x}^{1}=\left\{  V:\mathbb{R}^{n}\mapsto\mathbb{R~}|~V(x)=K^{T}%
\left[  x\quad1\right]  ^{T},~K\in\mathbb{R}^{n+1}\right\}  .\vspace
*{-0.12in}\label{MILP_Linear_fctn}%
\end{equation}
\vspace{-0.5in}\newline It is possible to search for the affine invariants via
semidefinite programming or linear programming.

\begin{proposition}
\label{MIPL_LinearInvariance_Lemma}\textbf{SDP Characterization of Linear
Invariants:} There exists a $\left(  \theta,\mu\right)  $-Lyapunov invariant
$V\in\mathcal{V}_{x}^{1}$ for a program $\mathcal{P}\equiv\mathcal{S}\left(
F,H,X_{0},n,n_{w},n_{v}\right)  ,$ if there exists a matrix $Y\in
\mathbb{R}^{n_{e}\times n_{H}},$ a diagonal matrix $D_{v}\in\mathbb{D}^{n_{v}%
},$ a positive semidefinite diagonal matrix $D_{xw}\in\mathbb{D}_{+}^{\left(
n+n_{w}\right)  \times\left(  n+n_{w}\right)  },$ and a matrix $K\in
\mathbb{R}^{n+1}$ satisfying the following LMI:\vspace*{-0.2in}%
\begin{equation}
\operatorname{He}(L_{1}^{T}KL_{5}-\theta L_{5}^{T}K^{T}L_{2})\prec
\operatorname{He}(YH)+L_{3}^{T}D_{xw}L_{3}+L_{4}^{T}D_{v}L_{4}-\left(
\lambda+\mu\right)  L_{5}^{T}L_{5}\vspace*{-0.2in}\label{MILP-LI-SDP}%
\end{equation}
where $\lambda=\operatorname{Trace}D_{xw}+\operatorname{Trace}D_{v}$ and
$0\preceq D_{xw}.$
\end{proposition}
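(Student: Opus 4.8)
The plan is to show that the LMI \eqref{MILP-LI-SDP} encodes, via the $\mathcal{S}$-procedure, a sufficient condition for the difference inequality \eqref{Softa2} restricted to the constraint set $\Xi$ described in Proposition \ref{prop:MILMLyap}. I would begin by writing the candidate affine invariant from \eqref{MILP_Linear_fctn} in the ambient coordinates $x_e = [\,x\ \ w\ \ v\ \ 1\,]^T \in \mathbb{R}^{n_e}$. Using the selection matrices $L_i$ defined in Lemma \ref{MIPL_Invariance_Lemma}, note $L_2 x_e = [\,x\ \ 1\,]^T$, so $V(x) = K^T L_2 x_e = x_e^T L_2^T K L_5 x_e$ after inserting $L_5 x_e = 1$; likewise, since $L_1 x_e = [\,F x_e\ \ 1\,]^T$, we have $V(Fx_e) = x_e^T L_1^T K L_5 x_e$. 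Hence $V(Fx_e) - \theta V(x) = x_e^T\!\left(L_1^T K L_5 - \theta L_2^T K^T L_5\right)\! x_e$, which I would symmetrize to $\tfrac12 x_e^T \operatorname{He}(L_1^T K L_5 - \theta L_5^T K^T L_2) x_e$ using that $x_e^T M x_e = x_e^T M^T x_e$. So the quantity to be bounded above by $-\mu$ is a quadratic form in $x_e$ with matrix $\operatorname{He}(L_1^T K L_5 - \theta L_5^T K^T L_2)$ — the left-hand side of \eqref{MILP-LI-SDP}.

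Next I would account for the constraints that define the admissible $x_e$: the equality $H x_e = 0$, the box constraint $(w,v)\in[-1,1]^{n_w}\times\{-1,1\}^{n_v}$ (note $x$ itself ranges over $[-1,1]^n$, which I fold into the $w$-block bookkeeping exactly as in Lemma \ref{MIPL_Invariance_Lemma}), and the normalization encoded by $L_5 x_e = 1$. Each of these is relaxed in the standard way: $H x_e = 0$ contributes $\operatorname{He}(YH)$ for a free multiplier matrix $Y$ (this is the matrix analogue of adding $\mu_j \psi_j(x)$ with $\psi_j$ the rows of $Hx_e$, as in \eqref{S-Procedure-sufficient}); the constraints $\xi_k^2 \le 1$ on the components $\xi_k$ of $(x,w)$ — equivalently $L_3 x_e$ — contribute $L_3^T D_{xw} L_3$ with $D_{xw}$ a \emph{positive semidefinite} diagonal matrix, together with the compensating term $-(\operatorname{Trace}D_{xw}) L_5^T L_5$ coming from the "$-1$" side of $\xi_k^2 - 1 \le 0$; the constraints $v_k^2 = 1$ — i.e. $L_4 x_e$ — are equalities, so they contribute $L_4^T D_v L_4$ with $D_v$ diagonal but \emph{sign-unrestricted}, again with the compensating $-(\operatorname{Trace}D_v) L_5^T L_5$. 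Collecting, $\lambda = \operatorname{Trace}D_{xw} + \operatorname{Trace}D_v$, and the $\mathcal{S}$-procedure certificate for "$V(Fx_e) - \theta V(x) \le -\mu$ on $\Xi$" reads: the difference of the two quadratic forms, LHS minus RHS of \eqref{MILP-LI-SDP}, is a negative semidefinite matrix. Because the LHS is only affine (rank-two structure $K L_5$) while the terms $L_3^T D_{xw} L_3$, $L_5^T L_5$ are genuinely quadratic, the inequality \eqref{MILP-LI-SDP} is a legitimate LMI in the decision variables $(K, Y, D_{xw}, D_v)$.

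To finish, I would verify the implication carefully: if \eqref{MILP-LI-SDP} holds, then for any admissible $x_e$ — i.e. any $x_e$ with $H x_e = 0$, $L_5 x_e = 1$, and $|(L_3 x_e)_k|\le 1$, $(L_4 x_e)_k = \pm 1$ — left-multiplying the LMI by $x_e^T$ and right-multiplying by $x_e$ gives $x_e^T \operatorname{He}(L_1^T K L_5 - \theta L_5^T K^T L_2) x_e < x_e^T(\operatorname{He}(YH) + L_3^T D_{xw}L_3 + L_4^T D_v L_4 - (\lambda+\mu)L_5^T L_5)x_e$. The $YH$ term vanishes since $Hx_e=0$; the $D_v$ term equals $\operatorname{Trace}D_v$ since each $(L_4 x_e)_k^2 = 1$; the $D_{xw}$ term is at most $\operatorname{Trace}D_{xw}$ since $D_{xw}\succeq 0$ diagonal and each $(L_3 x_e)_k^2 \le 1$; the $L_5^T L_5$ term equals $\lambda + \mu$. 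Hence the right side is at most $\operatorname{Trace}D_{xw} + \operatorname{Trace}D_v - \lambda - \mu = -\mu$, while the left side equals $2(V(Fx_e) - \theta V(x))$ after the symmetrization — or, by keeping the unsymmetrized form, exactly $V(Fx_e)-\theta V(x)$ up to the same factor, which one reconciles by absorbing the $2$ into the definition of $P$/$K$ consistent with the factor-of-two convention already present in \eqref{MILP_Quad_fctn} and Proposition \ref{MILM-prop}. Therefore $V(Fx_e) - \theta V(x) \le -\mu$ on all of $\Xi$, and by Proposition \ref{prop:MILMLyap} $V\in\mathcal{V}_x^1$ is a $(\theta,\mu)$-Lyapunov invariant for $\mathcal{P}$. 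The main obstacle I anticipate is purely bookkeeping: getting the block structure of the $L_i$'s, the placement of the "$x$ in $[-1,1]^n$" constraint inside the $D_{xw}$ block rather than a separate block, and the factor-of-two normalization all mutually consistent — the conceptual content is just the $\mathcal{S}$-procedure of Section \ref{Section:S-Procedure} applied verbatim, with the sole subtlety that $D_{xw}$ must be constrained $\succeq 0$ (inequality constraints) while $D_v$ is free (equality constraints).
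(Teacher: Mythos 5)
Your proposal is correct and follows exactly the route the paper intends: the proposition is the affine analogue of Lemma \ref{MIPL_Invariance_Lemma}, and its (implicit) proof is the same $\mathcal{S}$-procedure relaxation of $V(Fx_e)-\theta V(x)\leq-\mu$ over $\Xi$, with $\operatorname{He}(YH)$ for the equality $Hx_e=0$, $D_{xw}\succeq0$ for the box constraints, and sign-free $D_v$ for $v_k^2=1$. You also correctly flag the only real wrinkle, the factor of two from symmetrizing the rank-one term $L_1^TKL_5-\theta L_5^TK^TL_2$, and resolve it legitimately by rescaling $K$ within $\mathcal{V}_x^{1}$.
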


\begin{proposition}
\label{MIPL_LinearInvariance_Lemma_LP}\textbf{LP Characterization of Linear
Invariants: }There exists a $\left(  \theta,\mu\right)  $-Lyapunov invariant
for a program $\mathcal{P}\equiv\mathcal{S}\left(  F,H,X_{0},n,n_{w}%
,n_{v}\right)  $ in the class $\mathcal{V}_{x}^{1},$ if there exists a matrix
$Y\in\mathbb{R}^{1\times n_{H}},$ and nonnegative matrices $\underline{D}%
_{v},~\overline{D}_{v}\in\mathbb{R}^{1\times n_{v}},$~$\underline{D}%
_{xw},~\overline{D}_{xw}\in\mathbb{R}^{1\times\left(  n+n_{w}\right)  },$ and
a matrix $K\in\mathbb{R}^{n+1}$ satisfying:\vspace*{-0.2in}
\begin{subequations}
\label{MILP-LP-LP}%
\begin{align}
K^{T}L_{1}-\theta K^{T}L_{2}-YH-(\underline{D}_{xw}-\overline{D}_{xw}%
)L_{3}-(\underline{D}_{v}-\overline{D}_{v})L_{4}-\left(  D_{1}+\mu\right)
L_{5}  & =0\vspace*{-0.15in}\label{MILP-LP-LPa}\\
D_{1}+\left(  \overline{D}_{v}+\underline{D}_{v}\right)  \mathbf{1}%
_{r}+\left(  \overline{D}_{xw}+\underline{D}_{xw}\right)  \mathbf{1}%
_{n+n_{w}}  & \leq0\vspace*{-0.15in}\label{MILP-LP-LPb}\\
\overline{D}_{v},~\underline{D}_{v},~\overline{D}_{xw},~\underline{D}_{xw}  &
\geq0\vspace*{-0.15in}\label{MILP-LP-LPc}%
\end{align}%
\end{subequations}
\[
\vspace*{-0.8in}%
\]
where $D_{1}$ is either $0$ or $-1.$ As a special case of (\ref{MILP-LP-LP}%
),\ a subset of all the affine invariants is characterized by the set of all
solutions of the following system of linear equations:\vspace*{-0.2in}
\begin{equation}
K^{T}L_{1}-\theta K^{T}L_{2}+L_{5}=0\label{MILP-LP-LP-simple}%
\end{equation}

\end{proposition}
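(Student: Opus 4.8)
The plan is to read off from Proposition~\ref{prop:MILMLyap} exactly what it means for an affine $V\in\mathcal{V}_{x}^{1}$ to be a $(\theta,\mu)$-Lyapunov invariant for $\mathcal{S}\left(F,H,X_{0},n,n_{w},n_{v}\right)$, and then to certify the single resulting linear inequality by an $\mathcal{S}$-procedure with \emph{affine} multipliers, whose feasibility is a linear program. Write $V\left(x\right)=K^{T}\left[x\quad1\right]^{T}$ with $K\in\mathbb{R}^{n+1}$, and abbreviate $x_{e}=\left[x\quad w\quad v\quad1\right]^{T}$. With the matrices $L_{1},\dots,L_{5}$ of Lemma~\ref{MIPL_Invariance_Lemma} one has $L_{5}x_{e}=1$, $L_{3}x_{e}=\left(x,w\right)$, $L_{4}x_{e}=v$, $L_{2}x_{e}=\left[x\quad1\right]^{T}$, and $L_{1}x_{e}=\left[Fx_{e}\quad1\right]^{T}$, so that $V\left(x\right)=K^{T}L_{2}x_{e}$ and $V\left(Fx_{e}\right)=K^{T}L_{1}x_{e}$. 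Hence, by Proposition~\ref{prop:MILMLyap}, $V$ is a $(\theta,\mu)$-Lyapunov invariant precisely when the affine functional $c^{T}x_{e}:=K^{T}L_{1}x_{e}-\theta K^{T}L_{2}x_{e}$ satisfies $c^{T}x_{e}\leq-\mu$ for every $x_{e}$ with $Hx_{e}=0$, $\left(x,w\right)\in\left[-1,1\right]^{n+n_{w}}$, $v\in\left\{-1,1\right\}^{n_{v}}$, and last coordinate equal to $1$.

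Next I would replace the discrete set $\left\{-1,1\right\}^{n_{v}}$ by its convex hull $\left[-1,1\right]^{n_{v}}$; since $c^{T}x_{e}$ is affine in $x_{e}$, its suprema over the two sets coincide, so this step is lossless. Using $L_{5}x_{e}=1$ to absorb the additive constant, the requirement becomes: the linear functional $c^{T}x_{e}+\mu L_{5}x_{e}$ is $\leq0$ on the polytope $P=\left\{x_{e}~|~Hx_{e}=0,\ -\mathbf{1}\leq L_{3}x_{e}\leq\mathbf{1},\ -\mathbf{1}\leq L_{4}x_{e}\leq\mathbf{1},\ L_{5}x_{e}=1\right\}$.

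The core step is LP duality (equivalently, Farkas' lemma, or a linear $\mathcal{S}$-procedure): $c^{T}x_{e}+\mu L_{5}x_{e}\leq0$ throughout $P$ is guaranteed once $-(c^{T}+\mu L_{5})$ is written as the sum of a sign-free combination $YH$ of the rows of $H$ (the equality $Hx_{e}=0$), a nonnegative combination of the one-sided box constraints $\mathbf{1}\mp L_{3}x_{e}\geq0$ and $\mathbf{1}\mp L_{4}x_{e}\geq0$ with multipliers $\overline{D}_{xw},\underline{D}_{xw}\geq0$ and $\overline{D}_{v},\underline{D}_{v}\geq0$, and a multiple $D_{1}$ of the normalization constraint $L_{5}x_{e}=1$. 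Matching the coefficient of $x_{e}$ yields (\ref{MILP-LP-LPa}); evaluating the resulting $x_{e}$-independent term on $P$ (using $L_{3}x_{e},L_{4}x_{e}\in\left[-1,1\right]$ and $L_{5}x_{e}=1$, and folding in the constant $-\mu$) yields (\ref{MILP-LP-LPb}); and the multiplier signs are (\ref{MILP-LP-LPc}). The restriction $D_{1}\in\left\{0,-1\right\}$ is harmless: (\ref{MILP-LP-LPb}) forces $D_{1}\leq0$, and a strictly negative $D_{1}$ can be normalized to $-1$ by rescaling $K$ and all multipliers. I expect the only delicate part of carrying this out in full to be the sign bookkeeping when each two-sided box constraint and the normalization constraint are split into one-sided pieces and when the additive constant is accounted for; beyond that it is a direct substitution. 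Since the relaxation of $v$ is lossless and $c^{T}x_{e}+\mu L_{5}x_{e}$ is affine, the LP is moreover, modulo nonemptiness of $P$, also necessary for the existence of such a $V$.

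Finally, for the special case (\ref{MILP-LP-LP-simple}) I would simply note that if $K$ solves $K^{T}L_{1}-\theta K^{T}L_{2}+L_{5}=0$, then for \emph{every} $x_{e}$ with last coordinate $1$, and irrespective of $H$ and of the box constraints, $V\left(Fx_{e}\right)-\theta V\left(x\right)=c^{T}x_{e}=-L_{5}x_{e}=-1$, so $V$ is automatically a $(\theta,1)$-Lyapunov invariant, hence a $(\theta,\mu)$-Lyapunov invariant for every $\mu\leq1$. This is the instance of (\ref{MILP-LP-LP}) obtained by setting $Y=0$, $\overline{D}_{xw}=\underline{D}_{xw}=\overline{D}_{v}=\underline{D}_{v}=0$, and $D_{1}=-1$.
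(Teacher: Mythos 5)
The paper states this proposition without proof, so the only internal template is the appendix proof of Lemma \ref{MIPL_Invariance_Lemma}; your strategy---reduce via Proposition \ref{prop:MILMLyap} to a single affine inequality over $\{x_{e}~|~Hx_{e}=0,\ L_{3}x_{e}\in[-1,1]^{n+n_{w}},\ L_{4}x_{e}\in\{-1,1\}^{n_{v}},\ L_{5}x_{e}=1\}$ and certify it with a linear $\mathcal{S}$-procedure (Farkas certificate with one-sided box multipliers)---is exactly the intended LP analogue of that proof, and the identities $V(x)=K^{T}L_{2}x_{e}$, $V(Fx_{e})=K^{T}L_{1}x_{e}$ are verified correctly. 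One side remark is wrong, though harmless for the sufficiency direction: relaxing $v\in\{-1,1\}^{n_{v}}$ to $[-1,1]^{n_{v}}$ inside the $H$-constrained set is \emph{not} lossless, because $\{x_{e}~|~Hx_{e}=0\}$ intersected with the solid box can strictly contain the convex hull of its intersection with the discrete box (it can even be nonempty when the latter is empty). This only invalidates your parenthetical claim that the LP is also necessary.

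The genuine gap is the ``sign bookkeeping'' you defer, because that is precisely where the content lies and it does not come out as you assert. Substituting (\ref{MILP-LP-LPa}) and using $Hx_{e}=0$, $L_{5}x_{e}=1$ gives $K^{T}L_{1}x_{e}-\theta K^{T}L_{2}x_{e}=(\underline{D}_{xw}-\overline{D}_{xw})L_{3}x_{e}+(\underline{D}_{v}-\overline{D}_{v})L_{4}x_{e}+D_{1}+\mu\leq(\underline{D}_{xw}+\overline{D}_{xw})\mathbf{1}+(\underline{D}_{v}+\overline{D}_{v})\mathbf{1}+D_{1}+\mu\leq\mu$ by (\ref{MILP-LP-LPb})--(\ref{MILP-LP-LPc}): the certified bound is $V(Fx_{e})-\theta V(x)\leq+\mu$, not the required $\leq-\mu$. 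A concrete instance: with $n=1$, $n_{w}=n_{v}=n_{H}=0$, $F=0$, $\theta=0$, the choice $K=(0,\mu)$, $D_{1}=0$, all $D$'s zero satisfies (\ref{MILP-LP-LP}) yet $V(x_{+})-\theta V(x)=\mu>-\mu$. So a correct Farkas matching yields the displayed conditions only after replacing $D_{1}+\mu$ by $D_{1}-\mu$ in (\ref{MILP-LP-LPa}) (equivalently, tightening (\ref{MILP-LP-LPb}) to $\leq-2\mu$); everything is consistent for $\mu=0$, which is why your direct verification of the special case (\ref{MILP-LP-LP-simple}) (where $K^{T}L_{1}x_{e}-\theta K^{T}L_{2}x_{e}=-1$) goes through. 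Your write-up asserts that the coefficient matching ``yields'' (\ref{MILP-LP-LPa})--(\ref{MILP-LP-LPb}) verbatim, which is exactly the step that fails for $\mu>0$; a complete proof must either carry out the matching and record this sign correction or restrict the claim to $\mu=0$.
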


\begin{remark}
When the objective is to establish properties of the form $Kx\geq a$ for a
fixed $K,$ (e.g., when establishing sign-invariance for certain variables),
matrix $K$ in (\ref{MILP-LI-SDP})$-$(\ref{MILP-LP-LP-simple}) is fixed and
thus one can make $\theta$ a decision variable subject to $\theta\geq0.$
Exploiting this convexity is extremely helpful for successfully establishing
such properties.\vspace*{-0.02in}
\end{remark}

The advantage of using SDP for computation of linear invariants\vspace
*{-0.02in} is that efficient SDP relaxations for treatment of binary variables
exists \cite{Meg01}, \cite{Nesterov 2000}, \cite{Goemans1995},\vspace
*{-0.02in} though the computational costs are typically higher than the
LP-based approaches.\vspace*{-0.02in} In contrast, linear programming
relaxations\vspace*{-0.02in} of the binary constraints are more involved than
the\vspace*{-0.02in} corresponding SDP relaxations. Two extreme remedies can
be readily considered.\vspace*{-0.02in} The first is to relax the binary
constraints and treat the variables as continuous\vspace*{-0.02in} variables.
The second is to consider each of the $2^{n_{v}}$ different
possibilities\vspace*{-0.02in} (one for each vertex of $\left\{  -1,1\right\}
^{n_{v}}$) separately.\vspace*{-0.02in} This approach is practical only if
$n_{v}$ is small. More sophisticated\vspace*{-0.02in} schemes can be developed
based on hierarchical relaxations or convex\vspace*{-0.02in} hull
approximations of binary integer programs \cite{Sherali1994},
\cite{Lovasz1991}.\vspace*{-0.05in}

\subsection{Optimization of Lyapunov Invariants for Graph Models\vspace
*{-0.1in}}

A linear parameterization of the subspace of polynomial functionals with total
degree less than or equal to $d$ is given by:\vspace*{-0.25in}%
\begin{equation}
\mathcal{V}_{x}^{d}=\left\{  V:\mathbb{R}^{n}\mapsto\mathbb{R~}|~V(x)=K^{T}%
Z\left(  x\right)  ,\text{ }K\in\mathbb{R}^{N},\text{ }N=\binom{n+d}%
{d}\right\}  \vspace*{-0.05in}\label{Graph_Poly_fctn}%
\end{equation}
\vspace*{-0.45in}\newline where $Z\left(  x\right)  $ is a vector of length
$\binom{n+d}{d},$ consisting of all monomials of degree less than or equal to
$d$ in $n$ variables $x_{1},...,x_{n}.$ A linear parametrization of Lyapunov
invariants for graph models is defined according to (\ref{nodewiselyap}),
where for every $i\in\mathcal{N},$ we have $\sigma_{i}\left(  \cdot\right)
\in\mathcal{V}_{x}^{d\left(  i\right)  },$ where $d\left(  i\right)  $ is a
selected degree bound for $\sigma_{i}\left(  \cdot\right)  .$ Depending on the
dynamics of the model, the degree bounds $d\left(  i\right)  ,$ and the convex
relaxation technique, the corresponding optimization problem will become a
linear, semidefinite, or SOS optimization problem.

\subsubsection{Node-wise Polynomial Invariants}

We present generic conditions for verification over graph models using
SOS\ programming. For the specific case of verification of \textit{linear
graph models} via quadratic invariants and the $\mathcal{S}$-Procedure for
relaxation of non-convex constraints we present explicit LMIs. The
corresponding Matlab code is available at \cite{MVichSite}. The following
theorem follows from Corollary \ref{SafetyGraphCor1}.

\begin{theorem}
\label{Thm:Graphnodepoly}\textbf{Optimization-Based Graph Model Verification.}
Consider a program $\mathcal{P}$, and its graph model $G\left(  \mathcal{N}%
,\mathcal{E}\right)  .$ Let $V:\Omega^{n}\mapsto\mathbb{R},$ be given by
(\ref{nodewiselyap}), where $\sigma_{i}\left(  \cdot\right)  \in
\mathcal{V}_{x}^{d\left(  i\right)  }.$ Then, the functions $\sigma_{i}\left(
\cdot\right)  ,$ $i\in\mathcal{N}$ define a Lyapunov invariant for
$\mathcal{P},$ if for all $\left(  i,j,k\right)  \in\mathcal{E}$ we
have:\vspace*{-0.2in}%
\begin{equation}
-\sigma_{j}(T_{ji}^{k}\left(  x,w\right)  )+\theta_{ji}^{k}\sigma_{i}\left(
x\right)  -\mu_{ji}^{k}\in\Sigma\left[  x,w\right]  \text{ subject to }\left(
x,w\right)  \in\left(  \left(  X_{i}\cap\Pi_{ji}^{k}\right)  \times
\lbrack-1,1]^{n_{w}}\right)  \cap S_{ji}^{k}\vspace*{-0.2in}\label{D1D1}%
\end{equation}
Furthermore, $\mathcal{P}$ satisfies the unreachability property w.r.t. the
collection of sets $X_{i-},$ $i\in\mathcal{N}\backslash\left\{  \emptyset
\right\}  ,$ if there exist $\varepsilon_{i}\in\left(  0,\infty\right)  ,$
$i\in\mathcal{N}\backslash\left\{  \emptyset\right\}  ,$ such that\vspace
*{-0.2in}%
\begin{align}
-\sigma_{\emptyset}\left(  x\right)   & \in\Sigma\left[  x\right]  \text{
subject to }x\in X_{\emptyset}\vspace*{-0.2in}\label{D1D0}\\
\sigma_{i}\left(  x\right)  -\varepsilon_{i}  & \in\Sigma\left[  x\right]
\text{ subject to }x\in X_{i}\cap X_{i-},\ i\in\mathcal{N}\backslash\left\{
\emptyset\right\}  \vspace*{-0.25in}\label{D1D3}%
\end{align}%
\[
\vspace*{-0.65in}%
\]
\vspace*{-0.5in}\newline As discussed in Section \ref{Section:SOS-Relaxation},
the SOS relaxation techniques can be applied for formulating the search
problem for functions $\sigma_{i}$ satisfying (\ref{D1D1})--(\ref{D1D3}) as a
convex optimization problem. For instance, if\vspace*{-0.2in}
\[
\left(  \left(  X_{i}\cap\Pi_{ji}^{k}\right)  \times\lbrack-1,1]^{n_{w}%
}\right)  \cap S_{ji}^{k}=\left\{  \left(  x,w\right)  ~|~f_{p}\left(
x,w\right)  \geq0,\text{ }h_{l}\left(  x,w\right)  =0\right\}  ,\vspace
*{-0.2in}%
\]
then, (\ref{D1D1}) can be relaxed as an SOS optimization problem of the
following form:\vspace*{-0.2in}
\[
-\sigma_{j}(T_{ji}^{k}\left(  x,w\right)  )+\theta_{ji}^{k}\sigma_{i}\left(
x\right)  -\mu_{ji}^{k}\hspace{-0.01in}-\hspace{-0.03in}\sum\limits_{p}%
\hspace{-0.02in}\tau_{p}f_{p}-\hspace{-0.03in}\sum\limits_{p,q}\hspace
{-0.02in}\tau_{pq}f_{p}f_{q}-\hspace{-0.03in}\sum\limits_{l}\hspace
{-0.02in}\rho_{l}h_{l}\in\Sigma\left[  x,w\right]  ,\text{ s.t. }\tau_{p}%
,\tau_{pq}\in\Sigma\left[  x,w\right]  .\vspace*{-0.2in}%
\]
Software packages such as SOSTOOLS \cite{Prajna} or YALMIP \cite{Lofberg2004}
can then be used for formulating the SOS optimization problems as semidefinite
programs.\vspace*{-0.05in}
\end{theorem}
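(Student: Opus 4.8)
The plan is to obtain both assertions of the theorem by specializing the already-established Proposition~\ref{prop:MILMLyap} and Corollary~\ref{SafetyGraphCor1}, reading the notation ``$p\in\Sigma[x,w]$ subject to $(x,w)\in S$'' as the assertion that $p$ is nonnegative on the semialgebraic set $S$ (with any binary components $v\in\{-1,1\}^{n_v}$ absorbed into $S_{ji}^k$ via the polynomial equalities $v_\ell^2=1$, so that only $x,w$ appear as free variables). First I would dispatch the Lyapunov-invariant claim. Condition (\ref{D1D1}) states that for every edge $(i,j,k)\in\mathcal{E}$ and every point of $\big((X_i\cap\Pi_{ji}^k)\times[-1,1]^{n_w}\big)\cap S_{ji}^k$ one has $-\sigma_j(T_{ji}^k(x,w))+\theta_{ji}^k\sigma_i(x)-\mu_{ji}^k\ge 0$. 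Since $\overline{T}_{ji}^k x=\{T_{ji}^k(x,w,v):(x,w,v)\in S_{ji}^k\}$ and $V(i,x)=\sigma_i(x)$ depends only on the continuous component, this is verbatim the arc-wise inequality $\sigma_j(x_+)-\theta_{ji}^k\sigma_i(x)\le -\mu_{ji}^k$ for all $(x,x_+)\in(X_i\cap\Pi_{ji}^k)\times\overline{T}_{ji}^k x$, i.e.\ (\ref{arcwiselyap}). Hence Proposition~\ref{prop:MILMLyap} gives directly that the $\sigma_i$ define a $(\theta,\mu)$-Lyapunov invariant for $\mathcal{P}$.

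Next I would verify the hypotheses of Corollary~\ref{SafetyGraphCor1} to conclude unreachability. The Lyapunov-invariant property is already in hand. Condition (\ref{D1D0}), namely $-\sigma_\emptyset(x)\ge 0$ for $x\in X_\emptyset$, is exactly (\ref{SGC1}). For the positivity hypothesis, I would use option~(\textrm{I}) of the corollary, i.e.\ (\ref{SGC2}): from (\ref{D1D3}), $\sigma_i(x)-\varepsilon_i\ge 0$ on $X_i\cap X_{i-}$ with $\varepsilon_i>0$ yields $\sigma_i(x)\ge\varepsilon_i>0$ there, which is precisely the required strict inequality. The strictly positive slack $\varepsilon_i$ is exactly what bridges the gap between the strict inequality demanded by Corollary~\ref{SafetyGraphCor1} and the non-strict, SOS-certifiable inequality one can actually impose, since $X_i\cap X_{i-}$ need be neither compact nor closed. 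With (\ref{arcwiselyap}), (\ref{SGC1}) and (\ref{SGC2}) satisfied, Corollary~\ref{SafetyGraphCor1} gives that $\mathcal{P}$ satisfies the unreachability property with respect to $\{X_{i-}\}_{i\in\mathcal{N}\backslash\{\emptyset\}}$.

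Finally I would justify the SOS-relaxation passage. Given a representation $\big((X_i\cap\Pi_{ji}^k)\times[-1,1]^{n_w}\big)\cap S_{ji}^k=\{(x,w): f_p(x,w)\ge 0,\ h_l(x,w)=0\}$, suppose there exist $\tau_p,\tau_{pq}\in\Sigma[x,w]$ and polynomials $\rho_l$ such that $-\sigma_j(T_{ji}^k)+\theta_{ji}^k\sigma_i-\mu_{ji}^k-\sum_p\tau_p f_p-\sum_{p,q}\tau_{pq}f_pf_q-\sum_l\rho_l h_l$ is a sum of squares. Evaluating at any point of the constraint set, each $\tau_p f_p$ and $\tau_{pq}f_pf_q$ is nonnegative and each $\rho_l h_l$ vanishes, so the left-hand side of (\ref{D1D1}) is a nonnegative (SOS) quantity plus nonnegative terms, hence $\ge 0$; this is the ``easy'' direction of the Positivstellensatz. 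The analogous relaxations handle (\ref{D1D0}) and (\ref{D1D3}). The key point is that the unknowns---the coefficients of each $\sigma_i$, the entries of the multipliers $\tau_p,\tau_{pq},\rho_l$, and (after optionally normalizing, e.g.\ $\varepsilon_i\equiv 1$ by a common rescaling of the $\sigma_i$) the remaining scalars---enter all these conditions affinely, so the feasibility problem is convex and is compiled to an SDP by SOSTOOLS or YALMIP.

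I do not expect a genuine obstacle here: the statement is essentially a translation of Corollary~\ref{SafetyGraphCor1} into SOS programming, and the paper itself stresses that the novelty lies in the framework rather than in the proofs. The only points requiring care are the ones noted above---the role of the strictly positive margins $\varepsilon_i$ in converting the strict inequality of the corollary into a certifiable non-strict one, the encoding of the binary variables $v$ inside the sets $S_{ji}^k$ so that the SOS cone $\Sigma[x,w]$ is the right object, and the verification that composition of $\sigma_j$ with the fixed polynomial maps $T_{ji}^k$ keeps everything affine in the decision variables.
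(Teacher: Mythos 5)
Your proposal is correct and takes essentially the same route as the paper, which simply states that the theorem follows from Corollary \ref{SafetyGraphCor1} (with the arc-wise condition (\ref{D1D1}) instantiating (\ref{arcwiselyap}) via Proposition \ref{prop:MILMLyap}, (\ref{D1D0}) giving (\ref{SGC1}), and the margins $\varepsilon_i>0$ in (\ref{D1D3}) delivering the strict inequality (\ref{SGC2})). Your additional remarks on the role of the $\varepsilon_i$ and on the easy direction of the Positivstellensatz for the SOS relaxation are consistent with the paper's intent and fill in details the paper leaves implicit.
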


\subsubsection{Node-wise Quadratic Invariants for Quasi-Linear Graph Models}

In a quasi-linear graph model, all the transition labels are linear. The
passport labels and the invariant sets are defined by linear and/or quadratic
functions. Let $\bar{x}=[x\hspace{0.15in}1]^{T},$ $\bar{n}=n+1,$ and let, for
all\ $\left(  i,j,k\right)  \in\mathcal{E},$ a compact description of the set
$X_{i}\cap\Pi_{ji}^{k}$ be given by:\vspace*{-0.1in}%
\begin{equation}
X_{i}\cap\Pi_{ji}^{k}=\left\{  x\in\mathbb{R}^{n}~|~\bar{x}^{T}Q_{ji,l}%
^{k}\bar{x}\geq0,~\bar{x}^{T}R_{ji,m}^{k}\bar{x}=0,~G_{ji}^{k}\bar{x}%
\geq0,\text{~}H_{ji}^{k}\bar{x}=0\right\}  ,\vspace*{-0.1in}%
\label{eq:CompactPI}%
\end{equation}
where $G_{ji}^{k}\in\mathbb{R}^{g_{ji}^{k}\times\bar{n}},$ $H_{ji}^{k}%
\in\mathbb{R}^{h_{ji}^{k}\times\bar{n}},$ $Q_{ji,s}^{k}\in\mathbb{R}^{\bar
{n}\times\bar{n}},$ $s\in S_{ji}^{k},$ and $R_{ji,m}^{k}\in\mathbb{R}^{\bar
{n}\times\bar{n}},$ $m\in M_{ji}^{k},$ where $S_{ji}^{k}$ and $M_{ji}^{k}$ are
some sets. For all $\left(  i,j,k\right)  \in\mathcal{E}$ let the transitions
be given by $T_{ji}^{k}x=A_{ji}^{k}x+B_{ji}^{k}w+E_{ji}^{k},$ where
$w\in\left[  -1,1\right]  ^{q_{ji}^{k}}.$ Before proceeding to the next
theorem, for convenience, we define the following matrices:\vspace
*{-0.0in}{\small
\begin{align}
F_{ji}^{k} &  =\left[
\begin{array}
[c]{llr}%
\vspace*{-0.42in} &  & \\
A_{ji}^{k}\hspace*{0.04in}\vspace*{-0.08in} & B_{ji}^{k}\hspace*{0.04in} &
\multicolumn{1}{l}{E_{ji}^{k}}\\
0_{1\times n}\hspace*{0.04in} & 0_{1\times q_{ji}^{k}}\hspace*{0.04in} &
\multicolumn{1}{l}{1}\\
\vspace*{-0.35in} &  &
\end{array}
\right]  ,\qquad\qquad L_{ji}^{k}=\left[
\begin{array}
[c]{ccc}%
\vspace*{-0.42in} &  & \\
\multicolumn{1}{l}{I_{n}\hspace*{0.04in}\vspace*{-0.08in}} &
\multicolumn{1}{l}{0_{n\times q_{ji}^{k}}} & \multicolumn{1}{l}{0_{n\times1}%
}\\
\multicolumn{1}{l}{0_{1\times n}} & \multicolumn{1}{l}{0_{1\times q_{ji}^{k}}}
& \multicolumn{1}{l}{1}\\
\vspace*{-0.35in} &  &
\end{array}
\right]  \label{eq:FandL}\\[0.1in]
W_{ji}^{k} &  =\left[
\begin{array}
[c]{ccc}%
\vspace*{-0.42in} &  & \\
0_{q_{ji}^{k}\times n} & I_{q_{ji}^{k}} & 0_{q_{ji}^{k}\times1}\\
\vspace*{-0.35in} &  &
\end{array}
\right]  ,\qquad\qquad\hspace{0.03in}K_{ji}^{k}=\left[
\begin{array}
[c]{ccc}%
\vspace*{-0.42in} &  & \\
0_{1\times n} & 0_{1\times q_{ji}^{k}} & 1\\
\vspace*{-0.35in} &  &
\end{array}
\right]  .\label{eq:WandK}%
\end{align}
}%

\[
\vspace*{-1.0in}%
\]
\newpage

\begin{theorem}
\label{Thm:GraphModelLMI}\textbf{SDP-Based Verification of Quasi-Linear Graph
Models.} Consider a program $\mathcal{P}$, and its quasi-linear graph model
$G\left(  \mathcal{N},\mathcal{E}\right)  .$ For $j\in\mathcal{N},$ Let
$\alpha_{j}\succ0$ be a diagonal matrix specifying the
overflow limit and let $\Lambda_{j}=\operatorname{diag}\left\{  \alpha
_{j}^{-2},-1\right\}  .$ Let $\theta_{j\emptyset}^{k}=0$ for all $\left(
\emptyset,j,k\right)  \in\mathcal{E}.$ An overflow runtime error does not
occur during any execution of $\mathcal{P},$ if for all $\left(  i,j,k\right)
\in\mathcal{E}$ the following LMIs are satisfied:\vspace*{-0.10in}%
\begin{align}
\mathcal{F}^{T}P_{j}\mathcal{F}-\theta_{ji}^{k}\mathcal{L}^{T}P_{i}%
\mathcal{L}  & \preceq\mathcal{W}^{T}D\mathcal{W}-\left(  \operatorname{Tr}%
\left(  D\right)  +\mu_{ji}^{k}\right)  \mathcal{K}^{T}\mathcal{K}%
+\operatorname{He}\left(  Y\mathcal{HL+}Z\mathcal{GL}\right)  +\lambda
_{ji}^{k}\label{LGMLMI1}\\[-0.05in]
\mathcal{F}^{T}\Lambda_{j}\mathcal{F}^{T}-\mathcal{L}^{T}P_{i}\mathcal{L}  &
\preceq\mathcal{W}^{T}\widetilde{D}\mathcal{W}-\operatorname{Tr}(\widetilde
{D})\mathcal{K}^{T}\mathcal{K}+\operatorname{He}(\widetilde{Y}\mathcal{HL+}%
\widetilde{Z}\mathcal{GL)}+\widetilde{\lambda}_{ji}^{k}\label{LGMLMI2.}%
\end{align}
\[
\vspace*{-0.75in}
\]
where for all $\left(  i,j,k\right)  \in\mathcal{E}$ we have\vspace*{-0.2in}%
\begin{align}
\lambda_{ji}^{k}  & =\sum\limits_{s}\eta_{ji,s}^{k}\mathcal{L}^{T}Q_{ji,s}%
^{k}\mathcal{L}+\sum\limits_{m}\rho_{ji,m}^{k}\mathcal{L}^{T}R_{ji,m}%
^{k}\mathcal{L}\label{eq1:thm:GMLMI}\\[-0.05in]
\widetilde{\lambda}_{ji}^{k}  & =\sum\limits_{s}\widetilde{\eta}_{ji,s}%
^{k}\mathcal{L}^{T}Q_{ji,s}^{k}\mathcal{L}+\sum\limits_{m}\widetilde{\rho
}_{ji,m}^{k}\mathcal{L}^{T}R_{ji,m}^{k}\mathcal{L}\label{eq2:thm:GMLMI}%
\\[-0.05in]
(Y,\widetilde{Y},Z,\widetilde{Z},D,\widetilde{D})  & =(Y_{ij}^{k}%
,\widetilde{Y}_{ij}^{k},Z_{ij}^{k},\widetilde{Z}_{ij}^{k},D_{ij}%
^{k},\widetilde{D}_{ij}^{k})\label{eq3:thm:GMLMI}\\[-0.05in]
\left(  \mathcal{F},\mathcal{G},\mathcal{H},\mathcal{K},\mathcal{L}%
,\mathcal{W}\right)    & =(F_{ji}^{k},G_{ji}^{k},H_{ji}^{k},K_{ji}^{k}%
,L_{ji}^{k},W_{ji}^{k})\label{eq4:thm:GMLMI}%
\end{align}
\[
\vspace*{-0.75in}
\]
where the $\mathcal{S}$-Procedure multipliers in (\ref{eq1:thm:GMLMI}%
)$-$(\ref{eq3:thm:GMLMI}) are constrained as follows:\vspace*{-0.15in}%
\begin{align}
\eta_{ji,s}^{k},~\widetilde{\eta}_{ji,s}^{k}  & \in\mathbb{R}_{-}%
,\qquad\forall\left(  i,j,k\right)  \in\mathcal{E},\text{ }s\in S_{ji}%
^{k}\label{mult1}\\[-0.05in]
\rho_{ji,m}^{k},~\widetilde{\rho}_{ji,m}^{k}  & \in\mathbb{R},\qquad
\forall\left(  i,j,k\right)  \in\mathcal{E},\text{ }m\in M_{ji}^{k}%
\label{mult2}\\[-0.05in]
Y_{ji}^{k},~\widetilde{Y}_{ji}^{k}  & \in\mathbb{R}^{(n+q_{ji}^{k}+1)\times
h_{ji}^{k}},\qquad\forall\left(  i,j,k\right)  \in\mathcal{E}\label{mult3}\\[-0.05in]
Z_{ji}^{k},~\widetilde{Z}_{ji}^{k}  & \in\mathbb{R}_{-}^{(n+q_{ji}%
^{k}+1)\times g_{ji}^{k}},\qquad\forall\left(  i,j,k\right)  \in
\mathcal{E}\label{mult4}\\[-0.05in]
D_{ij}^{k},~\widetilde{D}_{ij}^{k}  & \in\mathbb{D}_{+}^{q_{ji}^{k}\times
q_{ji}^{k}},\qquad\forall\left(  i,j,k\right)  \in\mathcal{E},\label{mult5}%
\end{align}
\[
\vspace*{-0.75in}
\]
and the matrices in (\ref{eq4:thm:GMLMI}) are defined in (\ref{eq:CompactPI}%
)$-$(\ref{eq:WandK}). In addition, if for every simple cycle $\mathcal{C}\in
G,$ we have:$\vspace*{-0.15in}$%
\begin{equation}
\theta\left(  \mathcal{C}\right)  +\mu\left(  \mathcal{C}\right)  >1,\text{
\qquad\ }\mu\left(  \mathcal{C}\right)  >0,\vspace*{-0.15in}%
\end{equation}
with $\theta\left(  \mathcal{C}\right)  $ and $\mu\left(  \mathcal{C}\right)
$ defined as in (\ref{MultiplicativeThetanDefs}), then $\mathcal{P}$
terminates in at most $T_{u}$ iterations where\vspace*{-0.05in}%
\[
T_{u}=\sum_{\mathcal{C}\in G:\theta\left(  \mathcal{C}\right)  \neq1}%
\frac{\log~\left(  \theta\left(  \mathcal{C}\right)  +\mu\left(
\mathcal{C}\right)  -1\right)  -\log~\mu\left(  \mathcal{C}\right)  }%
{\log\theta\left(  \mathcal{C}\right)  }+\sum_{\mathcal{C}\in G:\theta\left(
\mathcal{C}\right)  =1}\frac{1}{\mu\left(  \mathcal{C}\right)  }.
\]

\end{theorem}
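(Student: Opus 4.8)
The plan is to obtain Theorem~\ref{Thm:GraphModelLMI} as the quadratic, semidefinite-programming realization of Corollary~\ref{SafetyGraphCor1}, taken in the overflow specialization described in the remark following that corollary (unsafe sets $X_{i-}=\{x:\|\alpha_i^{-1}x\|_\infty>1\}$, $h=f$, $q=2$), in exactly the way Lemma~\ref{MIPL_Invariance_Lemma} realizes the MILM invariance inequality. I would fix the node-wise quadratic candidate $V(i,x)=\sigma_i(x)=\bar x^{T}P_i\bar x$ with $\bar x=[x^{T}\ 1]^{T}$, and prove two things: (a) that LMI~(\ref{LGMLMI1}), together with the sign restrictions~(\ref{mult1})--(\ref{mult5}) on the $\mathcal S$-procedure multipliers, is sufficient for the arc-wise difference inequality~(\ref{arcwiselyap}) on every edge $(i,j,k)$; and (b) that LMI~(\ref{LGMLMI2.}) is sufficient for the overflow form of~(\ref{SGC2}). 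The unreachability and finite-time-termination conclusions then follow directly from Corollary~\ref{SafetyGraphCor1}.

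For (a), fix an edge $(i,j,k)$ and lift to the extended vector $\xi=[x^{T}\ w^{T}\ 1]^{T}$ with $w\in[-1,1]^{q_{ji}^{k}}$. The matrices in~(\ref{eq:FandL})--(\ref{eq:WandK}) are built precisely so that $[(T_{ji}^{k}x)^{T}\ 1]^{T}=\mathcal F\xi$, $\bar x=\mathcal L\xi$, $w=\mathcal W\xi$ and $1=\mathcal K\xi$, with $(\mathcal F,\mathcal G,\mathcal H,\mathcal K,\mathcal L,\mathcal W)$ as in~(\ref{eq4:thm:GMLMI}); hence
\[
\sigma_j(T_{ji}^{k}x)-\theta_{ji}^{k}\sigma_i(x)+\mu_{ji}^{k}=\xi^{T}\left(\mathcal F^{T}P_j\mathcal F-\theta_{ji}^{k}\mathcal L^{T}P_i\mathcal L+\mu_{ji}^{k}\mathcal K^{T}\mathcal K\right)\xi .
\]
Using the compact description~(\ref{eq:CompactPI}), membership $x\in X_i\cap\Pi_{ji}^{k}$ together with $w\in[-1,1]^{q_{ji}^{k}}$ is equivalent to $\xi^{T}\mathcal L^{T}Q_{ji,s}^{k}\mathcal L\,\xi\ge0$, $\xi^{T}\mathcal L^{T}R_{ji,m}^{k}\mathcal L\,\xi=0$, $G_{ji}^{k}\mathcal L\,\xi\ge0$, $H_{ji}^{k}\mathcal L\,\xi=0$, and $(\mathcal W\xi)_l^{2}\le(\mathcal K\xi)^{2}$ for every $l$. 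Applying the $\mathcal S$-procedure (Section~\ref{Section:S-Procedure}), a sufficient condition for the quadratic form above to be $\le0$ on this set is that it be dominated, as a matrix inequality, by a combination of the constraint forms weighted with multipliers whose signs make every term nonpositive on the set: the quadratic inequalities by $\eta_{ji,s}^{k}\in\mathbb R_-$, the quadratic equalities by free $\rho_{ji,m}^{k}$, the linear equalities by free $Y_{ji}^{k}$ (entering as $\operatorname{He}(Y\mathcal H\mathcal L)$), the linear inequalities by entrywise-nonpositive $Z_{ji}^{k}$ (entering as $\operatorname{He}(Z\mathcal G\mathcal L)$), and the box constraints $w_l^{2}-1\le0$ by a diagonal $D_{ij}^{k}\succeq0$ (contributing $\mathcal W^{T}D\mathcal W-\operatorname{Tr}(D)\mathcal K^{T}\mathcal K$). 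Rearranged, this sufficient LMI is exactly~(\ref{LGMLMI1}), with $\lambda_{ji}^{k}$ as in~(\ref{eq1:thm:GMLMI}) collecting the $\mathcal S$-procedure terms; evaluating its right-hand side at any admissible $\xi$ and invoking~(\ref{mult1})--(\ref{mult5}) yields~(\ref{arcwiselyap}). The convention $\theta_{j\emptyset}^{k}=0$ turns the edge-out-of-$\emptyset$ instances of~(\ref{LGMLMI1}) into $\sigma_j(\cdot)\le-\mu_{j\emptyset}^{k}\le0$ on the image of $X_\emptyset$, which (with the trivial choice for $\sigma_\emptyset$) supplies~(\ref{SGC1}).

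For (b), the same lifting gives, with $\Lambda_j=\operatorname{diag}\{\alpha_j^{-2},-1\}$,
\[
\xi^{T}\left(\mathcal F^{T}\Lambda_j\mathcal F-\mathcal L^{T}P_i\mathcal L\right)\xi=\left\|\alpha_j^{-1}T_{ji}^{k}x\right\|_2^{2}-1-\sigma_i(x),
\]
(reading the displayed $\mathcal F^{T}\Lambda_j\mathcal F^{T}$ in~(\ref{LGMLMI2.}) as $\mathcal F^{T}\Lambda_j\mathcal F$, which is what the dimensions force), and the identical $\mathcal S$-procedure argument with the tilde multipliers shows that~(\ref{LGMLMI2.}) forces $\|\alpha_j^{-1}T_{ji}^{k}x\|_2^{2}-1\le\sigma_i(x)$ for all admissible $(x,w)$. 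Since $\|\cdot\|_\infty\le\|\cdot\|_2$, this is the overflow form of~(\ref{SGC2}) (equivalently~(\ref{Three3}) with $h=f$, $q=2$, positive multiplier $p(x)=1+\|\alpha_j^{-1}x_+\|_2$), so $X_{i-}$ is unreachable by Corollary~\ref{SafetyGraphCor1}. The same bound shows each $\sigma_i$ is bounded on $X_i\setminus X_{i-}$ — there $x$ ranges over the bounded set $\{\|\alpha_i^{-1}x\|_\infty\le1\}$ and $\sigma_i$ is a fixed quadratic — so $\|\sigma(\mathcal C)\|_\infty<\infty$ for every simple cycle; the finite-time-termination assertion and the bound $T_u$ are then read off from the termination part of Corollary~\ref{SafetyGraphCor1}, the hypothesis $\theta(\mathcal C)+\mu(\mathcal C)>1$, $\mu(\mathcal C)>0$ serving as a convenient (in general slightly stronger) sufficient form of~(\ref{MultiplicativeTheta}).

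I expect the only real work to be the sign bookkeeping in~(a)--(b): verifying that each of the five multiplier families enters~(\ref{LGMLMI1})--(\ref{LGMLMI2.}) with the sign that renders its contribution nonpositive on $(X_i\cap\Pi_{ji}^{k})\times[-1,1]^{q_{ji}^{k}}$ — the $\operatorname{He}(Z\mathcal G\mathcal L)$ term, built from products of the linear passport inequalities with the constant component $1$, being the most delicate point — and checking that the simplified cycle-condition/$T_u$ statement is consistent with the $\|\sigma(\mathcal C)\|_\infty$-weighted version in Corollary~\ref{SafetyGraphCor1}. Conceptually there is nothing new: the theorem transcribes Corollary~\ref{SafetyGraphCor1} through the quadratic $\mathcal S$-procedure, in parallel with Lemma~\ref{MIPL_Invariance_Lemma} for MILMs.
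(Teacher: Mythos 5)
Your overall route is the paper's: the theorem is obtained by transcribing Corollary \ref{SafetyGraphCor1} through the quadratic $\mathcal{S}$-procedure exactly as Lemma \ref{MIPL_Invariance_Lemma} does for MILMs, with the convention $\theta_{j\emptyset}^{k}=0$ supplying (\ref{SGC1}) and the second LMI supplying the overflow form of (\ref{SGC2}); parts (a) and (b) of your argument match the paper's (much terser) proof, including the reading of the typo $\mathcal{F}^{T}\Lambda_{j}\mathcal{F}^{T}$ as $\mathcal{F}^{T}\Lambda_{j}\mathcal{F}$.

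The one genuine gap is in the finite-time-termination part. You establish only $\left\Vert \sigma\left(\mathcal{C}\right)\right\Vert_{\infty}<\infty$ (a fixed quadratic on a bounded set) and then assert that the hypothesis $\theta\left(\mathcal{C}\right)+\mu\left(\mathcal{C}\right)>1$ is ``in general slightly stronger'' than (\ref{MultiplicativeTheta}). That is false without a bound of $1$ on $\left\Vert \sigma\left(\mathcal{C}\right)\right\Vert_{\infty}$: when $\theta\left(\mathcal{C}\right)<1$ and $\left\Vert \sigma\left(\mathcal{C}\right)\right\Vert_{\infty}>1$, the condition $\theta+\mu>1$ does not imply $\left(\theta-1\right)\left\Vert \sigma\right\Vert_{\infty}+\mu>0$ (take $\theta=0.5$, $\mu=0.6$, $\left\Vert \sigma\right\Vert_{\infty}=2$), and for the same reason the displayed $T_{u}$, which replaces $\left(\theta-1\right)\left\Vert \sigma\right\Vert_{\infty}+\mu$ by $\theta+\mu-1$, would fail to be an upper bound. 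The missing observation --- and it is precisely the one item the paper's proof records explicitly --- is that (\ref{LGMLMI2.}) itself yields $\sigma_{i}\left(x\right)\geq\left\Vert \alpha_{j}^{-1}T_{ji}^{k}x\right\Vert_{2}^{2}-1\geq-1$ on the relevant sets, which combined with the nonpositivity of the invariant along trajectories gives $\left\Vert \sigma\left(\mathcal{C}\right)\right\Vert_{\infty}\leq1$; with that bound in hand, both the simplified cycle condition and the simplified $T_{u}$ follow from Corollary \ref{SafetyGraphCor1} for either sign of $\log\theta\left(\mathcal{C}\right)$. Everything else in your proposal is sound.
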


\vspace*{0.15in}

\begin{proof}
See Appendix II.
\end{proof}

\textbf{Remarks}

\begin{enumerate}
\item A simpler, computationally less demanding but possibly more conservative
variation of Theorem \ref{Thm:GraphModelLMI} can be formulated by replacing
(\ref{LGMLMI2.}) with\vspace*{-0.2in}
\begin{align}
\Lambda_{i}-P_{i}\preceq0,\text{\qquad}\forall i\in\mathcal{N}\label{LGMLMI2Simple}.
\end{align}
\[
\vspace{-0.75in}
\]
Subsequently, all the associated multipliers (i.e., the variables denoted by
\textit{tilde} characters in (\ref{mult1})$-$(\ref{mult5})) will be eliminated.

\item It is possible to use the same set of multipliers (i.e., the variables
defined in (\ref{mult1})$-$(\ref{mult5})) across all edges $\left(
i,j,k\right)  \in\mathcal{E}.$ This approach would also result in a
computationally less demanding but possibly conservative formulation.

\item To improve the chances of finding stronger invariants, quadratic
equality and inequality constraints can be generated from linear constraints
and added to the set of quadratic constraints in characterization of the
set\ $X_{i}\cap\Pi_{ji}^{k}$ (cf. equation \ref{eq:CompactPI})$.$ Given scalar
linear constraints of the form $G_{ji,r}^{k}\bar{x}\geq0,$ $r\in
\mathbb{Z}\left(  1,g_{ji}^{k}\right)  $ and/or $H_{ji,s}^{k}\bar{x}=0,$
$s\in\mathbb{Z}\left(  1,h_{ji}^{k}\right)  ,$ define\ $R_{s_{1}s_{2}%
}=H_{s_{1}}^{T}H_{s_{2}}+H_{s_{2}}^{T}H_{s_{1}},$ $R_{s_{1}r_{1}}=H_{s_{1}%
}^{T}G_{r_{1}}+G_{r_{1}}^{T}H_{s_{1}},$ and $Q_{r_{1}r_{2}}=G_{r_{1}}%
^{T}G_{r_{2}}+G_{r_{2}}^{T}G_{r_{1}}.$ Then, we have $\bar{x}^{T}R_{s_{1}%
s_{2}}\bar{x}=0,$ $\bar{x}^{T}R_{s_{1}r_{1}}\bar{x}=0,$ and $\bar{x}%
^{T}Q_{r_{1}r_{2}}\bar{x}\geq0,$ for all $r_{1},r_{2}\in\mathbb{Z}\left(
1,g_{ji}^{k}\right)  $ and $s_{1},s_{2}\in\mathbb{Z}\left(  1,h_{ji}%
^{k}\right)  .$ 
\item Replacing \eqref{LGMLMI2Simple} with $z\Lambda_{i}-P_{i}\preceq0,$ where $z>0$ is a decision variable can improve the scaling and numerical conditioning of the associated semidefinite program, resulting in stronger invariants and improved analysis. The statement of the theorem on absence of overflow remains unchanged under this modification. Only the upper bound on the maximum number of iterations needs to be adjusted accordingly since we now have $\left\Vert \sigma\left(  \mathcal{C}\right) \right\Vert _{\infty}\leq z$ instead of $\left\Vert \sigma\left(  \mathcal{C}\right)
\right\Vert _{\infty}\leq1.$ The same scaling arguments apply to \eqref{LGMLMI2.}.
\item MATLAB implementations of Theorem \ref{Thm:GraphModelLMI}\ and a few
variations of it can be found in \cite{MVichSite}.
\end{enumerate}

\bigskip

\section{Case Studies\label{sec:casestudy}\vspace*{-0.1in}}

\subsection{Euclidean Division\label{sec:casestudy1}}

In this section we apply the framework to the analysis of Program 4 displayed
below.\vspace*{-0.11in}{\small
\begin{gather*}%
\begin{tabular}
[c]{|l|}\hline
\hspace*{-0.16in}%
\begin{tabular}
[c]{l}%
$%
\begin{array}
[c]{ll}
& \mathrm{/\ast\ EuclideanDivision.c\ \ast/}\\
& \vspace*{-0.45in}\\
\mathrm{F}0:\hspace*{0.1in} & \mathrm{int~IntegerDivision~(~int~dd,int~dr~)}\\
& \vspace*{-0.45in}\\
\mathrm{F}1:\hspace*{0.1in} & \mathrm{\{int~q=\{0\};~int~r=\{dd\};}\\
& \vspace*{-0.45in}\\
\mathrm{F}2:\hspace*{0.1in} & \mathrm{{while}\text{ }{{(r>=dr)~}{\{}}}\\
& \vspace*{-0.45in}\\
\mathrm{F}3:\hspace*{0.1in} & \mathrm{{{\hspace{0.24in}{q=q+1;}}}}\\
& \vspace*{-0.45in}\\
\mathrm{F}4:\hspace*{0.1in} & \mathrm{{{\hspace{0.24in}r=r-dr;}}}\\
& \vspace*{-0.45in}\\
\mathrm{F\hspace*{-0.04in}}\Join:\hspace*{0.1in} & \mathrm{return~r;\}}%
\end{array}
$\\
$%
\begin{array}
[c]{ll}%
\mathrm{L}0:\hspace*{0.1in} & \mathrm{int~main~(~int~X,int~Y~)~\{}\\
& \vspace*{-0.45in}\\
\mathrm{L}1:\hspace*{0.1in} & \mathrm{int~rem=\{0\};}\\
& \vspace*{-0.45in}\\
\mathrm{L}2:\hspace*{0.1in} & \mathrm{while~(Y~>~0)~\{}\\
& \vspace*{-0.45in}\\
\mathrm{L}3:\hspace*{0.1in} & \mathrm{{{\hspace{0.24in}}%
{rem=IntegerDivision~(X~,~Y);}}}\\
& \vspace*{-0.45in}\\
\mathrm{L}4:\hspace*{0.1in} & \mathrm{{{\hspace{0.24in}X=Y;}}}\\
& \vspace*{-0.45in}\\
\mathrm{L}5:\hspace*{0.1in} & \mathrm{{{\hspace{0.24in}Y=rem;}}}\\
& \vspace*{-0.45in}\\
\mathrm{L\hspace*{-0.04in}}\Join:\hspace*{0.1in} & \mathrm{return~X;\}\}}%
\end{array}
$%
\end{tabular}
\\\hline
\end{tabular}
\raisebox{-1.5056in}{\includegraphics[
height=3.2in,
width=4.1in
]%
{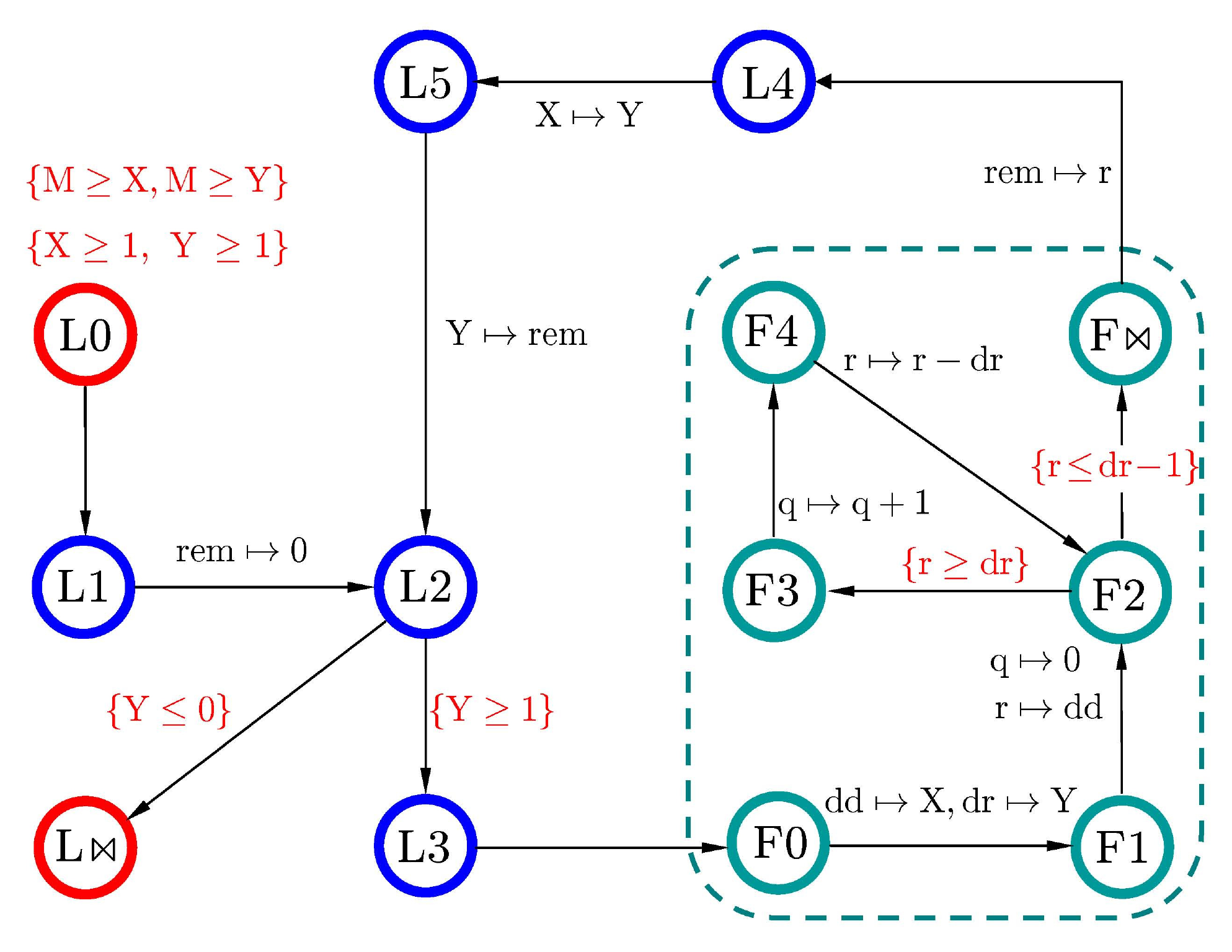}%
}
\\[0.01in]%
\begin{array}
[c]{c}%
\begin{array}
[c]{cccccc}
&  &  & \text{Program 4: Euclidean Division and its Graph Model} &
\vspace*{-0.35in} &
\end{array}
\end{array}
\end{gather*}
}Program 4 takes two positive integers $\mathrm{X}\in\left[  1,\mathrm{M}%
\right]  $ and $\mathrm{Y}\in\left[  1,\mathrm{M}\right]  $ as the input and
returns their greatest common divisor by implementing the Euclidean Division
algorithm. Note that the \textsc{Main} function in Program 4 uses the
\textsc{IntegerDivision} program (Program 1).\vspace*{-0.05in}

\subsubsection{Global Analysis\vspace*{-0.05in}}

A global model can be constructed by embedding the dynamics of the
\textsc{IntegerDivision} program within the dynamics of \textsc{Main}. A
labeled graph model is shown alongside the text of the program. This model has
a state space $X=\mathcal{N}\times\left[  -\mathrm{M},\mathrm{M}\right]
^{7},$ where $\mathcal{N}$ is the set of nodes as shown in the graph, and the
global state $\mathrm{x=}\left[  \mathrm{X,\ Y,\ rem,\ dd,\ dr,\ q,\ r}%
\right]  $ is an element of the hypercube $\left[  -\mathrm{M},\mathrm{M}%
\right]  ^{7}.$ A \textit{reduced} graph model can be obtained by combining
the effects of consecutive transitions and relabeling the reduced graph model
accordingly. While analysis of the full graph model is possible, working with
a \textit{reduced} model is computationally advantageous. Furthermore, mapping
the properties of the reduced graph model to the original model is
algorithmic. Interested readers may consult \cite{RoozbehaniHSCC} for further
elaboration on this topic. For the graph model of Program 4, a reduced model
can be obtained by first eliminating nodes $\mathrm{F}_{\Join},$
$\mathrm{L}_{4},$ $\mathrm{L}_{5},$ $\mathrm{L}_{3},$ $\mathrm{F}_{0},$
$\mathrm{F}_{1},$ $\mathrm{F}_{3},$ $\mathrm{F}_{4},$ and $\mathrm{L}_{1},$
(Figure \ref{fig:redmodel} Left) and composing the transition and passport
labels. Node $\mathrm{L}_{2}$ can be eliminated as well to obtain a further
reduced model with only three nodes: $\mathrm{F}_{2},$ $\mathrm{L}_{0},$
$\mathrm{L}_{\Join}.$ (Figure \ref{fig:redmodel} Right).
\begin{figure}
[tbh]
\begin{center}
\includegraphics[
height=1.5in,
width=5in
]%
{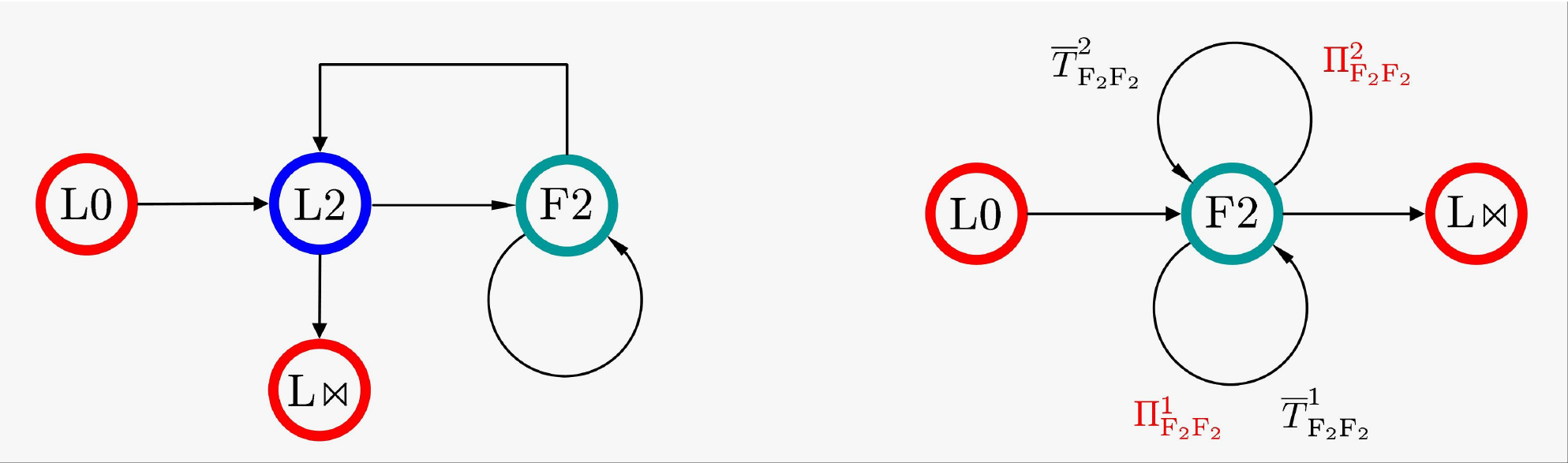}%
\caption{Two reduced models of the graph model of Program 4.}%
\label{fig:redmodel}%
\end{center}
\vspace{-0.25in}
\end{figure}
This is the model
that we will analyze. The passport and transition labels associated with the
reduced model are as follows:%
\vspace{-0.15in}
\begin{align*}
&
\begin{array}
[c]{rllcrrll}%
\overline{T}_{\mathrm{F}2\mathrm{F}2}^{1} & \hspace*{-0.06in}:\hspace
*{-0.06in} & \mathrm{x}\mapsto\left[
\mathrm{X,\ Y,\ rem,\ dd,\ dr,\ q+1,\ r-dr}\right]  & \vspace*{-0.05in} &  &
\overline{T}_{\mathrm{F}2\mathrm{F}2}^{2} & \hspace*{-0.06in}:\hspace
*{-0.06in} & \mathrm{x}\mapsto\left[  \mathrm{Y,\ r,\ r,\ Y,\ r,\ 0,\ Y}%
\right] \\
\overline{T}_{\mathrm{L}0\mathrm{F}2} & \hspace*{-0.06in}:\hspace*{-0.06in} &
\mathrm{x}\mapsto\left[  \mathrm{X,\ Y,\ 0,\ X,\ Y,\ 0,\ X}\right]  &
\vspace*{-0.05in} &  & \overline{T}_{\mathrm{F}2\mathrm{L}\Join} &
\hspace*{-0.06in}:\hspace*{-0.06in} & \mathrm{x}\mapsto\left[
\mathrm{Y,\ r,\ r,\ dd,\ dr,\ q,\ r}\right]
\end{array}
\\
&
\begin{array}
[c]{rllcrrlllllll}%
\Pi_{\mathrm{F}2\mathrm{F}2}^{2} & \hspace*{-0.06in}:\hspace*{-0.06in} &
\left\{  \mathrm{x~|~1\leq r\leq dr-1}\right\}  &  &  & \Pi_{\mathrm{F}%
2\mathrm{F}2}^{1} & \hspace*{-0.06in}:\hspace*{-0.06in}\vspace*{-0.05in} &
\left\{  \mathrm{x~|~r\geq dr}\right\}  &  &  & \Pi_{\mathrm{F}2\mathrm{L}%
\Join} & \hspace*{-0.06in}:\hspace*{-0.06in} & \left\{  \mathrm{x~|~r\leq
dr-1,}\text{ }\mathrm{r\leq0}\right\}
\end{array}
\end{align*}
\vspace*{-0.45in}

\noindent Finally, the invariant sets that can be readily included in the
graph model (cf. Section \ref{sec:constabst}) are:\vspace*{-0.25in}%
\[
X_{\mathrm{L}0}=\left\{  \mathrm{x}~|~\mathrm{M}\geq\mathrm{X},~\mathrm{M}%
\geq\mathrm{Y,~X}\geq\mathrm{1},~\mathrm{Y}\geq\mathrm{1}\right\}
,\text{~~}X_{\mathrm{F}2}=\left\{  \mathrm{x}~|~\mathrm{dd}=\mathrm{X}%
,~\mathrm{dr}=\mathrm{Y}\right\}  ,\text{~~}X_{\mathrm{L}\Join}=\left\{
\mathrm{x}~|~\mathrm{Y}\leq\mathrm{0}\right\}  .\vspace*{-0.2in}%
\]
We are interested in generating certificates of termination and absence of
overflow. First, by recursively searching for linear invariants we are able to
establish simple lower bounds on all variables in just two rounds (the
properties established in the each round are added to the model and the next
round of search begins). For instance, the property $\mathrm{X}\geq\mathrm{1}$
is established only after $\mathrm{Y}\geq\mathrm{1}$ is established. These
results, which were obtained by applying the first part of Theorem
\ref{Thm:Graphnodepoly} (equations (\ref{D1D1})-(\ref{D1D0}) only)\ with
linear functionals are summarized in Table\ II.\vspace*{-0.4in}

{\scriptsize
\begin{gather*}
\text{\textsc{TABLE} \textsc{II}}\\%
\begin{tabular}
[c]{|r|c|c|c|c|c|c|c|}\hline
$\text{Property}$ & $\mathrm{q}\geq\mathrm{0}$ & $\mathrm{Y}\geq\mathrm{1}$ &
$\mathrm{\mathrm{dr}}\geq\mathrm{\mathrm{1}}$ & $\mathrm{\mathrm{rem}}%
\geq\mathrm{0}$ & $\mathrm{\mathrm{dd}}\geq\mathrm{\mathrm{1}}$ &
$\mathrm{X}\geq\mathrm{1}$ & $\mathrm{r}\geq\mathrm{0}$\\\hline
$\text{Proven in Round}$ & $\mathrm{\mathrm{I}}$ & $\mathrm{\mathrm{I}}$ &
$\mathrm{\mathrm{I}}$ & $\mathrm{\mathrm{I}}$ & $\mathrm{\mathrm{II}}$ &
$\mathrm{\mathrm{II}}$ & $\mathrm{\mathrm{II}}$\\\hline
$\sigma_{\mathrm{F}_{2}}\left(  \mathrm{x}\right)  =$ & $-\mathrm{q}$ &
$\mathrm{1}-\mathrm{Y}$ & $\mathrm{1}-\mathrm{dr}$ & $-\mathrm{\mathrm{rem}}$
& $\mathrm{1}-\mathrm{dd}$ & $\mathrm{1}-\mathrm{X}$ & $-\mathrm{r}$\\\hline
$\left(  \theta_{\mathrm{F}2\mathrm{F}2}^{1},\mu_{\mathrm{F}2\mathrm{F}2}%
^{1}\right)  $ & $\left(  1,1\right)  $ & $\left(  1,0\right)  $ & $\left(
1,0\right)  $ & $\left(  1,0\right)  $ & $\left(  0,0\right)  $ & $\left(
0,0\right)  $ & $\left(  0,0\right)  $\\\hline
$\left(  \theta_{\mathrm{F}2\mathrm{F}2}^{2},\mu_{\mathrm{F}2\mathrm{F}2}%
^{2}\right)  $ & $\left(  0,0\right)  $ & $\left(  0,0\right)  $ & $\left(
0,0\right)  $ & $\left(  0,0\right)  $ & $\left(  0,0\right)  $ & $\left(
0,0\right)  $ & $\left(  0,0\right)  $\\\hline
\end{tabular}
\end{gather*}
} We then add these properties to the node invariant sets to obtain stronger
invariants that certify FTT and boundedness of all variables in $\left[
-\mathrm{M},\mathrm{M}\right]  $. By applying Theorem \ref{Thm:Graphnodepoly}
and SOS programming using YALMIP \cite{Lofberg2004}, the following invariants
are found\footnote{Different choices of polynomial degrees for the Lyapunov
invariant function and the multipliers, as well as different choices for
$\theta,\mu$ and different rounding schemes lead to different invariants. Note
that rounding is not essential.} (after post-processing, rounding the
coefficients, and reverifying):\vspace*{-0.35in}%

\[%
\begin{array}
[c]{llll}%
\sigma_{1\mathrm{F}_{2}}\left(  \mathrm{x}\right)  =0.4\left(  \mathrm{Y}%
-\mathrm{M}\right)  \left(  2+\mathrm{M}-\mathrm{r}\right)  &  &  &
\sigma_{2\mathrm{F}_{2}}\left(  \mathrm{x}\right)  =\left(  \mathrm{q}%
\times\mathrm{Y}+\mathrm{r}\right)  ^{2}-\mathrm{M}^{2}\\
\sigma_{3\mathrm{F}_{2}}\left(  \mathrm{x}\right)  =\left(  \mathrm{q}%
+\mathrm{r}\right)  ^{2}-\mathrm{M}^{2} &  &  & \sigma_{4\mathrm{F}_{2}%
}\left(  \mathrm{x}\right)  =0.1\left(  \mathrm{Y}-\mathrm{M}+5\mathrm{Y}%
\times\mathrm{M}+\mathrm{Y}^{2}-6\mathrm{M}^{2}\right) \\
\sigma_{5\mathrm{F}_{2}}\left(  \mathrm{x}\right)  =\mathrm{Y}+\mathrm{r}%
-2\mathrm{M}+\mathrm{Y}\times\mathrm{M}-\mathrm{M}^{{\normalsize 2}} &  &  &
\sigma_{6\mathrm{F}_{2}}\left(  \mathrm{x}\right)  =\mathrm{r}\times
\mathrm{Y}+\mathrm{Y}-\mathrm{M}^{2}-\mathrm{M}%
\end{array}
\]
The properties proven by these invariants are summarized in the Table III. The
specifications that the program terminates and that $\mathrm{x}\in\left[
-\mathrm{M},\mathrm{M}\right]  ^{7}$ for all initial conditions \textrm{X,~Y}%
$\in\left[  1,\mathrm{M}\right]  ,$ could not be established in one shot, at
least when trying polynomials of degree $d\leq4.$ For instance, $\sigma
_{1\mathrm{F}_{2}}$ certifies boundedness of all the variables except
$\mathrm{q},$ while $\sigma_{2\mathrm{F}_{2}}$ and $\sigma_{3\mathrm{F}_{2}}$
which certify boundedness of all variables including $\mathrm{q}$ do not
certify FTT. Furthermore, boundedness of some of the variables is established
in round II, relying on boundedness properties proven in round I. Given
$\sigma\left(  \mathrm{x}\right)  \leq0$ (which is found in round I), second
round verification can be done by searching for a strictly positive polynomial
$p\left(  \mathrm{x}\right)  $ and a nonnegative polynomial $q\left(
\mathrm{x}\right)  \geq0$ satisfying:\vspace*{-0.15in}%
\begin{equation}
q\left(  \mathrm{x}\right)  \sigma\left(  \mathrm{x}\right)  -p\left(
\mathrm{x}\right)  (\left(  \overline{T}\mathrm{x}\right)  _{i}^{2}%
-\mathrm{M}^{2})\geq0,\text{\qquad}\overline{T}\in\{\overline{T}%
_{\mathrm{F}2\mathrm{F}2}^{1},\overline{T}_{\mathrm{F}2\mathrm{F}2}%
^{2}\}\vspace*{-0.15in}\label{eq:roundII}%
\end{equation}
where the inequality (\ref{eq:roundII}) is further subject to boundedness
properties established in round I, as well as the usual passport conditions
and basic invariant set conditions.\vspace*{-0.3in}

{\scriptsize
\begin{gather*}
\text{\textsc{TABLE\ III}}\\%
\begin{tabular}
[c]{|r|c|c|c|c|}\hline
Invariant $\sigma_{\mathrm{F}_{2}}\left(  \mathrm{x}\right)  =\hspace
*{-0.02in}$ & $\sigma_{1\mathrm{F}_{2}}\left(  \mathrm{x}\right)  $ &
$\hspace*{-0.02in}\sigma_{2\mathrm{F}_{2}}\left(  \mathrm{x}\right)
,\sigma_{3\mathrm{F}_{2}}\left(  \mathrm{x}\right)  \hspace*{-0.02in}$ &
$\sigma_{4\mathrm{F}_{2}}\left(  \mathrm{x}\right)  $ & $\sigma_{5\mathrm{F}%
_{2}}\left(  \mathrm{x}\right)  ,\sigma_{6\mathrm{F}_{2}}\left(
\mathrm{x}\right)  $\\\hline
$%
\begin{array}
[c]{c}%
\left(  \theta_{\mathrm{F}2\mathrm{F}2}^{1},\mu_{\mathrm{F}2\mathrm{F}2}%
^{1}\right)  \smallskip
\end{array}
\hspace*{-0.02in}$ & $\left(  1,0\right)  $ & $\left(  1,0\right)  $ &
$\left(  1,0\right)  $ & $\left(  1,1\right)  $\\\hline
$%
\begin{array}
[c]{c}%
\left(  \theta_{\mathrm{F}2\mathrm{F}2}^{2},\mu_{\mathrm{F}2\mathrm{F}2}%
^{2}\right)  \smallskip
\end{array}
\hspace*{-0.02in}$ & $\left(  1,0.8\right)  $ & $\left(  0,0\right)  $ &
$\left(  1,0.7\right)  $ & $\left(  1,1\right)  $\\\hline
\multicolumn{1}{|l|}{Round I: $%
\begin{array}
[c]{c}%
\mathrm{x}_{i}^{2}\leq\mathrm{M}^{2}\text{ for }\mathrm{x}_{i}\mathrm{=}%
\smallskip\hspace*{-0.08in}%
\end{array}
\hspace*{-0.02in}$} & $\hspace*{-0.02in}\mathrm{Y,X,r,dr,rem,dd}%
\hspace*{-0.02in}$ & $\hspace*{-0.02in}\mathrm{q,Y,dr,rem}\hspace*{-0.02in}$ &
$\hspace*{-0.02in}\mathrm{Y,X,r,dr,rem,dd}\hspace*{-0.02in}$ & $\hspace
*{-0.02in}\mathrm{Y,dr,rem}\hspace*{-0.02in}$\\\hline
\multicolumn{1}{|l|}{Round II:$%
\begin{array}
[c]{c}%
\mathrm{x}_{i}^{2}\leq\mathrm{M}^{2}\text{ for }\mathrm{x}_{i}\mathrm{=}%
\smallskip\hspace*{-0.08in}%
\end{array}
\hspace*{-0.02in}$} &  & $\mathrm{X,r,dd}$ &  & $\mathrm{X,r,dd}$\\\hline
Certificate for FTT$\hspace*{-0.02in}$ & NO & NO & NO & YES, $T_{u}%
=2\mathrm{M}^{2}$\\\hline
\end{tabular}
\end{gather*}
} In conclusion, $\sigma_{2\mathrm{F}_{2}}\left(  \mathrm{x}\right)  $ or
$\sigma_{3\mathrm{F}_{2}}\left(  \mathrm{x}\right)  $ in conjunction with
$\sigma_{5\mathrm{F}_{2}}\left(  \mathrm{x}\right)  $ or $\sigma
_{6\mathrm{F}_{2}}\left(  \mathrm{x}\right)  $ prove finite-time termination
of the algorithm, as well as boundedness of all variables within $\left[
-\mathrm{M},\mathrm{M}\right]  $ for all initial conditions $\mathrm{X}%
,\mathrm{Y}\in\left[  1,\mathrm{M}\right]  ,$ for any $\mathrm{M}%
\geq1\mathrm{.}$ The provable bound on the number of iterations certified by
$\sigma_{5\mathrm{F}_{2}}\left(  \mathrm{x}\right)  $ and $\sigma
_{6\mathrm{F}_{2}}\left(  \mathrm{x}\right)  $ is $T_{u}=2\mathrm{M}^{2}$
(Corollary \ref{SafetyGraphCor1})$.$ If we settle for more conservative
specifications, e.g., $\mathrm{x}\in\left[  -k\mathrm{M},k\mathrm{M}\right]
^{7}$ for all initial conditions $\mathrm{X},\mathrm{Y}\in\left[
1,\mathrm{M}\right]  $ and sufficiently large $k,$ then it is possible to
prove the properties in one shot. We show this in the next section.

\subsubsection{MIL-GH Model}

For comparison, we also constructed the MIL-GH model associated with the
reduced graph in Figure \ref{fig:redmodel}. The corresponding matrices are
omitted for brevity, but details of the model along with executable Matlab
verification codes can be found in\ \cite{MVichSite}. The verification theorem
used in this analysis is an extension of Theorem
\ref{MILP_Correctness_Theorem} to analysis of MIL-GHM for specific numerical
values of $\mathrm{M},$ though it is certainly possible to perform this
modeling and analysis exercise for parametric bounded values of $\mathrm{M.}$
The analysis using the MIL-GHM is in general more conservative than SOS
optimization over the graph model presented earlier. This can be attributed to
the type of relaxations proposed (similar to those used in Lemma
\ref{MILP_Invariance_LMI}) for analysis of MILMs and MIL-GHMs. The benefits
are simplified analysis at a typically much less computational cost. The
certificate obtained in this way is a single quadratic function (for each
numerical value of $\mathrm{M}$), establishing a bound $\gamma\left(
\mathrm{M}\right)  $ satisfying $\gamma\left(  \mathrm{M}\right)  \geq\left(
\mathrm{X}^{2}+\mathrm{Y}^{2}+\mathrm{rem}^{2}+\mathrm{dd}^{2}+\mathrm{dr}%
^{2}+\mathrm{q}^{2}+\mathrm{r}^{2}\right)  ^{1/2}.$ Table IV summarizes the
results of this analysis which were performed using\ both Sedumi 1\_3 and
LMILAB solvers.\vspace*{-0.3in}

{\scriptsize
\begin{gather*}
\text{\textsc{TABLE\ IV}}\\%
\begin{tabular}
[c]{|r|c|c|c|c|c|}\hline
$\mathrm{M}$ & $10^{2}$ & $10^{3}$ & $10^{4}$ & $10^{5}$ & $10^{6}$\\\hline
Solver: LMILAB \cite{GahinetLMILAB}: $\gamma\left(  \mathrm{M}\right)  $ &
$5.99\mathrm{M}$ & $6.34\mathrm{M}$ & $6.43\mathrm{M}$ & $6.49\mathrm{M}$ &
$7.05\mathrm{M}$\\\hline
Solver: SEDUMI \cite{Strum1999}: $\gamma\left(  \mathrm{M}\right)  $ &
$6.00\mathrm{M}$ & $6.34\mathrm{M}$ & $6.44\mathrm{M}$ & $6.49\mathrm{M}$ &
NAN\\\hline
$%
\begin{array}
[c]{c}%
\left(  \theta_{\mathrm{F}2\mathrm{F}2}^{1},\mu_{\mathrm{F}2\mathrm{F}2}%
^{1}\right)  \smallskip
\end{array}
\hspace*{-0.02in}$ & $\left(  1,10^{-3}\right)  $ & $\left(  1,10^{-3}\right)
$ & $\left(  1,10^{-3}\right)  $ & $\left(  1,10^{-3}\right)  $ & $\left(
1,10^{-3}\right)  $\\\hline
$%
\begin{array}
[c]{c}%
\left(  \theta_{\mathrm{F}2\mathrm{F}2}^{2},\mu_{\mathrm{F}2\mathrm{F}2}%
^{2}\right)  \smallskip
\end{array}
\hspace*{-0.02in}$ & $\left(  1,10^{-3}\right)  $ & $\left(  1,10^{-3}\right)
$ & $\left(  1,10^{-3}\right)  $ & $\left(  1,10^{-3}\right)  $ & $\left(
1,10^{-3}\right)  $\\\hline
Upperbound on iterations$\hspace*{-0.02in}$ & $T_{u}=2$e$4$ & $T_{u}=8$e$4$ &
$T_{u}=8$e$5$ & $T_{u}=1.5$e$7$ & $T_{u}=3$e$9$\\\hline
\end{tabular}
\end{gather*}
}\vspace*{-0.35in}

\subsubsection{Modular Analysis}

The preceding results were obtained by analysis of a global model which was
constructed by embedding the internal dynamics of the program's functions
within the global dynamics of the Main function. In contrast, the idea in
\textit{modular analysis} is to model software as the interconnection of the
program's "building blocks" or "modules", i.e., functions that interact via a
set of \textit{global} variables. The dynamics of the functions are then
abstracted via Input/Output behavioral models, typically constituting equality
and/or inequality constraints relating the input and output variables. In our
framework, the invariant sets of the terminal nodes of the modules (e.g., the
set $X_{\Join}$ associated with the terminal node $\mathrm{F}_{\Join}$ in
Program 4) provide such I/O model. Thus, richer characterization of the
invariant sets of the terminal nodes of the modules are desirable. Correctness
of each sub-module must be established separately, while correctness of the
entire program will be established by verifying the unreachability and
termination properties w.r.t. the global variables, as well as verifying that
a terminal global state will be reached in finite-time. This way, the program
variables that are \textit{private} to each function are abstracted away from
the global dynamics. This approach has the potential to greatly simplify the
analysis and improve the scalability of the proposed framework as analysis of
large size computer programs is undertaken. In this section, we apply the
framework to modular analysis of Program 4. Detailed analysis of the
advantages in terms of improving scalability, and the limitations in terms of
conservatism the analysis is an important and interesting direction of future research.

The first step is to establish correctness of the \textsc{IntegerDivision}
module, for which we obtain\vspace*{-0.25in}%
\[
\sigma_{7\mathrm{F}2}\left(  \mathrm{dd,dr,q,r}\right)  =\left(
\mathrm{q}+\mathrm{r}\right)  ^{2}-\mathrm{M}^{2}\vspace*{-0.2in}%
\]
The function $\sigma_{7\mathrm{F}2}$ is a $\left(  1,0\right)  $-invariant
proving boundedness of the state variables of \textsc{IntegerDivision}.
Subject to boundedness, we obtain the function\vspace*{-0.25in}%
\[
\sigma_{8\mathrm{F}2}\left(  \mathrm{dd,dr,q,r}\right)  =2\mathrm{r}%
-11\mathrm{q}-6\mathrm{Z}\vspace*{-0.2in}%
\]
which is a $\left(  1,1\right)  $-invariant proving termination of
\textsc{IntegerDivision.} The invariant set of node $\mathrm{F}_{\Join}$ can
thus be characterized by\vspace*{-0.18in}%
\[
X_{\Join}=\left\{  \left(  \mathrm{dd,dr,q,r}\right)  \in\left[
0,\mathrm{M}\right]  ^{4}~|~\mathrm{r\leq dr-1}\right\}  \vspace*{-0.15in}%
\]
The next step is construction of a global model. Given $X_{\Join}$, the
assignment at $\mathrm{L}3$:\vspace*{-0.2in}
\[
\mathrm{L}3:\mathrm{rem=IntegerDivision~(X~,~Y)}\vspace*{-0.25in}%
\]
can be abstracted by\vspace*{-0.2in}
\[
\mathrm{rem=W,}\text{ s.t. }\mathrm{W}\in\left[  0,\mathrm{M}\right]
,\text{~}\mathrm{W}\leq\mathrm{Y-1,}\vspace*{-0.2in}%
\]
allowing for construction of a global model with variables $\mathrm{X,Y,~}$and
$\mathrm{rem,}$ and an external state-dependent input $\mathrm{W}\in\left[
0,\mathrm{M}\right]  \mathrm{,}$ satisfying $\mathrm{W}\leq\mathrm{Y-1.}$
Finally, the last step is analysis of the global model. We obtain the function
$\sigma_{9\mathrm{L}2}\left(  \mathrm{X,Y,rem}\right)  =\mathrm{Y}%
\times\mathrm{M}-\mathrm{M}^{2},$ which is $\left(  1,1\right)  $-invariant
proving both FTT and boundedness of all variables within $\left[
\mathrm{M},\mathrm{M}\right]  .$\vspace*{-0.2in}

\subsection{Filtering\label{sec:casestudy2}}

Program 4 has been adopted with some minor modifications from a similar
program analyzed by ASTREE \cite{ASTREE, ASTREE-Web}. The only modification that we have
made is the addition of the real-time input \textrm{w~}$\in\left[
-1,1\right]  $ in line $\mathrm{L}4.$ Note that when $\mathrm{B}=0$ (in line
$\mathrm{L}4$)\ we have the same program reported in \cite{ASTREE-Web}. In Program
4, the function $\mathrm{saturate(.)}$ truncates the real-time input
$\mathrm{{\small \ast PtrToInput}}$ so that \textrm{w~}$\in\left[
-1,1\right]  $ is guaranteed. We first build a graph model of this program.
The variables of the graph model are:

\textrm{Z, Y, E[0], E[1], S[0], S[1], INIT. }

The variable \textrm{INIT} is Boolean which we model by \textrm{v~}%
$\in\left\{  -1,1\right\}  $ in the following way:\vspace*{-0.15in}%
\[
\mathrm{INIT=True\Leftrightarrow~}\text{\textrm{v}}=1,\text{ and
}\mathrm{INIT=False\Leftrightarrow~}\text{\textrm{v}}=-1.\vspace*{-0.15in}%
\]

The graph model of this program is shown in Figure \ref{FigASTREE}.

\begin{gather*}%
\begin{tabular}
[c]{|l|}\hline%
\begin{tabular}
[c]{l}%
\begin{tabular}
[c]{ll}
& {\small \textrm{/* filter.c */\vspace*{-0.18in}}}\\
& {\small \textrm{typedef enum \{FALSE = 0, TRUE = 1\} BOOLEAN;\vspace
*{-0.18in}}}\\
& {\small \textrm{BOOLEAN INIT; {float Y=\textrm{\{0\}}, Z=\{0\};}%
\vspace*{-0.18in}}}\\
{\small \textrm{F}}${\small 0:\hspace*{0.04in}}$ & {\small \textrm{void filter
() \{\vspace*{-0.18in}}}\\
{\small \textrm{F}}${\small 1:\hspace*{0.04in}}$ & {\small \textrm{static
float E[2], S[2];\vspace*{-0.18in}}}\\
{\small \textrm{F}}${\small 2:\hspace*{0.04in}}$ & {\small \textrm{if (INIT)
\{\vspace*{-0.18in}}}\\
{\small \textrm{F}}${\small 3:\hspace*{0.04in}}$ & \qquad{\small \textrm{S[0]
= Z;\vspace*{-0.18in}}}\\
{\small \textrm{F}}${\small 4:\hspace*{0.04in}}$ & \qquad{\small \textrm{Y =
Z;\vspace*{-0.18in}}}\\
{\small \textrm{F}}${\small 5:\hspace*{0.04in}}$ & \qquad{\small \textrm{E[0]
= Z;\vspace*{-0.18in}}}\\
{\small \textrm{F}}${\small 6:\hspace*{0.04in}}$ & {\small \textrm{\} else
\{\vspace*{-0.18in}}}\\
{\small \textrm{F}}${\small 7:\hspace*{0.04in}}$ & \qquad{\small \textrm{Y =
(((((0.5*Z) - (E[0]*0.7)) + (E[1]*0.4)) + (S[0]*1.5)) - (S[1]*0.7));\vspace
*{-0.18in}}}\\
{\small \textrm{F}}${\small 8:\hspace*{0.04in}}$ & \qquad
{\small \textrm{\}\vspace*{-0.18in}}}\\
{\small \textrm{F}}${\small 9:\hspace*{0.04in}}$ & {\small \textrm{E[1] =
E[0];\vspace*{-0.18in}}}\\
{\small \textrm{F}}${\small 10:\hspace*{0.04in}}$ & {\small \textrm{E[0] =
Z;\vspace*{-0.18in}}}\\
{\small \textrm{F}}${\small 11:\hspace*{0.04in}}$ & {\small \textrm{S[1] =
S[0];\vspace*{-0.18in}}}\\
{\small \textrm{F}}${\small 12:\hspace*{0.04in}}$ & {\small \textrm{S[0] =
Y;\vspace*{-0.18in}}}\\
{\small \textrm{F}}${\small 13:\hspace*{0.04in}}$ & {\small \textrm{\}}}%
\end{tabular}
\\
$%
\begin{tabular}
[c]{ll}%
{\small \textrm{L}}${\small 0:\hspace*{0.1in}}$ & {\small \textrm{void main ()
\{\vspace*{-0.18in}}}\\
{\small \textrm{L}}${\small 1:\hspace*{0.1in}}$ & {\small \textrm{Z = 0.2 * Z
+ 5;\vspace*{-0.18in}}}\\
{\small \textrm{L}}${\small 2:\hspace*{0.1in}}$ & {\small \textrm{INIT =
TRUE;\vspace*{-0.18in}}}\\
{\small \textrm{L}}${\small 3:\hspace*{0.1in}}$ & {\small \textrm{while (1) \{
\vspace*{-0.18in}}}\\
& \qquad{\small \textrm{wait (0.001); w = saturate(}}$\mathrm{{\small \ast
PtrToInput);}}$ {\small \textrm{/*updates real-time input*/\vspace*{-0.18in}}%
}\\
{\small \textrm{L}}${\small 4:\hspace*{0.1in}}$ & \qquad{\small \textrm{Z =
0.9 * Z + 35+B*w;\vspace*{-0.18in}}}\\
{\small \textrm{L}}${\small 5:\hspace*{0.1in}}$ & \qquad{\small \textrm{filter
();\vspace*{-0.18in}}}\\
{\small \textrm{L}}${\small 6:\hspace*{0.1in}}$ & \qquad{\small \textrm{INIT =
FALSE;\vspace*{-0.18in}}}\\
{\small \textrm{L}}${\small 7:\hspace*{0.1in}}$ & \qquad
{\small \textrm{\}\vspace*{-0.18in}}}\\
{\small \textrm{L}}${\small \Join}{\small :\hspace*{0.1in}}$ &
{\small \textrm{\}}}%
\end{tabular}
$%
\end{tabular}
\\\hline
\end{tabular}
\\[0.2in]%
\begin{array}
[c]{c}%
\begin{array}
[c]{cccccc}
&  &  & \text{Program 4: Example of filtering in safety-critical software} &
&
\end{array}
\\
\text{Example adapted from reference \cite{ASTREE-Web} with minor
modifications.}\\
\end{array}
\end{gather*}

\begin{figure}
[ptb]
\begin{center}
\includegraphics[
height=3.00in,
width=4.00in
]%
{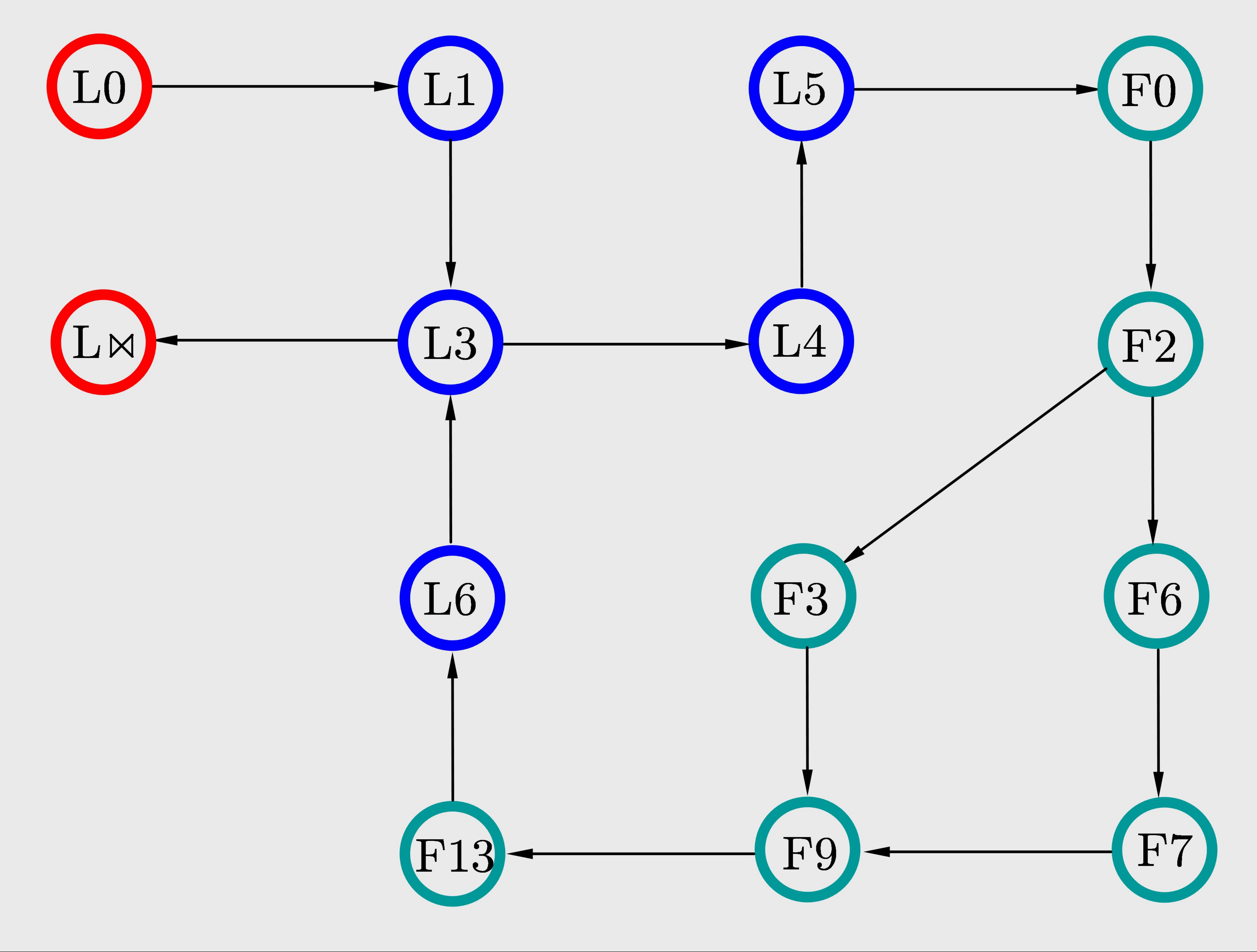}%
\caption{The graph of Program 4.}%
\label{FigASTREE}%
\end{center}
\end{figure}

\bigskip

In order to construct a more compact model, several lines of code
corresponding to consecutive transitions have been combined, and only the
first line corresponding to a series of consecutive transitions has been
assigned a node on the graph. Specifically, the combined lines are:
$\{\mathrm{L}1,$ $\mathrm{L}2\},$ and $\left\{  \mathrm{F}3,\text{ }%
\mathrm{F}4,\text{ }\mathrm{F}5\right\}  ,$ and $\left\{  \mathrm{F}9,\text{
}\mathrm{F}10,\text{ }\mathrm{F}11,\text{ }\mathrm{F}12\right\}  .$ The net
effect of the eliminated lines is captured in the model by taking the
composition of the functions that are coded at these lines.

Letting $\mathrm{x}=$ $\left[  \mathrm{Z,\ Y,\ E[0],\ E[1],\ S[0],\ S[1],\ v}%
\right]  ^{T},$ the state space of the graph model is $\mathcal{N}%
\times\mathbb{R}^{7},$ where $\mathcal{N=}\left\{  \mathrm{L}0,\mathrm{L}%
1,\mathrm{L}3,\mathrm{L}4,\mathrm{L}5,\mathrm{L}6,\mathrm{L}{\small \Join
}\right\}  \cup\left\{  \mathrm{F}0,\mathrm{F}2,\mathrm{F}3,\mathrm{F}%
6,\mathrm{F}7,\mathrm{F}9,\mathrm{F}13\right\}  ,$ as shown in Figure
\ref{FigASTREE}. The initial condition is fixed: $X_{\emptyset}:=\left\{
\mathrm{L}0\right\}  \times\mathrm{x}_{0}$ where $\mathrm{x}_{0}=\left[
0,0,0,0,0,0,1\right]  ^{T}.$ The corresponding passport and transition labels
are as follows:\vspace{-0.1in}%
\begin{align*}
&
\begin{array}
[c]{rllcrrll}%
T_{\mathrm{L}3\mathrm{L}1} & \hspace*{-0.06in}:\hspace*{-0.06in} &
\mathrm{x}\mapsto\left[  0.2\mathrm{Z}%
+5\mathrm{,\ Y,\ E[0],\ E[1],\ S[0],\ S[1],\ }1\right]   & \vspace*{-0.05in} &
&  &  & \\
T_{\mathrm{L}5\mathrm{L}4} & \hspace*{-0.06in}:\hspace*{-0.06in} &
\mathrm{x}\mapsto\left[  0.9\mathrm{Z}%
+35+\mathrm{Bw,\ Y,\ E[0],\ E[1],\ S[0],\ S[1],\ v}\right]   & \vspace
*{-0.05in} &  &  &  & \\
T_{\mathrm{L}3\mathrm{L}6} & \hspace*{-0.06in}:\hspace*{-0.06in} &
\mathrm{x}\mapsto\left[  \mathrm{Z,\ Y,\ E[0],\ E[1],\ S[0],\ S[1],\ -1}%
\right]   &  &  &  &  & \\
T_{\mathrm{F}9\mathrm{F}3} & \hspace*{-0.06in}:\hspace*{-0.06in} &
\mathrm{x}\mapsto\left[  \mathrm{Z,\ Z,\ Z,\ E[1],\ Z,\ S[1],\ v}\right]   &
&  &  &  & \\
T_{\mathrm{F}9\mathrm{F}7} & \hspace*{-0.06in}:\hspace*{-0.06in} &
\mathrm{x}\mapsto\left[  \mathrm{Z,\ }0.5\mathrm{Z}-0.7\mathrm{E[0]}%
+0.4\mathrm{E[1]}+1.5\mathrm{S[0]}%
-0.7\mathrm{S[1],\ E[0],\ E[1],\ S[0],\ S[1],\ v}\right]   &  &  &  &  & \\
T_{\mathrm{F}13\mathrm{F}9} & \hspace*{-0.06in}:\hspace*{-0.06in} &
\mathrm{x}\mapsto\left[  \mathrm{Z,\ Y,\ Z,\ E[0],\ Y,\ S[0],\ v}\right]   &
&  &  &  &
\end{array}
\\[-0.1in]
&
\begin{array}
[c]{rllcrrllllllllllll}%
\Pi_{\mathrm{F}3\mathrm{F}2} & \hspace*{-0.06in}=\hspace*{-0.06in} & \left\{
\mathrm{x~|~v=1}\right\}  , &  &  & \Pi_{\mathrm{F}6\mathrm{F}2} &
\hspace*{-0.06in}=\hspace*{-0.06in}\vspace*{-0.05in} & \left\{
\mathrm{x~|~v=-1}\right\}  , &  &  & \Pi_{\mathrm{L}4\mathrm{L}3} &
\hspace*{-0.06in}=\hspace*{-0.06in}\vspace*{-0.05in} & \mathbb{R}^{7}, &  &  &
\Pi_{\mathrm{L}\Join\mathrm{L}3} & = & \mathbb{\varnothing}.
\end{array}
\end{align*}

\smallskip

The other transitions are identity maps. Theorem \ref{Thm:GraphModelLMI} can
be applied for verification of absence of overflow (note that the program does
not terminate). We apply Theorem \ref{Thm:GraphModelLMI} for two different
values of $\mathrm{B},$ $\mathrm{B}=0$ which corresponds to no real-time
input, and $\mathrm{B}=20.$ For each value of $\mathrm{B} \in \{0,20\}$, absence of overflow is verified using two different methods.  In Method I, we set $\Lambda_{j}=\Lambda=\operatorname{diag}\left\{  \rm{M}^{-2}{\bf{1}}_7,-1\right\},$ and thus, we obtain a certificate for $\mathrm{M}\geq\left\Vert \mathrm{x}\right\Vert _{2}$ using one quadratic invariant. In Method II, for each $i \in \{1,..,7\}$ we set $\Lambda_{j}=\Lambda=\operatorname{diag}\left\{  {\rm{M}^{-2}}{e}_i,-1\right\},$ where $e_i$ is the $i$-th standard unit vector in $\mathbb{R}^7$ (cf. Corollary \ref{Bddness and FTT} and Equation \eqref{remarktos}). As a result, we obtain a much tighter upperbound in the form of $\mathrm{M}\geq\left\Vert \mathrm{x}\right\Vert _{\infty}$ at the expense of more computations. Table V summarizes the results of this analysis.
Analysis of the floating-point operations is based on the abstractions discussed in Appendix I. The corresponding Matlab codes can be found in \cite{MVichSite}. In this analysis, the solver SeDuMi version 1.3 \cite{Strum1999} is used for semidefinite optimization, and Yalmip \cite{Lofberg2004} for compilation. For comparison we point out that the static analyzer Astree reports a bound of $\rm{M}=1327.05$ for the $\mathrm{B}=0$ case \cite{ASTREE-Web}. \vspace*{-0.4in}

{\scriptsize
\begin{gather*}
\text{\textsc{TABLE\ V}}\\%
\begin{tabular}
[c]{|r|r|c|c|c|l|}\hline
& Computations: & infinite-precision & $64$-bit double precision & $32$-bit
single precision & \\\hline
\multicolumn{1}{|l|}{Method I} & \multicolumn{1}{|c|}{$\mathrm{B}=0$} &
$\mathrm{M}=884.95$ & $\mathrm{M}=884.96$ & $\mathrm{M}=884.96$ &
$\mathrm{M}\geq\left\Vert \mathrm{x}\right\Vert _{2}\geq\left\Vert
\mathrm{x}\right\Vert _{\infty}$\\\hline
\multicolumn{1}{|l|}{Method II} & \multicolumn{1}{|c|}{$\mathrm{B}=0$} &
$\mathrm{M}=373.11$ & $\mathrm{M}=373.12$ & $\mathrm{M}=373.12$ &
$\mathrm{M}\geq\left\Vert \mathrm{x}\right\Vert _{\infty}$\\\hline
\multicolumn{1}{|l|}{Method I} & \multicolumn{1}{|c|}{$\mathrm{B}=20$} &
\multicolumn{1}{|c|}{$\mathrm{M}=1400.95$} & \multicolumn{1}{|c|}{$\mathrm{M}%
=1402.81$} & $\mathrm{M}=1402.87$ & $\mathrm{M}\geq\left\Vert \mathrm{x}%
\right\Vert _{2}\geq\left\Vert \mathrm{x}\right\Vert _{\infty}$\\\hline
\multicolumn{1}{|l|}{Method II} & \multicolumn{1}{|c|}{$\mathrm{B}=20$} &
$\mathrm{M}=609.83$ & $\mathrm{M}=609.83$ & $\mathrm{M}=609.83$ &
$\mathrm{M}\geq\left\Vert \mathrm{x}\right\Vert _{\infty}$\\\hline
& $%
\begin{array}
[c]{c}%
\left(  \theta_{\mathrm{ij}},\mu_{\mathrm{ij}}\right)  \smallskip
\end{array}
\hspace*{-0.02in}$ & $\left(  0.98,0\right)  $ & $\left(  0.98,0\right)  $ &
$\left(  0.98,0\right)  $ & \\\hline
\end{tabular}
\end{gather*}
\vspace{-0.4in}
}

\section{Concluding Remarks\vspace*{-0.05in}}

We took a systems-theoretic approach to software analysis, and presented a
framework based on convex optimization of Lyapunov invariants for verification
of a range of important specifications for software systems, including
finite-time termination and absence of run-time errors such as overflow,
out-of-bounds array indexing, division-by-zero, and user-defined program
assertions. The verification problem is reduced to solving a numerical
optimization problem, which when feasible, results in a certificate for the
desired specification.\ The novelty of the framework, and consequently, the
main contributions of this paper are in the systematic transfer of Lyapunov
functions and the associated computational techniques from control systems to
software analysis. The presented work can be extended in several directions.
These include understanding the limitations of modular analysis of programs,
perturbation analysis of the Lyapunov certificates to quantify robustness with
respect to round-off errors, extension to systems with software in closed loop
with hardware, and adaptation of the framework to specific classes of
software.\vspace*{-0.1in}
\newpage
\section{Appendix I\vspace*{-0.1in}}

\subsection{Semialgebraic Set-Valued Abstractions of Commonly-Used
Nonlinearities\label{Sec:abstnon}\vspace*{-0.08in}}

\begin{enumerate}
\item[\textbf{--}] Trigonometric Functions:\vspace*{-0.05in}
\end{enumerate}

Abstraction of trigonometric functions can be obtained by approximation of the
Taylor series expansion followed by representation of the absolute error by a
static bounded uncertainty. For instance, an abstraction of the $\sin\left(
\cdot\right)  $ function can be constructed as follows:\vspace*{-0.1in}%
{\small
\[%
\begin{tabular}
[c]{|l|l|l|}\hline
Abstraction of $\sin\left(  x\right)  $ & $x\in\lbrack-\frac{\pi}{2},\frac
{\pi}{2}]$ & $x\in\lbrack-\pi,\pi]$\\\hline
$\overline{\sin}\left(  x\right)  \in\left\{  x+aw~|~w\in\left[  -1,1\right]
\right\}  $ & $a=0.571$ & $a=3.142$\\\hline
$\overline{\sin}\left(  x\right)  \in\{x-\frac{1}{6}x^{3}+aw~|~w\in\left[
-1,1\right]  \}$ & $a=0.076$ & $a=2.027$\\\hline
\end{tabular}
\]
} Abstraction of $\cos\left(  \cdot\right)  $ is similar. It is also possible
to obtain piecewise linear abstractions by first approximating the function by
a piece-wise linear (PWL) function and then representing the absolute error by
a bounded uncertainty. Section \ref{Section:SpecModels} (Proposition
\ref{MILM-prop}) establishes universality of representation of generic PWL
functions via binary and continuous variables and an algorithmic construction
is given in the proof of the proposition in Appendix II. For instance, if
$x\in\lbrack0,\pi/2]$ then a piecewise linear approximation with absolute
error less than $0.06$ can be constructed in the following way:\vspace
*{-0.15in}
\begin{subequations}
\label{SinAbst}%
\begin{align}
\hspace*{-0.1in}S\hspace{-0.02in} &  =\hspace{-0.02in}\left\{  \left(
x,v,w\right)  \hspace*{0.02in}|\hspace*{0.02in}x=0.2\left[  \left(
1+v\right)  \left(  1+w_{2}\right)  +\left(  1-v\right)  \left(
3+w_{2}\right)  \right]  ,\hspace*{0.02in}\left(  w,v\right)  \in\left[
-1,1\right]  ^{2}\times\left\{  -1,1\right\}  \right\} \label{SinAbsta}\\
\hspace*{-0.1in}\overline{\sin}\left(  x\right)  \hspace{-0.02in} &
\in\hspace{-0.02in}\left\{  Tx_{E}~|~x_{E}\in S\right\}  ,\text{\ }%
T:x_{E}\mapsto0.45\left(  1+v\right)  x+\left(  1-v\right)  \left(
0.2x+0.2\right)  +0.06w_{1}%
\end{align}

\end{subequations}
\begin{enumerate}
\item[\textbf{--}] The Sign Function ($\operatorname{sgn}$) and the Absolute
Value Function ($\operatorname{abs}$):\vspace*{-0.05in}
\end{enumerate}

The sign function ($\operatorname{sgn}(x)=1\mathbb{I}_{[0,\infty)}\left(
x\right)  -1\mathbb{I}_{(-\infty,0)}\left(  x\right)  $) may appear explicitly
as one of the program's functions, or implicitly as a model for conditional
switching. A particular abstraction of $\operatorname{sgn}\left(
\cdot\right)  $ is as follows: $\overline{\operatorname{sgn}}(x)\in\left\{
v~|~xv\geq0,\text{ }v\in\left\{  -1,1\right\}  \right\}  $. Note that
$\operatorname{sgn}\left(  0\right)  $ is equal to $1,$ while the abstraction
is multi-valued at zero$:$ $\overline{\operatorname{sgn}}\left(  0\right)
\in\left\{  -1,1\right\}  .$ The absolute value function can be represented
(precisely) over $\left[  -1,1\right]  $ in the following way:\vspace
*{-0.17in}
\[
\operatorname{abs}\left(  x\right)  =\left\{  xv~|~x=0.5\left(  v+w\right)
,\text{ }\left(  w,v\right)  \in\left[  -1,1\right]  \times\left\{
-1,1\right\}  \right\}  \vspace*{-0.25in}%
\]

\begin{enumerate}
\item[\textbf{--}] Modulo Arithmetic:\vspace*{-0.05in}
\end{enumerate}

Consider the function $\operatorname{mod}:\mathbb{Z}\times\mathbb{Z}%
\rightarrow\mathbb{Z}$ defined in the following way:\vspace*{-0.2in}%
\[
\operatorname{mod}\left(  t,s\right)  =t-ns,\text{ where }n=\lfloor\frac{t}%
{s}\rfloor.\vspace*{-0.2in}%
\]
\vspace*{0.1in}Abstraction of the function $\operatorname{mod}\left(
.,.\right)  $ over bounded subsets of the set of integers is as follows. Let
$t\in\lbrack0,Ms)$, where $M>0$ is an integer. Then:\vspace*{-0.1in}%
\begin{equation}
\operatorname{mod}\left(  t,s\right)  \in\left\{  t-\frac{1}{2}s\sum
\limits_{k=1}^{M-1}\left(  1+v_{k}\right)  ~|~(t-ks)v_{k}\geq0,~v_{k}%
\in\left\{  -1,1\right\}  ,~k=1,...,M-1\right\}  .\vspace*{-0.1in}\label{modA}%
\end{equation}
It follows from (\ref{modA}) that the result of summation modulo $s$ of two
integers $t_{1},t_{2}\in\lbrack0,Ms)$ can be abstracted in the following
way:\vspace*{-0.15in}%
\[
\operatorname{mod}\left(  t_{1}+t_{2},s\right)  \in\left\{  t_{1}+t_{2}%
-\frac{1}{2}s\sum\limits_{k=1}^{2M-1}\left(  1+v_{k}\right)  ~|~(t_{1}%
+t_{2}-ks)v_{k}\geq0,~v_{k}\in\left\{  -1,1\right\}  ,~k=1,...,2M-1\right\}
.\vspace*{-0.2in}%
\]
\vspace*{-0.40in}

\subsection{Floating-Point and Fixed-Point Arithmetic}
This subsection is based on \cite{MineThesis}, Chapter 7. Interested readers are referred to \cite{MineThesis, MinePaper}
for a more comprehensive discussion of computations with floats.
For computations with floating-point numbers, the IEEE 754-1985 norm has
become the hardware standard, and is supported by most popular programming
languages such as C. In this standard, a \textit{float} number is represented
by a triplet $\left(  s,f,e\right)  ,$ where:$\vspace*{-0.1in}$

\begin{itemize}
\item $s\in\left\{  0,1\right\}  $ is the sign bit.$\vspace*{-0.05in}$

\item $f$ is the fractional part, represented by a $p$-bit unsigned integer:
$f_{1}\ldots f_{p},$ $f_{i}\in\{0,1\}.\vspace*{-0.05in}$

\item $e$ is the biased exponent, represented by a $q$-bit unsigned integer:
$e_{1}\ldots e_{q},$ $e_{i}\in\{0,1\}.\vspace*{-0.05in}$
\end{itemize}

A floating point number $z=\left(  s,f,e\right)  $ is then in one of the
following forms:

\begin{itemize}
\item $z=\left(  -1\right)  ^{s}\times2^{e-bias}\times1.f,$ when $1\leq
e\leq2^{q}-2.\vspace*{-0.05in}$

\item $z=\left(  -1\right)  ^{s}\times2^{1-bias}\times0.f,$ when $e=0,$ and
$f\neq0.\vspace*{-0.05in}$

\item $z=+0$ (when $s=1$) and $z=-0$ (when $s=0$) and $e=f=0.\vspace
*{-0.05in}$

\item $z=+\infty$ (when $s=1$) and $z=-\infty$ (when $s=0$) and $e=2^{q}-1, $
$f\neq0.\vspace*{-0.05in}$

\item $z=$ $NaN$ when $e=2^{q}-1,$ $f=0.\vspace*{-0.05in}$
\end{itemize}

The values of $p,$ $q,$ and $bias$ depend on the specific format:

\begin{itemize}
\item If format is 32-bit single precision (f=32), then $bias=127,$ $q=8,$ and
$p=23.\vspace*{-0.05in}$

\item If format is 64-bit double precision (f=64), then $bias=1023,$ $q=11,$
and $p=52.\vspace*{-0.05in}$
\end{itemize}

Other formatting standards such as \textit{long double} or \textit{quadruple}
precision also exist. In floating point computations, the result of performing
the arithmetic operation $\circledast\in\left\{  +,-,\times,\div\right\}  $ on
two \textit{float} variables $x$ and $y$ is stored in a \textit{float}
variable $z:=\operatorname{float}\left(  x\circledast y,\text{f}\right)  $
which is a complicated function of $x,$ $y,$ and the format f. Examples of the
format f include the IEEE 754 with 32-bit single precision, or IEEE 754 with
64-bit double precision$.$ A floating-point operation is equivalent to
performing the operation over the set of real numbers followed by rounding the
result to a \textit{float }\cite{MinePaper, MineThesis}. In IEEE 754 the
possible rounding modes are rounding towards $0$, towards $+\infty$, towards
$-\infty$, and to the nearest (n). The rounding function $\Gamma_{\text{f,m}%
}:\mathbb{R\rightarrow F\cup}\left\{  \epsilon\right\}  $ maps a real number
to a float number or to runtime error $\epsilon$, depending on the format `f'
and the rounding mode `m'. We refer the reader to \cite{MineThesis} for more
details on the rounding function. For our purposes, it is sufficient to say
that for all rounding modes, the following relation holds:\vspace{-0.15in}%
\[
\forall x\in\lbrack-\alpha_{\text{f}},\alpha_{\text{f}}]:\left\vert
\Gamma_{\text{f,m}}\left(  x\right)  -x\right\vert \leq\gamma_{\text{f}%
}\left\vert x\right\vert +\beta_{\text{f}}\vspace*{-0.15in}%
\]
\vspace*{0.1in}where $\gamma_{\text{f}}:=2^{-p},$ and $\alpha_{\text{f}%
}:=\left(  2-2^{-p}\right)  2^{2^{q}-bias-2}$ is the largest non-infinite
number, and $\beta_{\text{f}}:=2^{1-bias-p}$ is the smallest non-zero positive number.

Based on the above discussions, an abstraction of the floating-point
arithmetic operators can be constructed in the following way:\vspace{-0.1in}%
\[%
\begin{array}
[c]{c}%
x+y\in\lbrack-\alpha_{\text{f}},\alpha_{\text{f}}]\Rightarrow
\operatorname{float}\left(  x+y\right)  =z\in\left\{  x+y+\delta
w~|~w\in\left[  -1,1\right]  ,\text{ }\delta=\gamma_{\text{f}}\left(
\left\vert x\right\vert +\left\vert y\right\vert \right)  +\beta_{\text{f}%
}\right\} \\
x-y\in\lbrack-\alpha_{\text{f}},\alpha_{\text{f}}]\Rightarrow
\operatorname{float}\left(  x-y\right)  =z\in\left\{  x-y+\delta
w~|~w\in\left[  -1,1\right]  ,\text{ }\delta=\gamma_{\text{f}}\left(
\left\vert x\right\vert +\left\vert y\right\vert \right)  +\beta_{\text{f}%
}\right\} \\
x\times y\in\lbrack-\alpha_{\text{f}},\alpha_{\text{f}}]\Rightarrow
\operatorname{float}\left(  x\times y\right)  =z\in\left\{  x\times y+\delta
w~|~w\in\left[  -1,1\right]  ,\text{ }\delta=\gamma_{\text{f}}\left(
\left\vert x\right\vert \times\left\vert y\right\vert \right)  +\beta
_{\text{f}}\right\} \\
x\div y\in\lbrack-\alpha_{\text{f}},\alpha_{\text{f}}]\Rightarrow
\operatorname{float}\left(  x\div y\right)  =z\in\left\{  x\div y+\delta
w~|~w\in\left[  -1,1\right]  ,\text{ }\delta=\gamma_{\text{f}}\left(
\left\vert x\right\vert \div\left\vert y\right\vert \right)  +\beta_{\text{f}%
}\right\}
\end{array}
\]
where the constants $\alpha_{\text{f }},$ $\beta_{\text{f}}$ and
$\gamma_{\text{f}}$ are defined as before. The above abstractions can still be
complicated as the magnitude of $\delta$ depends on the operands $x$ and $y. $
If all of the program variables (including the result of the arithmetic
operation) reside in $\left[  -\alpha,\alpha\right]  $ where $\alpha\ll
\alpha_{\text{f }}$ then a simpler but more conservative abstraction can be
constructed in the following way:$\vspace*{-0.2in}$%
\[
x\circledast y\in\lbrack-\alpha,\alpha]\Rightarrow\operatorname{float}\left(
x\circledast y\right)  =z\in\left\{  x\circledast y+\delta w~|~w\in\left[
-1,1\right]  ,\text{ }\delta=\alpha\gamma_{\text{f}}+\beta_{\text{f}}\right\}
\vspace*{-0.15in}%
\]
For instance, if f=32 and $\alpha=10^{6},$ then $\delta=0.12,$
and if f=64 and $\alpha=10^{10},$ then $\delta=2.3\times10^{-6}.$ Abstractions
of arithmetic operations in fixed-point computations is similar to the above.
The magnitude of $\delta$ will depend on the number of bits and the dynamic
range. For instance, in the two's complement format we have: $\delta
=\rho\left(  2^{b}-1\right)  ^{-1},$ where $\rho$ is the dynamic range and $b$
is the number of bits. A case study in software verification based on the above discussed abstractions of floating-point computations has been presented in Section \ref{sec:casestudy2}.

\bigskip

\subsection{Graph models with state-dependent or time-varying edges}

Consider the following fragment from a program with two variables:
$x\in\mathbb{R},$ and $i\in\left\{  1,2\right\}  :\vspace*{-0.1in}$%
\[%
\begin{array}
[c]{c}%
\begin{tabular}
[c]{|l|}\hline
$%
\begin{array}
[c]{cl}%
\text{\textrm{L}}1: & x=A\left[  i\right]  x;\\
\text{\textrm{L}}2: & \text{expression}%
\end{array}
$\\\hline
\end{tabular}
\vspace*{0.1in}%
\end{array}
\]
Then, the state transition from node $1$ to $2$ is defined by $T_{21}:\left(
x,i\right)  \rightarrow\left(  A\left[  ~i~\right]  x,i\right)  ,$ where $A$
is an array of size $2$. An algorithm similar to the one used in construction
of MILMs (cf. Proposition \ref{MILM-prop}) can be used to construct a
transition label using semi-algebraic set-valued maps. For the above example,
this can be done in the following way:$\vspace*{-0.15in}$%
\[%
\begin{array}
[c]{rll}%
\overline{T}_{21}\left(  x,i\right)  & = & \{T_{21}\left(  x,i,v_{1}%
,v_{2}\right)  ~|~\left(  x,i,v_{1},v_{2}\right)  \in S_{21}\}.\\
T_{21}\left(  x,i,v_{1},v_{2}\right)  & = & A\left[  1\right]  v_{1}x+A\left[
2\right]  v_{2}x.\\
S_{21} & = & \left\{  \left(  x,i,v_{1},v_{2}\right)  ~|~v_{1}+v_{2}=1,\text{
}v_{1},v_{2}\in\left\{  0,1\right\}  ,\text{ }v_{1}\left(  i-1\right)
+v_{2}\left(  i-2\right)  =0\right\}  .
\end{array}
\]
In light of Proposition \ref{MILM-prop}, generalization of this technique to
multidimensional arrays with arbitrary finite size is straightforward.
However, the number of binary decision variables grows (linearly) with the
number of elements in the array, which can be undesirable for large arrays. An
alternative approach to constructing graph models with fixed labels is to
attempt to derive a fixed map by computing the net effect of several lines of
code \cite{FeronRooz07}. When applicable, the result is a fixed map which is
obtained by taking the composition of several functions. This is an instance
of a more general concept in computer science, namely, extracting the higher
level semantics, which allows one to define more compact models of the software.

For instance, consider the following code fragment:
\begin{figure}
[ptb]
\begin{center}
\includegraphics[
height=3in,
width=4.65in
]%
{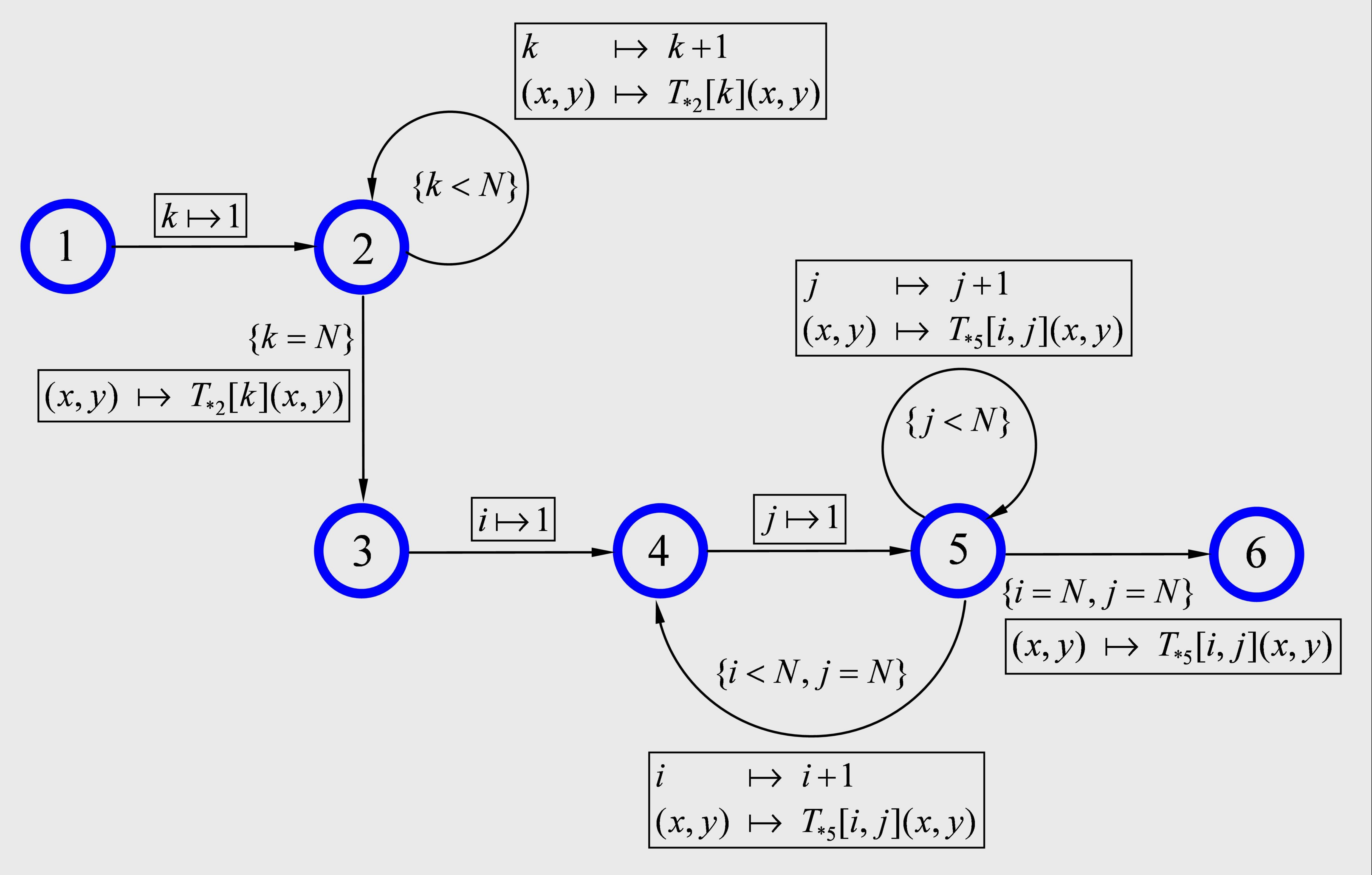}%
\caption{Graph model of a code fragment (Program 5) with state-dependent
labels. The transition labels are shown in boxes and the passport labels are
in brackets. For simplicity, only the non-identity transition labels are
shown.}%
\label{FoorLoops2}%
\end{center}
\end{figure}
\begin{gather*}%
\begin{tabular}
[c]{|l|}\hline
$%
\begin{array}
[c]{ll}%
\begin{array}
[c]{c}%
\mathrm{L}1:
\end{array}
&
\begin{array}
[c]{c}%
\text{{\small for ( k = 1 ; k == N ; k++)\qquad\{}}%
\end{array}
\vspace*{-0.15in}\\%
\begin{array}
[c]{c}%
\mathrm{L}2:
\end{array}
&
\begin{array}
[c]{c}%
\text{{\small y[ k ] = x[ k ]; x[ k ] = 0; \}}}%
\end{array}
\vspace*{-0.15in}\\%
\begin{array}
[c]{c}%
\mathrm{L}3:
\end{array}
&
\begin{array}
[c]{c}%
\text{{\small for ( i = 1 ; i == N ; i++)\qquad\{}}%
\end{array}
\vspace*{-0.15in}\\%
\begin{array}
[c]{c}%
\mathrm{L}4:
\end{array}
&
\begin{array}
[c]{c}%
\text{\qquad{\small for ( j = 1 ; j == N ; j++)\qquad\{}}%
\end{array}
\vspace*{-0.15in}\\%
\begin{array}
[c]{c}%
\mathrm{L}5:
\end{array}
&
\begin{array}
[c]{c}%
\qquad\text{\qquad{\small x[ i ] = x[ i ] + y[ j ] * A[ i ][ j ];}}%
\end{array}
\vspace*{-0.15in}\\%
\begin{array}
[c]{c}%
\mathrm{L}6:
\end{array}
&
\begin{array}
[c]{c}%
\text{{\small \}\}}}%
\end{array}
\end{array}
$\\\hline
\end{tabular}
\\[0.1in]
\text{Program 5: A code fragment from a linear filtering application}%
\end{gather*}
A graph model of this code fragment is shown in Figure \ref{FoorLoops2},
where:$\vspace*{-0.15in}$%
\begin{align*}
y  & =\left[  y\left[  1\right]  \cdots y\left[  N\right]  \right]
^{T},\text{ }x=\left[  x\left[  1\right]  \cdots x\left[  N\right]  \right]
^{T}\\[0.25in]
T_{\ast2}\left[  k\right]   & :\left[
\begin{array}
[c]{c}%
\vspace*{-0.5in}\\
y\\
\vspace*{-0.5in}\\
x\\
\vspace*{-0.4in}%
\end{array}
\right]  \rightarrow\left[
\begin{array}
[c]{cc}%
\vspace*{-0.5in} & \\
I-e_{kk} & e_{kk}\\
\vspace*{-0.5in} & \\
0 & I-e_{kk}\\
\vspace*{-0.4in} &
\end{array}
\right]  \left[
\begin{array}
[c]{c}%
\vspace*{-0.5in}\\
y\\
\vspace*{-0.5in}\\
x\\
\vspace*{-0.4in}%
\end{array}
\right]  ,\\[0.25in]
T_{\ast5}\left[  i,j\right]   & :\left[
\begin{array}
[c]{c}%
\vspace*{-0.5in}\\
y\\
\vspace*{-0.5in}\\
x\\
\vspace*{-0.4in}%
\end{array}
\right]  \rightarrow\left[
\begin{array}
[c]{cc}%
\vspace*{-0.5in} & \\
I & 0\\
\vspace*{-0.5in} & \\
A_{ij}e_{ij} & I\\
\vspace*{-0.4in} &
\end{array}
\right]  \left[
\begin{array}
[c]{c}%
\vspace*{-0.5in}\\
y\\
\vspace*{-0.5in}\\
x\\
\vspace*{-0.4in}%
\end{array}
\right]  ,
\end{align*}
where $e_{ij}$ is an $N\times N$ matrix which is zero everywhere except at the
$\left(  i,j\right)  $-th entry which is $1,$ and $A_{ij}$ denotes the
$\left(  i,j\right)  $-th entry of $A$ ($A_{ij}\equiv A\left[  i\right]
\left[  j\right]  $) which is an $N\times N$ array in the code
\cite{FeronRooz07}. The remaining transition labels correspond to the counter
variables $i,~j,~k,$ and have been specified on the graph in Figure
\ref{FoorLoops2}. From node $1 $ to node $3,$ the net effect is:$\vspace
*{-0.1in}$
\[%
{\textstyle\prod\limits_{k=1}^{N}}
T_{\ast2}\left[  k\right]  =\left[
\begin{array}
[c]{cc}%
\vspace*{-0.5in} & \\
0 & I\\
\vspace*{-0.5in} & \\
0 & 0\\
\vspace*{-0.45in} &
\end{array}
\right]  ,
\]
and from node $3$ to node $6$ the net effect is:$\vspace*{-0.1in}$%
\[%
{\textstyle\prod\limits_{i=1}^{N}}
{\textstyle\prod\limits_{j=1}^{N}}
T_{\ast5}\left[  i,j\right]  =\left[
\begin{array}
[c]{cc}%
\vspace*{-0.5in} & \\
I & 0\\
\vspace*{-0.5in} & \\
A & 0\\
\vspace*{-0.4in} &
\end{array}
\right]  .
\]

\section{Appendix II}

\smallskip

\begin{proof}
[Proof of Proposition \ref{MILM-prop}]The proof is by construction. Let $X=%
{\textstyle\bigcup\limits_{i=1}^{N}}
X_{i},$ where the $X_{i}$'s are compact polytopic sets. Further, assume that
each $X_{i}$ is characterized by a finite set of linear inequality
constraints: $X_{i}=\left\{  x~|~\mathbf{S}_{i}x\leq\mathbf{s}_{i},\text{
}\mathbf{S}_{i}\in\mathbb{R}^{N_{i}\times n},\text{ }\mathbf{s}_{i}%
\in\mathbb{R}^{N_{i}}\right\}  .$ Let $v\in\left\{  -1,1\right\}  ^{N},$ and
consider the following sets:\vspace*{-0.1in}%
\begin{align*}
G_{xv} &  \triangleq\left\{  \left(  x,v\right)  ~|~\sum_{i=1}^{N}%
v_{i}=-N+2,\text{ }\left(  \mathbf{S}_{i}x-\mathbf{s}_{i}\right)  \left(
v_{i}+1\right)  \leq0,\text{ }v_{i}\in\left\{  -1,1\right\}  :i\in
\mathbb{Z}\left(  1,N\right)  \right\}  ,\\
G_{xy} &  \triangleq\left\{  \left(  x,y\right)  ~|~\sum_{i=1}^{N}\left(
1+v_{i}\right)  \left(  A_{i}x+B_{i}\right)  =y,\text{ }\left(  x,v\right)
\in G_{xv}\right\}  .\vspace*{-0.1in}%
\end{align*}
First, we prove that $G_{xy}=G_{1}.$ Define $N$ binary vectors: $\eta^{j}%
\in\left\{  -1,1\right\}  ^{N},$ $j\in\mathbb{Z}\left(  1,N\right)  ,$
according to the following rule: $\eta_{i}^{j}=1$ if and only if $i=j.$ Also
define $\mathbb{I}\left(  x\right)  \overset{\text{def}}{=}\left\{
j\in\mathbb{Z}\left(  1,N\right)  ~|~x\in X_{j}\right\}  .$ Now, let $\left(
x_{0},f\left(  x_{0}\right)  \right)  \in G_{1}.$ Then\vspace*{-0.1in}%
\[
\left(  x_{0},\eta^{j}\right)  \in G_{xv}:\forall j\in\mathbb{I}\left(
x_{0}\right)  .\vspace*{-0.1in}%
\]
Therefore, $\left(  x_{0,}2A_{j}x_{0}+2B_{j}\right)  \in G_{xy}:\forall
j\in\mathbb{I}\left(  x_{0}\right)  ,$ which by supposition, implies that
$\left(  x_{0},f\left(  x_{0}\right)  \right)  \in G_{xy}.$ This proves that
$G_{1}\subseteq G_{xy}.$ Now, let $\left(  x_{0},y_{0}\right)  \in G_{xy}.$
Then, there exists $\widehat{v}\in\left\{  -1,1\right\}  ^{N},$ such that
$\left(  x_{0},\widehat{v}\right)  \in G_{xv}.$ Hence, there exists
$j\in\mathbb{Z}\left(  1,N\right)  ,$ such that $x_{0}\in X_{j},$ and
$y_{0}=2A_{j}x+2B_{j}.$ It follows from the definition of $f$ that $\left(
x_{0},y_{0}\right)  \in G_{1}.$ This proves that $G_{xy}\subseteq G_{1}.$ We
have shown that $G_{1}=G_{xy}.$ It remains to show that $G_{xy}$ has a
representation equivalent to $G_{2}.$ \newline Define new variables
$u_{i}=xv_{i}\in\left[  -1,1\right]  ^{n}.$ Then\vspace*{-0.05in}%
\begin{equation}
G_{xy}\triangleq\left\{
\begin{array}
[c]{r}%
\left(  x,y\right)  ~|~y=\sum_{i=1}^{N}\left(  A_{i}x+A_{i}u_{i}+B_{i}%
+B_{i}v_{i}\right)  ,\text{ }\mathbf{S}_{i}u_{i}+\mathbf{S}_{i}x-\mathbf{s}%
_{i}v_{i}-\mathbf{s}_{i}\leq0\\
\sum_{i=1}^{N}v_{i}=-N+2,\text{ }u_{i}=xv_{i},\text{ }v_{i}\in\left\{
-1,1\right\}  :i\in\mathbb{Z}\left(  1,N\right)
\end{array}
\right\}  .\vspace*{-0.05in}\label{Conditional Polytopic SS}%
\end{equation}
The nonlinear map $\left(  x,v_{i}\right)  \mapsto u_{i}$ can be represented
by an affine transformation involving auxiliary variables $\underline{z}%
_{i}\in\left[  -1,1\right]  ^{n},$ and $\overline{z}_{i}\in\left[
-1,1\right]  ^{n},$ subject to a set of linear constraints, in the following
way:$\vspace*{-0.15in}$%
\begin{equation}
u_{i}=2\underline{z}_{i}-x-v_{i}\boldsymbol{1}_{n}+\boldsymbol{1}%
_{n},\text{\qquad}\underline{z}_{i}\leq v_{i}\boldsymbol{1}_{n},\qquad
\overline{z}_{i}\leq-v_{i}\boldsymbol{1}_{n},\qquad\underline{z}%
_{i}=x-\overline{z}_{i}-\boldsymbol{1}_{n}\vspace*{-0.2in}\label{ZMat}%
\end{equation}
equivalently:$\vspace*{-0.2in}$%
\[
u_{i}=2\underline{z}_{i}-x-\left(  v_{i}-1\right)  \boldsymbol{1}%
_{n},~~\underline{z}_{i}=x-\overline{z}_{i}-\boldsymbol{1}_{n},~~\underline
{z}_{i}=\left(  v_{i}-1\right)  \boldsymbol{1}_{n}+\underline{w}%
_{i},~~\overline{z}_{i}=\left(  -v_{i}-1\right)  \boldsymbol{1}_{n}%
+\overline{w}_{i},\vspace*{-0.15in}%
\]
where $\underline{w}_{i},\overline{w}_{i}\in\left[  -1,1\right]  ^{n}.$ Since
by assumption each $X_{i}$ is bounded, for all $i\in\mathbb{Z}\left(
1,N\right)  $ and all $j\in\mathbb{Z}\left(  1,N_{i}\right)  ,$ the
quantities$\vspace*{-0.15in}$%
\begin{equation}
\underline{R}_{ij}=\min_{x\in X_{i}}\mathbf{S}_{ij}x-\mathbf{s}_{ij}%
\vspace*{-0.15in}\label{LP}%
\end{equation}
exist, are finite and can be computed by solving $N_{i}\times N$ linear
programs given in (\ref{LP}) ($\mathbf{S}_{ij}$ and $\mathbf{s}_{ij}$ denote
the $j$-th row of $\mathbf{S}_{i}$ and $\mathbf{s}_{i}$ respectively). Let
$\underline{R}_{i}=\underset{j\in\mathbb{Z}\left(  1,N_{i}\right)
}{\operatorname*{diag}}\{\underline{R}_{ij}\}.$ Then $G_{xy}$ as defined in
(\ref{Conditional Polytopic SS}) is equivalent to:$\vspace*{-0.1in}$%
\begin{equation}
G_{xy}\triangleq\left\{
\begin{array}
[c]{r}%
\left(  x,y\right)  ~|~y=\sum_{i=1}^{N}\left(  A_{i}x+A_{i}u_{i}+B_{i}%
+B_{i}v_{i}\right)  ,\text{ }\left(  x,u_{i},v_{i}\right)  \in\mathbf{H}%
\end{array}
\right\}  \label{FMat}%
\end{equation}
where $\mathbf{H}\mathbf{=}\{\left(  x,u_{i},v_{i}\right)  ~|~x,u_{i},v_{i}$
satisfy (\ref{HMat}) for some $\underline{z}_{i},\overline{z}_{i}%
,\underline{w}_{i},\overline{w}_{i}\in\left[  -1,1\right]  ^{n},$ and
$w_{i}\in\left[  -1,1\right]  ^{N_{i}}\}:$%
\begin{equation}
\hspace{-0.06in}\mathbf{H}\mathbf{=}\left\{  \hspace{-0.06in}%
\begin{array}
[c]{rcllrll}%
0\hspace{-0.06in} & \hspace{-0.06in}=\hspace{-0.06in} & \hspace{-0.06in}%
\mathbf{S}_{i}u_{i}+\mathbf{S}_{i}x-\mathbf{s}_{i}v_{i}-\mathbf{s}%
_{i}-\underline{R}_{i}\left(  w_{i}+\mathbf{1}_{N_{i}}\right)  \vspace
*{-0.05in} &  & \underline{z}_{i} & \hspace{-0.06in}=\hspace{-0.06in} &
\hspace{-0.06in}x-\overline{z}_{i}-\boldsymbol{1}_{n},\text{ \ }\underline
{z}_{i},\overline{z}_{i}\in\left[  -1,1\right]  ^{n}\\
u_{i}\hspace{-0.06in} & \hspace{-0.06in}=\hspace{-0.06in} & \hspace
{-0.06in}2\underline{z}_{i}-x-\left(  v_{i}-1\right)  \boldsymbol{1}%
_{n}\vspace*{-0.05in} &  & \overline{z}_{i} & \hspace{-0.06in}=\hspace
{-0.06in} & \hspace{-0.06in}\left(  -v_{i}-1\right)  \boldsymbol{1}%
_{n}+\overline{w}_{i},\\
\underline{z}_{i}\hspace{-0.06in} & \hspace{-0.06in}=\hspace{-0.06in} &
\hspace{-0.06in}\left(  v_{i}-1\right)  \boldsymbol{1}_{n}+\underline{w}_{i} &
& \sum_{i=1}^{N}v_{i} & \hspace{-0.06in}=\hspace{-0.06in} & \hspace
{-0.06in}-N+2,\text{ }v_{i}\in\left\{  -1,1\right\}
\end{array}
\hspace{-0.06in}\right\}  \hspace{-0.06in}\label{HMat}%
\end{equation}
The equality constraints that define the Matrix $H$ are precisely the
equalities in (\ref{HMat}), while the equality constraint in (\ref{FMat})
defines the Matrix $F$. This completes the proof.
\end{proof}

\bigskip

\begin{proof}
[Proof of Proposition \ref{Abstraction}]First, consider the unreachability property.
Assume to the contrary, that $\mathcal{P}$ does not satisfy unreachability
w.r.t. $X_{-}$. Then, there exists a solution $\mathcal{X}\equiv x(.)$ of
$\mathcal{S}(X,f,X_{0},X_{\infty}),$ and a positive integer $t_{-}%
\in\mathbb{Z}_{+}$ such that $x\left(  0\right)  \in X_{0}$, and $x\left(
t_{-}\right)  \in X_{-}$. It follows from $X_{0}\subseteq\overline{X}_{0},$
and $f(x)\subseteq\overline{f}(x)$ that $\mathcal{X}\equiv x(.)$ is also a
solution of $\overline{\mathcal{S}}(\overline{X},\overline{f},\overline{X}%
_{0},\overline{X}_{\infty}).$ Therefore, we have:\vspace{-0.1in}%
\[
x\left(  t_{-}\right)  \in X_{-}\text{ , }X_{-}\subseteq\overline{X}_{-}\text{
}\Rightarrow\text{\ }x\left(  t_{-}\right)  \in\overline{X}_{-}.\vspace{-0.1in}
\]
However, $x\left(  t_{-}\right)  \in\overline{X}_{-}$ contradicts the fact
that $\overline{\mathcal{S}}(\overline{X},\overline{f},\overline{X}%
_{0},\overline{X}_{\infty})$ satisfies safety w.r.t. $\overline{X}_{-}$. Proof
of FTT is similar: let $\mathcal{X}\equiv x(.)$ be any solution of
$\mathcal{S}(X,f,X_{0},X_{\infty})$. Since $\mathcal{X}\equiv x(.)$ is also a
solution of $\overline{\mathcal{S}}(\overline{X},\overline{f},\overline{X}%
_{0},\overline{X}_{\infty}),$ it follows that there exists $t_{T}\in
\mathbb{Z}_{+}$ such that $x\left(  t_{T}\right)  \in\overline{X}_{\infty}.$
Since $x\left(  t_{T}\right)  $ is also an element of $X,$ it follows that
$x\left(  t_{T}\right)  \in\overline{X}_{\infty}\cap X.$ Since by definition
$\overline{X}_{\infty}\cap X\subset X_{\infty}$ holds$,$ we must have
$x\left(  t_{T}\right)  \in X_{\infty}.$
\end{proof}

\bigskip

\begin{proof}
[Proof of Proposition \ref{FTT2}]Note that (\ref{Softa2a1})$-$(\ref{Softa2a3}) imply
that $V$ is negative-definite along the trajectories of $\mathcal{S},$ except
possibly for $V\left(  x\left(  0\right)  \right)  $ which can be zero when
$\eta=0.$ Let $\mathcal{X}$ be any solution of $\mathcal{S}.$ Since $V$ is
uniformly bounded on $X$, we have:$\vspace*{-0.15in}$
\[
-\left\Vert V\right\Vert _{\infty}\leq V\left(  x\left(  t\right)  \right)
<0,\text{ }\forall x\left(  t\right)  \in\mathcal{X},~t>1.\vspace*{-0.15in}%
\]
Now, assume that there exists a sequence $\mathcal{X}\equiv(x(0),x(1),\dots
,x(t),\dots)$ of elements from $X$ satisfying (\ref{Softa1}), but not reaching
a terminal state in finite time. That is, $x\left(  t\right)  \notin
X_{\infty},$ $\forall t\in\mathbb{Z}_{+}.$ Then, it can be verified that if
$t>T_{u},$ where $T_{u}$ is given by (\ref{Bnd on No. Itrn.}), we must have:
$V\left(  x\left(  t\right)  \right)  <-\left\Vert V\right\Vert _{\infty}, $
which contradicts boundedness of $V.$
\end{proof}

\bigskip

\begin{proof}
[Proof of Theorem \ref{BddNess}]Assume that $\mathcal{S}$ has a solution
$\mathcal{X\hspace*{-0.03in}=\hspace*{-0.03in}}\left(  x\left(  0\right)
,...,x\left(  t_{-}\right)  ,...\right)  ,$ where $x\left(  0\right)  \in
X_{0}$ and $x\left(  t_{-}\right)  \in X_{-}.$ Let\vspace*{-0.2in}
\[
\gamma_{h}=\underset{x\in h^{-1}\left(  X_{-}\right)  }{\inf}V\left(
x\right)
\]
First, we claim that $\gamma_{h}\leq\max\left\{  V(x\left(  t_{-}\right)
),V(x\left(  t_{-}-1\right)  )\right\}  .$ If $h=I,$ we have $x\left(
t_{-}\right)  \in h^{-1}\left(  X_{-}\right)  $ and $\gamma_{h}\leq V(x\left(
t_{-}\right)  ).$ If $h=f,$ we have $x\left(  t_{-}-1\right)  \in
h^{-1}\left(  X_{-}\right)  $ and $\gamma_{h}\leq V(x\left(  t_{-}-1\right)
),$ hence the claim. Now, consider the $\theta=1$ case$.$ Since $V$ is
monotonically decreasing along solutions of $\mathcal{S},$ we must
have:$\vspace*{-0.1in}$%
\begin{equation}
\gamma_{h}=\underset{x\in h^{-1}\left(  X_{-}\right)  }{\inf}V\left(
x\right)  \leq\max\left\{  V(x\left(  t_{-}\right)  ),V(x\left(
t_{-}-1\right)  )\right\}  \leq V\left(  x\left(  0\right)  \right)
\leq~\underset{x\in X_{0}}{\sup}V(x)\vspace*{-0.1in}\label{Inf L than Sup}%
\end{equation}
which contradicts (\ref{Inf G than Sup})$.$ Note that if $\mu>0$ and $h=I,$
then (\ref{Inf L than Sup}) holds as a strict inequality and we can replace
(\ref{Inf G than Sup}) with its non-strict version. Next, consider case
$\left(  \text{I}\right)  ,$ for which, $V$ need not be monotonic along the
trajectories. Partition $X_{0}$ into two subsets $\overline{X}_{0}$ and
$\underline{X}_{0}$ such that $X_{0}=\overline{X}_{0}\cup\underline{X}_{0}$
and$\vspace*{-0.1in}$%
\[
V\left(  x\right)  \leq0\text{\quad}\forall x\in\underline{X}_{0},\text{\qquad
and\qquad}V\left(  x\right)  >0\text{\quad}\forall x\in\overline{X}_{0}%
\vspace*{-0.1in}%
\]
Now, assume that $\mathcal{S}$ has a solution $\overline{\mathcal{X}%
}\mathcal{=}\left(  \overline{x}\left(  0\right)  ,...,\overline{x}\left(
t_{-}\right)  ,...\right)  ,$ where $\overline{x}\left(  0\right)
\in\overline{X}_{0}$ and $\overline{x}\left(  t_{-}\right)  \in X_{-}.$ Since
$V\left(  x\left(  0\right)  \right)  >0$\ and $\theta<1,$ we have $V\left(
x\left(  t\right)  \right)  <V\left(  x\left(  0\right)  \right)
,\quad\forall t>0.$ Therefore,$\vspace*{-0.2in}$
\[
\gamma_{h}=\underset{x\in h^{-1}\left(  X_{-}\right)  }{\inf}V\left(
x\right)  \leq\max\left\{  V(x\left(  t_{-}\right)  ),V(x\left(
t_{-}-1\right)  )\right\}  \leq V\left(  \overline{x}\left(  0\right)
\right)  \leq~\underset{x\in X_{0}}{\sup}V(x)\vspace*{-0.1in}%
\]
which contradicts (\ref{Inf G than Sup})$.$ Next, assume that $\mathcal{S}$
has a solution $\underline{\mathcal{X}}\mathcal{=}\left(  \underline{x}\left(
0\right)  ,...,\underline{x}\left(  t_{-}\right)  ,...\right)  ,$ where
$\underline{x}\left(  0\right)  \in\underline{X}_{0}$ and $\underline
{x}\left(  t_{-}\right)  \in X_{-}.$ In this case, regardless of the value of
$\theta,$ we must have $V\left(  \underline{x}\left(  t\right)  \right)
\leq0,$ $\forall t,$ implying that $\gamma_{h}\leq0,$ and hence, contradicting
(\ref{Inf G than Zero})$.$ Note that if $h=I$ and either $\mu>0,$ or
$\theta>0,$ then (\ref{Inf G than Zero}) can be replaced with its non-strict
version. Finally, consider case $\left(  \text{II}\right)  $. Due to
(\ref{Sup L than Zero}), $V$ is strictly monotonically decreasing along the
solutions of $\mathcal{S}.$ The rest of the argument is similar to the
$\theta=1$ case.
\end{proof}

\vspace{0.15in}

\begin{proof}
[Proof of Corollary \ref{Bddness and FTT}]It follows from (\ref{Three3}) and the
definition of $X_{-}$ that:\vspace*{-0.15in}%
\begin{equation}
V\left(  x\right)  \geq\sup\left\{  \left\Vert \alpha^{-1}h\left(  x\right)
\right\Vert _{q}-1\right\}  \geq\sup\left\{  \left\Vert \alpha^{-1}h\left(
x\right)  \right\Vert _{\infty}-1\right\}  >0,\text{\qquad}\forall x\in
X.\vspace*{-0.15in}\label{o69}%
\end{equation}
It then follows from (\ref{o69}) and (\ref{One1}) that:\vspace*{-0.15in}%
\[
\underset{x\in h^{-1}\left(  X_{-}\right)  }{\inf}V\left(  x\right)
>0\geq~\underset{x\in X_{0}}{\sup}V(x)\vspace*{-0.15in}%
\]
Hence, the first statement of the Corollary follows from Theorem
\ref{BddNess}. The upperbound on the number of iterations follows from
Proposition \ref{FTT2} and the fact that $\sup_{x\in X\backslash\left\{
X_{-}\cup X_{\infty}\right\}  }\left\vert V\left(  x\right)  \right\vert
\leq1.$\vspace*{-0.1in}
\end{proof}

\vspace{0.15in}

\begin{proof}
[Proof of Corollary \ref{SafetyGraphCor1}]The unreachability property follows
directly from Theorem \ref{BddNess}. The finite time termination property
holds because it follows from (\ref{arcwiselyap}), (\ref{SGC1}) and
(\ref{MultiplicativeTheta}) along with Proposition \ref{FTT2}, that the
maximum number of iterations around every simple cycle $\mathcal{C}$ is
finite. The upperbound on the number of iterations is the sum of the maximum
number of iterations over every simple cycle.
\end{proof}

\vspace{0.1in}

\begin{proof}
[Proof of Lemma \ref{MIPL_Invariance_Lemma}]Define $x_{e}=\left(  x,w,v,1\right)
^{T},$ where $x\in\left[  -1,1\right]  ^{n},$ $w\in\left[  -1,1\right]
^{n_{w}},$ $v\in\left\{  -1,1\right\}  ^{n_{v}}.$ Recall that $\left(
x,1\right)  ^{T}=L_{2}x_{e},$ and that for all $x_{e}$ satisfying $Hx_{e}=0,$
there holds: $\left(  x_{+},1\right)  =\left(  Fx_{e},1\right)  =L_{1}x_{e}.$
It follows from Proposition \ref{prop:MILMLyap} that (\ref{Softa2}) holds
if:\vspace*{-0.15in}%
\begin{equation}
x_{e}^{T}L_{1}^{T}PL_{1}x_{e}-\theta x_{e}^{T}L_{2}^{T}PL_{2}x_{e}\leq
-\mu,\text{ s.t. }Hx_{e}=0,\text{ }L_{3}x_{e}\in\left[  -1,1\right]
^{n+n_{w}},\text{ }L_{4}x_{e}\in\left\{  -1,1\right\}  ^{n_{v}}.\vspace
*{-0.1in}\label{MILP_1}%
\end{equation}
Recall from the $\mathcal{S}$-Procedure ((\ref{Needs-S-Procedure}) and
(\ref{S-Procedure-sufficient})) that the assertion $\sigma\left(  y\right)
\leq0, $ $\forall y\in\left[  -1,1\right]  ^{n}$ holds if there exist
nonnegative constants $\tau_{i}\geq0,$ $i=1,...,n,$ such that $\sigma\left(
y\right)  \leq\sum\tau_{i}\left(  y_{i}^{2}-1\right)  =y^{T}\tau
y-\operatorname{Trace}\left(  \tau\right)  ,~$where $\tau=\operatorname{diag}%
\left\{  \tau_{i}\right\}  \succeq0.$ Similarly, the assertion $\sigma\left(
y\right)  \leq0,\forall y\in\left\{  -1,1\right\}  ^{n}$ holds if there exist
constants $\rho_{i}$ such that $\sigma\left(  y\right)  \leq\sum\rho
_{i}\left(  y_{i}^{2}-1\right)  =y^{T}\rho y-\operatorname{Trace}\left(
\rho\right)  ,$ where $\rho=\operatorname{diag}\left\{  \rho_{i}\right\}  .$
Applying these relaxations to (\ref{MILP_1}), we obtain sufficient conditions
for (\ref{MILP_1}) to hold:\vspace*{-0.15in}%
\[
x_{e}^{T}L_{1}^{T}PL_{1}x_{e}-\theta x_{e}^{T}L_{2}^{T}PL_{2}x_{e}\leq
x_{e}^{T}\left(  YH+H^{T}Y^{T}\right)  x_{e}+x_{e}^{T}L_{3}^{T}D_{xw}%
L_{3}x_{e}+x_{e}^{T}L_{4}^{T}D_{v}L_{4}x_{e}-\mu-\operatorname{Trace}%
(D_{xw}+D_{v})\vspace*{-0.1in}%
\]
Together with $0\preceq D_{xw},$ the above condition is equivalent to the LMIs
in Lemma \ref{MILP_Invariance_LMI}.\vspace*{-0.15in}
\end{proof}

\vspace{0.15in}

\begin{proof}
[Proof of Theorem \ref{Thm:GraphModelLMI}]Since $\theta_{j\emptyset}^{k}=0$ for all
$\left(  \emptyset,j,k\right)  \in\mathcal{E},$ we have $\sigma_{j}\left(
x\right)  \leq0,$ for all $j$ in the outgoing set of node $\emptyset.$ This
replaces condition (\ref{SGC1}) in Corollary \ref{SafetyGraphCor1}. The first
part of the theorem thus follows from Corollary \ref{SafetyGraphCor1} and an
application of the $\mathcal{S}$-Procedure relaxation technique in a similar
fashion to the proof of Lemma \ref{MIPL_Invariance_Lemma}. The second part of
the theorem also follows from Corollary \ref{SafetyGraphCor1} by noting that
due to (\ref{LGMLMI2.}), we have $\left\Vert \sigma\left(  \mathcal{C}\right)
\right\Vert _{\infty}\leq1$ for every simple cycle $\mathcal{C}\in G.$
\end{proof}

\newpage


\begin{thebibliography}{99}                                                                                               %
\bibitem {Adge11}A. Adge. Optimisation et jeux appliqu\'es \`a l'analyse
statique de programmes par interpr\'etation abstraite. Ph.D. Thesis, Ecole
Polytechnique, France, 2011.

\bibitem {Alur1995}R. Alur, C. Courcoubetis, N. Halbwachs, T. A. Henzinger,
P.-H. Ho X. Nicollin, A. Oliviero, J. Sifakis, and S. Yovine. The algorithmic
analysis of hybrid systems, \textit{Theoretical Computer Science}, vol. 138,
pp. 3--34, 1995.

\bibitem {Alur2002}R. Alur, T. Dang, and F. Ivancic. Reachability analysis of
hybrid systems via predicate abstraction. In \textit{Hybrid Systems:
Computation and Control}. LNCS v. 2289, pp. 35--48. Springer Verlag, 2002.

\bibitem {Baier}C. Baier, B. Haverkort, H. Hermanns, and J.-P. Katoen.
Model-checking algorithms for continuous-time Markov chains. \textit{IEEE
Trans. Soft. Eng.}, 29(6):524--541, 2003.

\bibitem {Bemporad Morari}A. Bemporad, and M. Morari. Control of systems
integrating logic, dynamics, and constraints. \textit{Automatica},
35(3):407--427, 1999.

\bibitem {Bemporad2000}A. Bemporad, F. D. Torrisi, and M. Morari.
Optimization-based verification and stability characterization of piecewise
affine and hybrid systems. LNCS v. 1790, pp. 45--58. Springer-Verlag, 2000.

\bibitem {Bertsimes1997}D. Bertsimas, and J. Tsitsikilis. \textit{Introduction
to Linear Optimization.} Athena Scientific, 1997.

\bibitem {ASTREE}B. Blanchet, P. Cousot, R. Cousot, J. Feret, L. Mauborgne, A.
Min\'{e}, D. Monniaux, and X. Rival. Design and implementation of a
special-purpose static program analyzer for safety-critical real-time embedded
software. LNCS v. 2566, pp. 85--108, Springer-Verlag, 2002.

\bibitem {Boshnack}J. Bochnak, M. Coste, and M. F. Roy. \textit{Real Algebraic
Geometry}. Springer, 1998.

\bibitem {Boyd1994}S. Boyd, L.E. Ghaoui, E. Feron, and V. Balakrishnan.
\textit{Linear Matrix Inequalities in Systems and Control Theory,} SIAM, 1994.

\bibitem {Branicky1998}M. S. Branicky. Multiple Lyapunov functions and other
analysis tools for switched and hybrid systems. \textit{IEEE Trans. Aut.
Ctrl.}, 43(4):475--482, 1998.

\bibitem {Branicky}M. S. Branicky, V. S. Borkar, and S. K. Mitter. A unified
framework for hybrid control: model and optimal control theory. \textit{IEEE
Trans. Aut. Ctrl.}, 43(1):31--45, 1998.

\bibitem {Brocket}R. W. Brockett. Hybrid models for motion control systems.
\textit{Essays in Control: Perspectives in the Theory and its Applications},
Birkhauser, 1994.

\bibitem {Clarkware}E. M. Clarke, O. Grumberg, H. Hiraishi, S. Jha, D.E. Long,
K.L. McMillan, and L.A. Ness. Verification of the Future-bus+cache coherence
protocol. In \textit{Formal Methods in System Design}, 6(2):217--232, 1995.

\bibitem {ClarkBook}E. M. Clarke, O. Grumberg, and D. A. Peled. \textit{Model
Checking}. MIT Press, 1999.

\bibitem {Cousot1977}P. Cousot, and R. Cousot. Abstract interpretation: a
unified lattice model for static analysis of programs by construction or
approximation of fixpoints. In \textit{4th ACM SIGPLAN-SIGACT Symposium on
Principles of Programming Languages}, pages 238--252, 1977.

\bibitem {Cousot2001}P. Cousot. Abstract interpretation based formal methods
and future challenges. LNCS, v. 2000:138--143, Springer, 2001.

\bibitem {FeronGiveAway}P. Cousot. Proving program invariance and termination
by parametric abstraction, Lagrangian relaxation and semidefinite programming.
In Verification, Model Checking, and Abstract Interpretation 6th International
Conference, VMCAI 2005. Proceedings in Lecture Notes in Computer Science v.
3385, 2005.

\bibitem {Dams1996}D. Dams. Abstract Interpretation and Partition Refinement
for Model Checking. Ph.D. Thesis, Eindhoven University of Technology, 1996.

\bibitem {FeronWorkshop}E. Feron. Abstraction mechanisms accross the
board:\ A\ short introduction. In \textit{A Workshop on Robustness,
Abstractions and Computations}, Philadelphia, March 18, 2004. Presentations
available at: http://web.mit.edu/feron/Public/wkshop\_pres

\bibitem{FeronRooz07} E. Feron, and M. Roozbehani. Certifying controls and
systems software. In \textit{Proc. of the AIAA Guidance, Navigation and
Control Conference}, Hilton Head, South Carolina, August 2007.

\bibitem {GahinetLMILAB}P. Gahinet, A. Nemirovskii, and A. Laub. LMILAB: A
Package for Manipulating and Solving LMIs. South Natick, MA: The Mathworks, 1994.

\bibitem {ILOG}ILOG Inc. ILOG CPLEX 9.0 User's guide. Mountain View, CA, 2003.

\bibitem {GirardPappsVerification}A. Girard, and G. J. Pappas. Verification
using simulation. LNCS, v. 3927, pp. 272--286 , Springer, 2006.

\bibitem {Goemans1995}M. X. Goemans, and D. P. Williamson, Improved
approximation algorithms for maximum cut and satisfiability problems using
semidefinite programming. \textit{Journal of the Association for Computing
Machinery (ACM)}, 42(6):1115--1145, 1995.

\bibitem {Gusev06}S. V. Gusev, and A. L. Likhtarnikov.
Kalman--Popov--Yakubovich Lemma and the $\mathcal{S}$-procedure: A historical
essay. \textit{Journal of Automation and Remote Control}, 67(11):1768--1810, 2006.

\bibitem {Heck2003}B. S. Heck, L. M. Wills, and G. J. Vachtsevanos. Software
technology for implementing reusable, distributed control systems.
\textit{IEEE Control Systems Magazine}, 23(1):21--35, 2003.

\bibitem {Heckt1977}M. S. Hecht. \textit{Flow Analysis of Computer Programs}.
Elsevier Science, 1977.

\bibitem {Johansson1998}M. Johansson, and A. Rantzer. Computation of piecewise
quadratic Lyapunov functions for hybrid systems. \textit{IEEE Tran. Aut.
Ctrl}. 43(4):555--559, 1998.

\bibitem {Khalil}H. K. Khalil. \textit{Nonlinear Systems}. Prentice Hall, 2002.

\bibitem {Kopetz2001}H. Kopetz. \textit{Real-Time Systems Design Principles
for Distributed Embedded Applications}. Kluwer, 2001.

\bibitem {Krzhanski1996}A. B. Kurzhanski, and I. Valyi. \textit{Ellipsoidal
Calculus for Estimation and Control}. Birkhauser, 1996.

\bibitem {Laferriere1999}G. Lafferriere, G. J. Pappas, and S. Sastry. Hybrid
systems with finite bisimulations. LNCS, v. 1567, pp. 186--203, Springer, 1999.

\bibitem {Lafferriere2001}G. Lafferriere, G. J. Pappas, and S. Yovine.
Symbolic reachability computations for families of linear vector fields.
\textit{Journal of Symbolic Computation}, 32(3):231--253, 2001.

\bibitem {Lofberg2004}J. L\"{o}fberg. YALMIP : A Toolbox for Modeling and
Optimization in MATLAB. In Proc. of the CACSD Conference, 2004. URL:
http://control.ee.ethz.ch/\symbol{126}joloef/yalmip.php

\bibitem {Lovasz1991}L. Lovasz, and A. Schrijver. Cones of matrices and
set-functions and 0-1 optimization. \textit{SIAM Journal on Optimization},
1(2):166--190, 1991.

\bibitem {Manna1995}Z. Manna, and A. Pnueli. \textit{Temporal Verification of
Reactive Systems: Safety}. Springer-Verlag, 1995.

\bibitem {Marrero}W. Marrero, E. Clarke, and S. Jha. Model checking for
security protocols. \textit{In Proc. DIMACS Workshop on Design and Formal
Verification of Security Protocols}, 1997.

\bibitem {Meg01}A. Megretski. Relaxations of quadratic programs in operator
theory and system analysis. Operator Theory: Advances and Applications, v.
129, pp. 365--392. Birkhauser -Verlag, 2001.

\bibitem {Megretski2003}A. Megretski. Positivity of trigonometric polynomials.
\textit{In Proc. 42nd IEEE Conference on Decision and Control}, pages
3814--3817, 2003.

\bibitem {MitraThesis}S. Mitra. \textit{A Verification Framework for Hybrid
Systems}. Ph.D. Thesis. Massachusetts Institute of Technology, September 2007.

\bibitem {MineThesis}A. Mine. \textit{Weakly Relational Numerical Abstract
Domains}. Ph.D. Thesis.
\'{}%
Ecole Normale Sup%
\'{}%
erieure, December 2004.

\bibitem {MinePaper}A. Mine. Relational abstract domains for detection of
floating point run-time errors, European symposium on programming (ESOP), LNCS
2986, pages 3-17, 2004.

\bibitem {Murthy2001}C. S. R. Murthy, and G. Manimaran. \textit{Resource
Management in Real-Time Systems and Networks}. MIT Press, 2001.

\bibitem {Naumovich}G. Naumovich, L. A. Clarke, and L. J. Osterweil.
Verification of communication protocols using data flow analysis. In
\textit{Proc. 4-th ACM SIGSOFT Symposium on the Foundation of Software
Engineering}, pages 93--105, 1996.

\bibitem {nem}G. L. Nemhauser and L. A. Wolsey. Integer and Combinatorial
Optimization. Wiley-Interscience, 1988.

\bibitem {Nesterov 2000}Y.E. Nesterov, H. Wolkowicz, and Y. Ye. Semidefinite
programming relaxations of nonconvex quadratic optimization. In
\textit{Handbook of Semidefinite Programming: Theory, Algorithms, and
Applications}. Dordrecht, Kluwer Academic Press, pp. 361--419, 2000.

\bibitem {Nielson}F. Nielson, H. Nielson, and C. Hank. \textit{Principles of
Program Analysis}. Springer, 2004.

\bibitem {Parrilo2001}P. A. Parrilo. Minimizing polynomial functions. In
\textit{Algorithmic and Quantitative Real Algebraic Geometry}. DIMACS Series
in Discrete Mathematics and Theoretical Computer Science, v. 60, pp. 83-100, 2003.

\bibitem {ParriloThesis}P. A. Parrilo. \textit{Structured Semidefinite
Programs and Semialgebraic Geometry Methods in Robustness and Optimization}.
Ph.D. Thesis, California Institute of Technology, 2000.

\bibitem {Pel01}D. A. Peled. \textit{Software Reliability Methods}.
Springer-Verlag, 2001.

\bibitem {Pierece}B. C. Pierce. \textit{Types and Programming Languages}. MIT
Press, 2002.

\bibitem {Prajna2005}S. Prajna. \textit{Optimization-Based Methods for
Nonlinear and Hybrid Systems Verification}. Ph.D. Thesis, California Institute
of Technology, 2005.

\bibitem {Prajna}S. Prajna, A. Papachristodoulou, P. Seiler, and P. A.
Parrilo, SOSTOOLS: Sum of squares optimization toolbox for MATLAB, 2004.
http://www.mit.edu/\symbol{126}parrilo/sostools.

\bibitem {Prajna2007}S. Prajna, and A. Rantzer, Convex programs for temporal
verification of nonlinear dynamical systems, \textit{SIAM Journal on Control
and Opt.}, 46(3):999--1021, 2007.

\bibitem {RMF-ACC05}M. Roozbehani, A. Megretski, E. Feron. Convex optimization
proves software correctness. In \textit{Proc. American Control Conference},
pages 1395--1400, 2005.

\bibitem {RFM-HSCC05}M. Roozbehani, E. Feron, and A. Megretski. Modeling,
optimization and computation for software verification. In \textit{Hybrid
Systems: Computation and Control}. Lecture Notes in Computer Science, v. 3414,
pp.\ 606--622, Springer-Verlag 2005.

\bibitem {RoozMegFer06}M. Roozbehani, A. Megretski, E. Feron. Safety
verification of iterative algorithms over polynomial vector fields. In
\textit{Proc. 45th IEEE Conference on Decision and Control}, pages 6061--6067, 2006.

\bibitem {RoozbehaniHSCC}M. Roozbehani, A. Megretski, E. Frazzoli, and E.
Feron. Distributed Lyapunov functions in analysis of graph models of software.
In Hybrid Systems: Computation and Control, Springer LNCS 4981, pp 443-456, 2008.

\bibitem {Sherali1994}H. D. Sherali, and W. P. Adams. A hierarchy of
relaxations and convex hull characterizations for mixed-integer zero-one
programming problems. \textit{Discrete Applied Mathematics}, 52(1):83--106, 1994.

\bibitem {Strum1999}J. F. Sturm. Using SeDuMi 1.02, a MATLAB toolbox for
optimization over symmetric cones. \textit{Optimization Methods and Software},
11--12:625--653, 1999. URL: http://sedumi.mcmaster.ca

\bibitem {Tiwari2002}A. Tiwari, and G. Khanna. Series of abstractions for
hybrid automata. In \textit{Hybrid Systems: Computation and Control}, LNCS, v.
2289, pp. 465--478. Springer, 2002.

\bibitem {VB}L. Vandenberghe, and S. Boyd. Semidefinite programming. SIAM
Review, 38(1):49--95, 1996.

\bibitem {Yakubovic}V. A. Yakubovic. S-procedure in nonlinear control theory.
Vestnik Leningrad University, 4(1):73--93, 1977.

\bibitem {MVichSite}http://web.mit.edu/mardavij/www/Software

\bibitem {ASTREE-Web} http://www.astree.ens.fr/

\end{thebibliography}
\end{document}